\DeclareMathAlphabet{\pazocal}{OMS}{zplm}{m}{n}
\def \X  {\pmb{\pazocal{X}}}
\newtheoremstyle{remarkstyle}  
  {5pt}                        
  {5pt}                        
  {}                           
  {}                           
  {\bfseries}                  
  {.}                          
  { }                          
  {}                           
\newtheorem{theorem}{Theorem}[section]
\newtheorem{lemma}[theorem]{Lemma}
\theoremstyle{remarkstyle}
\crefname{figure}{Fig.}{Figs.}
\Crefname{figure}{Figure}{Figures}
\crefname{algorithm}{Algorithm}{Algorithms}
\crefname{equation}{Eq.}{Eqs.}
\title{Neural Chaos:\\ A Spectral Stochastic Neural Operator}
\author{Bahador Bahmani\thanks{Corresponding authors.}\\
	Hopkins Extreme Materials Institute\\
        Dept. of Civil and Systems Engineering\\
	Johns Hopkins University\\
	Baltimore, USA\\
	\texttt{bbahman2@jh.edu} \\
	\And
        Ioannis G. Kevrekidis\\
        Dept. of Chemical and Biomolecular Engineering\\
        Dept. of Applied Mathematics and Statistics\\
	Johns Hopkins University\\
	Baltimore, USA\\
	\texttt{yannisk@jhu.edu} \\
	\And
        Michael D. Shields\textsuperscript{*}\\
        Dept. of Civil and Systems Engineering\\
	Johns Hopkins University\\
	Baltimore, USA\\
	\texttt{michael.shields@jhu.edu} \\
}
\begin{document}
\maketitle

\begin{abstract}
    Building surrogate models with uncertainty quantification capabilities is essential for many engineering applications where randomness—such as variability in material properties, boundary conditions, and initial conditions—is unavoidable. Polynomial Chaos Expansion (PCE) is widely recognized as a to-go method for constructing stochastic solutions in both intrusive and non-intrusive ways. Its application becomes challenging, however, with complex or high-dimensional processes, as achieving accuracy requires higher-order polynomials, which can increase computational demands and or the risk of overfitting. Furthermore, PCE requires specialized treatments to manage random variables that are not independent, and these treatments may be problem-dependent or may fail with increasing complexity. In this work, we aim to adopt the same formalism as the spectral expansion used in PCE; however, we replace the classical polynomial basis functions with neural network (NN) basis functions to leverage their expressivity. To achieve this, we propose an algorithm that identifies NN-parameterized basis functions in a purely data-driven manner, without any prior assumptions about the joint distribution of the random variables involved, whether independent or dependent, or about their marginal distributions. The proposed algorithm identifies each NN-parameterized basis function sequentially, ensuring they are orthogonal with respect to the data distribution. The basis functions are constructed directly on the joint stochastic variables without requiring a tensor product structure or assuming independence of the random variables. This approach may offer greater flexibility for complex stochastic models, while simplifying implementation compared to the tensor product structures typically used in PCE to handle random vectors. This is particularly advantageous given the current state of open-source packages, where building and training neural networks can be done with just a few lines of code and extensive community support. We demonstrate the effectiveness of the proposed scheme through several numerical examples of varying complexity and provide comparisons with classical PCE.
\end{abstract}

\keywords{Stochastic Process \and Spectral Representation \and Polynomial Chaos Expansion \and Scientific Machine Learning \and Machine Learning
}

\section{Introduction}
\label{sec:intro}
Uncertainty quantification (UQ) is a cornerstone of scientific computing, encompassing the identification, propagation, and management of uncertainties throughout the computational modeling process. These uncertainties can arise from various sources, including observational data, modeling assumptions, numerical approximations, and stochastic variability in system inputs. Propagating uncertainty from the stochasticity in input information to the final model predictions is a critical subset of UQ, with significant implications in engineering disciplines. For instance, UQ enables the estimation of confidence intervals for predictions, which is crucial for informed decision-making under uncertainty. It also guides engineers and modelers in designing new experiments or refining models to reduce uncertainties, thereby improving reliability and robustness. These capabilities are vital in high-consequence engineering applications, such as aerospace, nuclear energy, and structural safety, where even small errors in prediction can lead to catastrophic outcomes.

UQ methods can be broadly categorized into two approaches of \textit{intrusive} and \textit{non-intrusive}. Intrusive methods require direct modifications to the continuous form of governing equations or their discrete approximations to incorporate uncertainty directly into the computational model. On the other hand, non-intrusive methods do not modify the computational model itself. Instead, they treat the model as a black-box, using techniques like sampling to explore the input space and analyze the resulting outputs. 
Since intrusive methods require direct access to the governing equations and involve modifying the computational model accordingly, they present significant challenges for problems where the governing equations are unavailable or the computational model is highly complex (e.g., nonlinear finite element methods). 

As a result, there is considerable interest in non-intrusive methods, which only require sampling the real physical system, computational model, or legacy codes without making any modifications to them. This flexibility makes non-intrusive methods particularly appealing for practical applications in a \text{plug-and-play} manner.
Non-intrusive methods may rely on sampling techniques to conduct experiments on the physical system and directly quantify the output prediction uncertainty; or alternatively they may use the samples to build a stochastic surrogate model of the underlying system. The latter approach is often more appealing, as it not only quantifies uncertainty but also provides a differentiable replica of the physical system. This surrogate model can then be leveraged for sensitivity analysis, reliability assessments, and design optimization.

Among the various hypothesis classes of functions used to build stochastic surrogate models, Polynomial Chaos Expansions (PCE) have gained significant attention in engineering applications \cite{hosder2006non,lüthen2021sparse,novak2024physics,sharma2024physics,giovanis2024polynomial}. This is due to their robustness in uncertainty quantification and strong mathematical rigor due to \textbf{spectral expansion} formalism \cite{ghanem2003stochastic,mercer1909xvi}, offering a systematic framework for representing stochastic processes and accurately propagating uncertainties in complex systems. However, certain aspects of their mathematical construction can impose limitations. First, PCEs heavily rely on a tensor product structure to simplify statistical moment calculations over a random vector by assuming its marginals are statistically independent. Consequently, when modeling (linearly or nonlinearly) correlated random variables, special treatments, such as transformations or dependency modeling through copulas, are required to account for these correlations effectively. These treatments are problem-specific and may not be effective in general setups \citep{sklar1959fonctions,joe2014dependence,emile1960distributions}. Secondly, PCE requires knowledge of the marginal distributions of the random variables to construct appropriate mutually orthogonal basis functions. This reliance on predefined distributions can limit the PCE applicability when the true distributions are unknown or when the random variables exhibit complex or non-standard distributions. Notably, generalized Polynomial Chaos Expansions (gPCE) \cite{xiu2002wiener} and arbitrary PCE (aPCE) \cite{oladyshkin2012data} can handle arbitrary marginal distributions, provided the random variables are independent. However, this raises concerns regarding the first limitation mentioned: the need to assume independence of random variables to approximate multivariate basis functions as tensor products of univariate ones, where both the independence assumption and the tensor product structure may limit modeling flexibility.

This work introduces Neural Chaos to address these limitations, allowing the retention of all the desirable properties of the spectral expansion formalism while constructing multivariate basis functions in a \textbf{data-driven} manner. These basis functions are designed to operate on the joint distribution of random variables without requiring tensor product decomposition, prior knowledge of marginal distributions, or the independence assumption of the input random variables. Moreover, we establish the connection between the classical spectral expansion, in particular PCEs, for describing stochastic processes with modern operator learning frameworks, in particular Deep Operator Network (DeepONet) \citep{lu2021learning,chen1995universal}.

The remainder of this paper is organized as follows. First, we present the spectral stochastic formulation, one of the main building blocks of Neural Chaos, and discuss how it connects to operator learning methods. In \cref{sec:form}, we provide the details of the Neural Chaos formulation and introduce two algorithms for learning the NN-parametrized basis functions utilized in Neural Chaos. In \cref{sec:exam}, we validate and demonstrate the proposed algorithms using five numerical examples. Finally, we conclude in \cref{sec:conclusion} by summarizing the major findings of this work.


\subsection{Spectral Stochastic Methods}
\label{sec:OL}


Consider the general stochastic partial differential equation given by
\begin{equation}
    \begin{aligned}
        & \mathcal{G}(\bm{x},t,\X(\omega); u(\bm{x},t,\X(\omega))) = f(\bm{x},t,\X(\omega)), & \forall \bm{x}\in \mathcal{D}_x, t \in \mathcal{T}, \omega \in \Omega
        \label{eqn:PDE_UQ}
    \end{aligned}
\end{equation}
where $\mathcal{T}\subset \mathbb{R}$, $\mathcal{D}_x\subset \mathbb{R}^3$, 
$\mathcal{G}$ is a differential operator, 
$u(\cdot)$ is the response/solution of the system, $f$ is an external force/source term, and $\X(\omega)\in \mathbb{R}^d$ is a $d$-dimensional random vector having sample space $\Omega$.

The classical spectral stochastic approach to solve this system, first introduced by Ghanem and Spanos~\cite{ghanem2003stochastic}, expands the solution $u(\bm{x},t,\X(\omega))$ as a stochastic process using spectral methods according to the Wiener-Askey polynomial chaos expansion (PCE)~\cite{xiu2002wiener} as
\begin{equation}
  u(\boldsymbol{x},t, \boldsymbol{\xi})
  =
  \sum_{p=0}^{\infty} \Phi_p(\boldsymbol{x},t) \Psi_p(\boldsymbol{\xi})
  \approx
  u^{(P)}(\boldsymbol{x},t, \boldsymbol{\xi})
  \triangleq
  \sum_{p=0}^P \Phi_p(\boldsymbol{x},t) \Psi_p(\boldsymbol{\xi}),
  \label{eq:spec-expan}
\end{equation}
where the random variables $\X(\omega)$ are transformed to standardized space according to $\boldsymbol{\xi}=T(\X)$ having probability density $p_{\boldsymbol{\xi}}(\boldsymbol{\xi})$, $\Psi_p(\boldsymbol{\xi})$ are orthogonal polynomials defined according to the Askey scheme~\cite{askey1985some}, and $\Phi_p(\boldsymbol{x},t)$ are deterministic spatio-temporal expansion coefficients that must be determined. Note the random variables, $\boldsymbol{\xi}$, are typically considered to be \textit{independent} such that the multi-variate orthogonal polynomials $\Psi_p(\boldsymbol{\xi})$ are expressed as a tensor product of \textit{univariate} orthogonal polynomials $\Psi_p(\boldsymbol{\xi})=\prod_{m=1}^d \psi_{pm}(\xi_m)$ and the joint distribution can be expressed as the product of the marginal distributions $p_{\boldsymbol{\xi}}(\boldsymbol{\xi}) = \prod_{m=1}^d p_{\xi_m}(\xi_m)$. Note also that the expansion in Eq.~\eqref{eq:spec-expan} is truncated to have a total number of terms given by
\begin{equation}
    P = \begin{pmatrix}
    d + n\\
    n
    \end{pmatrix}
    = \dfrac{(d+n)!}{d!n!},
    \label{eqn:num_terms}
\end{equation}
where $d$ is the dimension of the random variable $\boldsymbol{\xi}$ and $n$ is the highest order of the polynomials $\{\psi_p\}$. 

The expansion coefficients $\Phi_p(\boldsymbol{x},t)$ can be determined by substituting the expansion in \cref{eq:spec-expan} into the differential equation in \cref{eqn:PDE_UQ} and performing a Galerkin projection onto the orthogonal polynomial basis functions $\{g_p\}$ as
\begin{equation}
    \left\langle \mathcal{G}\left(\bm{x},t,\X(\omega); \sum_{p=0}^P \Phi_p(\boldsymbol{x},t) \Psi_p(\boldsymbol{\xi})\right), \Psi_k \right\rangle = \langle f, \Psi_k \rangle, \quad k = 0,1,\dots, P.
    \label{eqn:galerkin}
\end{equation}
This defines a coupled set of $(P+1)$ equations that can be solved for $\Phi_p(\boldsymbol{x},t)$ using classical numerical solvers, e.g., finite elements, by discretizing over the space $\bm{x}$ and time $t$. This class of spectral stochastic methods has been explored in great detail in the literature~\cite{ghanem2003stochastic,babuska2004galerkin,xiu2009fast, xiu2010numerical} and has become a benchmark in solving stochastic systems~\cite{smith2024uncertainty}. 

Alternatively, when $u$ is a low-dimensional or scalar-valued quantity of interest (QoI) -- e.g., when it is a scalar performance metric/value extracted from the full spatio-temporal solution $u(\bm{x},t)$ -- and a set of $N$ sample QoIs are available the coefficients $\Phi_p$ can be determined through regression. That is, given a set of $N$ samples of $\boldsymbol{\xi}_i, i=1,\dots, N$ and corresponding QoIs $\boldsymbol{u}_i$, the coefficients $\Phi_p$ can be determined by solving the following ordinary least squares problem~\cite{berveiller2006stochastic} 
\begin{equation}
    \begin{aligned}
        & \min_{\boldsymbol{\Phi}} \sum_{i=1}^{N} \left[\boldsymbol{u}_i - \boldsymbol{u}^{(P)}(\boldsymbol{\xi}_i) \right]^2=  \min_{\boldsymbol{F}} \lVert \boldsymbol{U}  - \boldsymbol{\Phi}^\intercal \boldsymbol{\Psi} \rVert^2,
    \end{aligned}
    \label{Eq. PC2 determ definition}
\end{equation}
where $\boldsymbol{\Phi} = \{\boldsymbol{\Phi}_p\}$, $\boldsymbol{\Psi} = \{\Psi_{pi}\}$, $\boldsymbol{U}=\{\boldsymbol{u}_i\}$, and $\Psi_{pi}$ is the $p$-th basis function evaluated at the $i$-th sample.

This regression-based approach has become increasingly popular in recent years for solving engineering problems in uncertainty quantification. This is because the regression approach is non-intrusive. That it, it does not require the development of custom numerical solvers and allows for surrogate model development in the form of \cref{eq:spec-expan} from a finite number $N$ of samples obtained by solving \cref{eqn:PDE_UQ} deterministically at fixed values of $\boldsymbol{\xi}_i, i=1,\dots,N$, often using commercially available numerical solvers. This has led to the rapid growth of advanced regression method using e.g., sparse regression and hyperbolic truncation schemes~\cite{blatman2011adaptive}, as well as the recognition that PCE can be posed in a purely data-driven sense as a machine learning regression problem~\cite{torre2019data}. This is the view that we take in this work. 

\subsection{The Importance of Orthogonality}
\label{sec:ortho}

The essential element of spectral stochastic methods is that the basis functions $\Psi_p(\boldsymbol{\xi})$ must be orthogonal with respect to the probability distribution $p_{\boldsymbol{\xi}}(\boldsymbol{\xi})$. That is,
\begin{equation}
    \langle \Psi_p, \Psi_k \rangle \triangleq \int_\Omega \Psi_p(\boldsymbol{\xi}) \Psi_k(\boldsymbol{\xi}) p_{\boldsymbol{\xi}}(\boldsymbol{\xi}) d\boldsymbol{\xi} = \| \Psi_p\|^2 \delta_{pk}.
    \label{eqn:orthogonality}
\end{equation}
Employing orthonormal basis functions, we have $\| \Psi_p \|^2 = 1$, where $\|\cdot \|$ denotes the norm associated with the specified inner product. This enables the Galerkin projection in \cref{eqn:galerkin}, ensuring that the error is orthogonal to the function space spanned by $\Psi_p(\boldsymbol{\xi})$. Moreover, this orthogonality yields convenient properties of the PCE in Eq.~\eqref{eq:spec-expan} for uncertainty quantification. Specifically, moments of the solution $u(\boldsymbol{x},t,\boldsymbol{\xi})$ can be estimated directly from the coefficients $\Phi_p(\boldsymbol{x}, t)$ where the mean function is given by $\Phi_0(\boldsymbol{x},t)$ and the second moment is given by
\begin{align}
    \label{Eq:SecondRawMoment}
    \mathbb{E}_{\boldsymbol{\xi}}[{{u^2}(\boldsymbol{x},t, \boldsymbol{\xi})}]
    &=
    \int
    {\left[ {\sum\limits_{p=0}^P
        {{\Phi_p(\bm{x},t)}{\Psi_{p}}\left( \boldsymbol{\xi}  \right)} } \right]} ^2
        p_{\boldsymbol{\xi}}(\boldsymbol{\xi})   \;
        \mathrm{d}  \boldsymbol{\xi} =
    \sum\limits_{p=0}^P
    \sum\limits_{k=0}^P
    \Phi_p(\bm{x},t)
    \Phi_k(\bm{x},t)
    \int
    {\Psi_{p}}\left( \boldsymbol{\xi}  \right)
    {\Psi_{k}}\left( \boldsymbol{\xi}  \right)
    p_{\boldsymbol{\xi}}(\boldsymbol{\xi})   \;
        \mathrm{d}  \boldsymbol{\xi}
    \\   \nonumber
    & =
    \sum\limits_{p=0}^P {\Phi_p^2(\bm{x},t)} {\int {\Psi_{p}^2}\left( \boldsymbol{\xi}  \right) }p_{\boldsymbol{\xi}}(\boldsymbol{\xi})
     \;
    \mathrm{d}  \boldsymbol{\xi}
    = \sum\limits_{p=0}^P {\Phi_p^2(\bm{x},t)} \|g_p\|^2
    = \sum\limits_{p=0}^P {\Phi_p^2(\bm{x},t)}.
\end{align}
Additionally, higher-order moments~\cite{novak2022distribution}, fractional moments~\cite{novak2024fractional}, and sensitivity indices~\cite{sudret2008global,novak2022distribution} can be derived directly from the coefficients. This makes the PCE very powerful for uncertainty quantification.

\subsection{Challenges of Spectral Stochastic Methods}
In spectral stochastic methods, orthogonality is guaranteed by selecting the basis functions $\Psi_p(\boldsymbol{\xi})$ \textit{according to the Askey scheme of orthogonal polynomials}. This is the primary strength of spectral stochastic methods, but also leads to some drawbacks and associated challenges.  Polynomial functions, while inherently smooth and differentiable, are not optimally expressive and may fail to produce representations that generalize well. The consequence of this is that PCE methods may require high-order polynomial terms to fit non-linear functions, but these high-order polynomials lead to instabilities (e.g., Runge's phenomenon, bad conditioned design matrix, etc.) that result in over-fitting \cite{kontolati2023influence}. Hence, there is a trade-off between generalization (requiring low-order polynomials) and expressivity (requiring high-order polynomials).  

%
A related issue is the polynomial growth of the number of terms in a PC expansion; $P$ in \cref{eqn:num_terms} scales as $O(n^d)$ for fixed dimensions $d$ and as $O(d^n)$ for a fixed polynomial order $n$.
This results in an explosion of the number of coefficients that need to be solved for. For stochastic Galerkin methods, the consequence is a very large system of equations that must be solved using numerical methods -- effectively an explosion in the number of degrees of freedom in the numerical model. As a result, spectral stochastic methods typical employ only low-order polynomials and struggle with strongly non-linear problems. For regression methods, again the number of coefficients grows extremely large, which either necessitates very large training data sets (that come at high computational expense) or requires sophisticated sparse regression such as methods like Least Angle Regression~\cite{blatman2011adaptive}, Bayesian compressive sensing~\cite{hampton2015compressive, tsilifis2019compressive}, partial least squares~\cite{papaioannou2019pls}, and many other methods have become standard practice. A comprehensive review can be found in~\cite{lüthen2021sparse}. Even with these methods, regression-based PCE may require the determination of hundreds to thousands of coefficients while balancing the trade-off between generalization and over-fitting~\cite{kontolati2023influence}. 

The PCE is also generally constrained by the class of polynomials that are available for the analytical construction of the polynomial chaos expansion in Eq.~\eqref{eq:spec-expan}. These classes constrain the types on uncertainties that can be included to those with common and well-known distributions -- e.g. Gaussian, uniform -- and independent marginal distributions resulting in tensor product basis construction. For distributions that do not follow the Askey scheme or involve dependent variables, analytical spectral expansions are scarce and typically valid only under special conditions~\cite{rahman2018polynomial}. Recent advances do allow for the construction of polynomial chaos expansions for random variables with aribitrary distributions, termed arbitrary PCE~\cite{soize2004physical,oladyshkin2012data}. 
These approaches build orthogonal polynomial basis functions with coefficients that can be solved directly from statistical moments estimated from the given dataset. These methods have been shown to perform well in a data-driven setting where the form of the distribution $p_{\boldsymbol{\xi}}(\boldsymbol{\xi})$ is not known analytically~\cite{soize2004physical,oladyshkin2012data}. 

A final, and important drawback of regression-based PCE is usually used to define a surrogate model for a low-dimensional (often scalar-valued) quantity of interest derived from the solution $u$. Regression-based PCE methods generally cannot learn the coefficient functions $\Phi_p(\bm{x},t)$ needed to express the full spatio-temporal solution $u(\bm{x},t)$. Rather, they generally learn scalar coefficients $\Phi_p$ that, as previously mentioned, can be used to build surrogate models for scalar response quantities of interest derived from the full solution $u(\bm{x},t)$. This severely limits the PCE as a modeling framework for operator learning, which is of interest in this work and is briefly discussed next.

\subsection{PCE for Operator Learning}
A few recent works have begun to explore PCE in a context that can be considered operator learning -- although these methods are not formally stated or presented as operator learning methods. The formal problem statement for operator learning will be presented in the following section. For the present purposes, we will state the (deterministic) operator learning problem informally as any method that attempts to learn the operator $\mathcal{G}(\cdot)$ through a functional $\mathcal{G}_{\theta}(\cdot)$ with parameters $\theta$ 
such that $u(\bm{x},t, \bm{\xi})\approx \mathcal{G}_{\theta}(\boldsymbol{x}, t, \boldsymbol{\xi}, f)$.

To solve this operator learning problem, Kontolati et al.~\cite{kontolati2022manifold,kontolati2022survey} developed the manifold PCE (mPCE) method that can be view as an operator learning method. The mPCE method is a regression-based approach in which realizations of, for example, $f(\bm{x},t,\boldsymbol{\xi}_i), i=1,\dots,N$ and corresponding solutions $u(\bm{x},t,\boldsymbol{\xi}_i)$ are projected onto a low-dimensional manifold. A PCE model is then fit between the projected low-dimensional input and the corresponding low-dimensional solution. For any new input, $f(\bm{x},t,\boldsymbol{\xi}^*)$ is projected onto the manifold, the low-dimensional solution is predicted with the PCE model, and the low-dimensional solution is then lifted back to reconstruct the approximate output function $\tilde{u}(\bm{x},t,\boldsymbol{\xi}^*)$ by inverting the solution projection operator. This procedure is generalizable and amenable to various forms of dimension reduction ranging from linear methods such as principal component analysis to nonlinear manifold learning methods such as diffusion maps and autoencoders. Kontolati et al.~\cite{kontolati2022survey} performed a comprehensive study of the mPCE method for different classes of unsupervised dimension reduction methods and Kontolati et al.~\cite{kontolati2023influence} then compared this approach with neural network-based operator learning using the deep operator network (DeepONet).

Very recently, Novak et al.~\cite{novak2024physics} and Sharma et al.~\cite{sharma2024physics} introduced the physics-constrained polynomial chaos expansion (PC$^2$). This approach, which introduces physics-informed learning in the context of PCE, defines orthogonal polynomial basis functions over the spatial and temporal variables such that the coefficient functions $\Phi_i(\bm{x},t)$ are, themselves, expressed through a polynomial chaos expansion. Doing so allows the least squares solution to be constrained (by physical constraints, i.e., PDEs, or arbitrary constraints such as inequalities), which is the objective of these works. But, by formulating the coefficient functions in this way, the authors also extend the classical regression-based PCE to an operator learning setting where the full spatial-temporal solution can be predicted.

\section{Formulation}
\label{sec:form}

In this section, we break from the classical spectral stochastic methods by formulating the general operator learning problem for stochastic problems and demonstrating how this operator learning problem can be solved using a new class of \textit{spectral stochastic neural operators}. Rather than employing orthogonal polynomials, these spectral stochastic neural operators learn a dictionary of orthogonal functions, \textit{constructed as neural networks}, that serve as a compact basis for the spectral expansion. Importantly, these learned basis functions maintain the orthogonality properties introduced in Section~\ref{sec:ortho}, which endows them with the beneficial properties of existing spectral stochastic methods (e.g., easier moment estimation). They can be learned for arbitrary distributions and are not constrained by a tensor product structure meaning they can uphold orthogonality to joint distributions with arbitrary dependence structure. However, if a tensor product serves as a strong inductive bias or is computationally justified, one can still use the proposed scheme while parametrizing a subset or all basis functions in a tensor product fashion \cite{cho2022separable}.

\subsection{Operator Learning Between Random Function Spaces}


Let's begin by framing the stochastic operator learning problem in the classic setting of Chen and Chen~\citep{chen1995universal} adopted most notably in the Deep Operator Network (DeepONet)~\citep{lu2021learning} by considering that we aim to approximate the solution operator for a given input at a specific point $\bm{y}=\{\bm{x}, t, \boldsymbol{\xi}\}$ conditioned on the input function $g$ (e.g., operator source term) where $\bm{y}$ concatenates the spatial and temporal locations of the prediction with the random vector $\boldsymbol{\xi}$ and $\boldsymbol{g}$ concatenates the parts of the PDE operator that we have access to as input function information. In this setting, the neural operator can be formulated as
\begin{equation}
\begin{aligned}
    \mathcal{G}(g)(\bm{y})
    &\approx
    \sum_{s=1}^S
    \underbrace{
    \sum_{i=1}^n \bar{\beta}_{i}^s \sigma\left(\sum_{j=1}^m \hat{\beta}_{ij}^s g_j + \tilde{\beta}_i^s\right)
    }_{\bar{\Phi}_s(\boldsymbol{g}; \beta_s)}
    \underbrace{
    \sigma(\bar{\theta}_s \bm{y} + \tilde{\theta}_s)
    }_{\bar{\Psi}_s(\bm{y}; \theta_s)}\\
    &=
    \sum_{s=1}^S
    \bar{\Phi}_s(\boldsymbol{g}; \beta_s)\bar{\Psi}_s(\bm{y}; \theta_s) ,
\end{aligned}
\label{eqn:Stochastic_DeepONet}
\end{equation}
where $\sigma$ is a nonlinear activation function (e.g., tanh), $\bar{\beta}_i^s$, $\hat{\beta}_i^s$, $\tilde{\beta}_i^s$, $\bar{\theta}_i^s$, $\tilde{\theta}_i^s$, 
are tunable parameters whose their collections are denoted by $\beta_s =\{ \bar{\beta}_i^s, \hat{\beta}_i^s, \tilde{\beta}_i^s\}$ and $\theta_s =\{ \bar{\theta}_i^s, \tilde{\theta}_i^s\}$, respectively. 
According to \cref{eqn:Stochastic_DeepONet}, $\bar{\Phi}_s(\boldsymbol{g}; \beta_s)$ defines the branch network and $\bar{\Psi}_s(\bm{y}; \theta_s)$ is a \textit{global} trunk network that spans the output over the combined spatio-temporal and \textit{stochastic} spaces. We refer to this naive implementation as the Stochastic DeepONet.

The Stochastic DeepONet in \cref{eqn:Stochastic_DeepONet} is a straightforward extension of the DeepONet with predictive ability for stochastic problems at arbitrary $\bm{x}$, $t$, and $\boldsymbol{\xi}$. However, it lacks the properties of the spectral expansion that, as argued above, are desirable for stochastic problems. In particular, the basis functions $\bar{\Psi}_s(\bm{y}; \theta_s)$ should be orthogonal with respect to $p_{\boldsymbol{\xi}}(\boldsymbol{\xi})$. According to the formulation in \cref{eqn:Stochastic_DeepONet}, this can only be achieved by integrating the conditions in \cref{eqn:orthogonality} into the loss function. Although this is perhaps possible, it may require a separable tensor product architecture~\cite{cho2022separable,mandl2024separable, yu2024separable} (which assumes independent random variables). This approach can pose significant computational challenges for optimization, particularly when evaluating and enforcing all orthogonality integrals during each training iteration. This cost and the associated challenges may compound rapidly as the number of dimensions, $d$, increases.

The function $\bar{\Psi}_s(\boldsymbol{y})$ can be universally approximated as $\bar{\Psi}_s(\boldsymbol{y}) = \sum_{q=1}^{\infty} \hat{\Phi}_{sq}(\boldsymbol{x}, t) \hat{\Psi}_{sq}(\boldsymbol{\xi})$, which follows the spectral stochastic expansion in Eq.~\eqref{eq:spec-expan}. Substituting this into \cref{eqn:Stochastic_DeepONet}, we obtain:
\begin{equation}
\begin{aligned}
    \mathcal{G}(g)(\bm{y})
    &\approx
    \sum_{s=1}^S
    \bar{\Phi}_s(\boldsymbol{g})
    \sum_{q=1}^{\infty} \hat{\Phi}_{sq}(\boldsymbol{x}, t) \hat{\Psi}_{sq}(\boldsymbol{\xi})\\
    &=
    \sum_{q=1}^{\infty}
    \sum_{s=1}^{S}
    \bar{\Phi}_s(\boldsymbol{g})
     \hat{\Phi}_{sq}(\boldsymbol{x}, t) \hat{\Psi}_{sq}(\boldsymbol{\xi})\\
    &=
    \sum_{p=1}^{\infty}
    \underbrace{
    \bar{\Phi}_p(\boldsymbol{g})
     \hat{\Phi}_{p}(\boldsymbol{x}, t)}_{{\Phi}_p(\boldsymbol{g}, \boldsymbol{x}, t)}
     {\Psi}_{p}(\boldsymbol{\xi})\\
    &=
    \sum_{p=1}^{\infty}
    \Phi_p(\boldsymbol{g}, \boldsymbol{x}, t)
     {\Psi}_{p}(\boldsymbol{\xi})\\
    &\approx
    \sum_{p=1}^{P}
    \Phi_p(\boldsymbol{g}, \boldsymbol{x}, t)
     {\Psi}_{p}(\boldsymbol{\xi}). 
\end{aligned}
\label{eqn:Spectral_Stochastic_NO}
\end{equation}
The proposed formulation in \cref{eqn:Spectral_Stochastic_NO}, which we refer to as the \textit{Spectral 
Stochastic Neural Operator} (or Neural Chaos in homage to the aforementioned polynomial chaos), differs subtly in construction from the Stochastic DeepONet but now clearly takes the desirable form of a spectral expansion (see Eq.~\eqref{eq:spec-expan}) if we can ensure that $\bar{\Psi}_p(\boldsymbol{\xi}; \beta_p)$ satisfy the orthogonality conditions in \cref{eqn:orthogonality}. This is the primary challenge addressed herein.

\remark{
In this work, we lump all sources of uncertainty, such as errors regarding boundary conditions, initial conditions, domain boundary, PDE operator, measurements, etc., into random variables $\boldsymbol{\xi}$, while in some applications, it might be important to identify and separate different sources of uncertainty. We leave such an extension as a promising direction for future studies.
}

\subsection{Building Spectral Stochastic Neural Network Basis Functions}
A naive way to build the Spectral Stochastic Neural Operator would, again, be to enforce orthogonality through the loss function. But this may run into the same huge computational challenges discussed above.
%
Instead, we propose an iterative algorithm that learns the stochastic basis functions $g_p(\boldsymbol{\xi}; \beta_p)$ and the associated spatio-temporal coefficients $f_p(\bm{x}; \theta_p)$ together as a sequence of neural networks that are (approximately) orthogonal by construction. This proposed architecture is presented in Figure \ref{fig:nc-method}.

\begin{figure}[!ht]
  \centering
    \includegraphics[width=.6\textwidth]{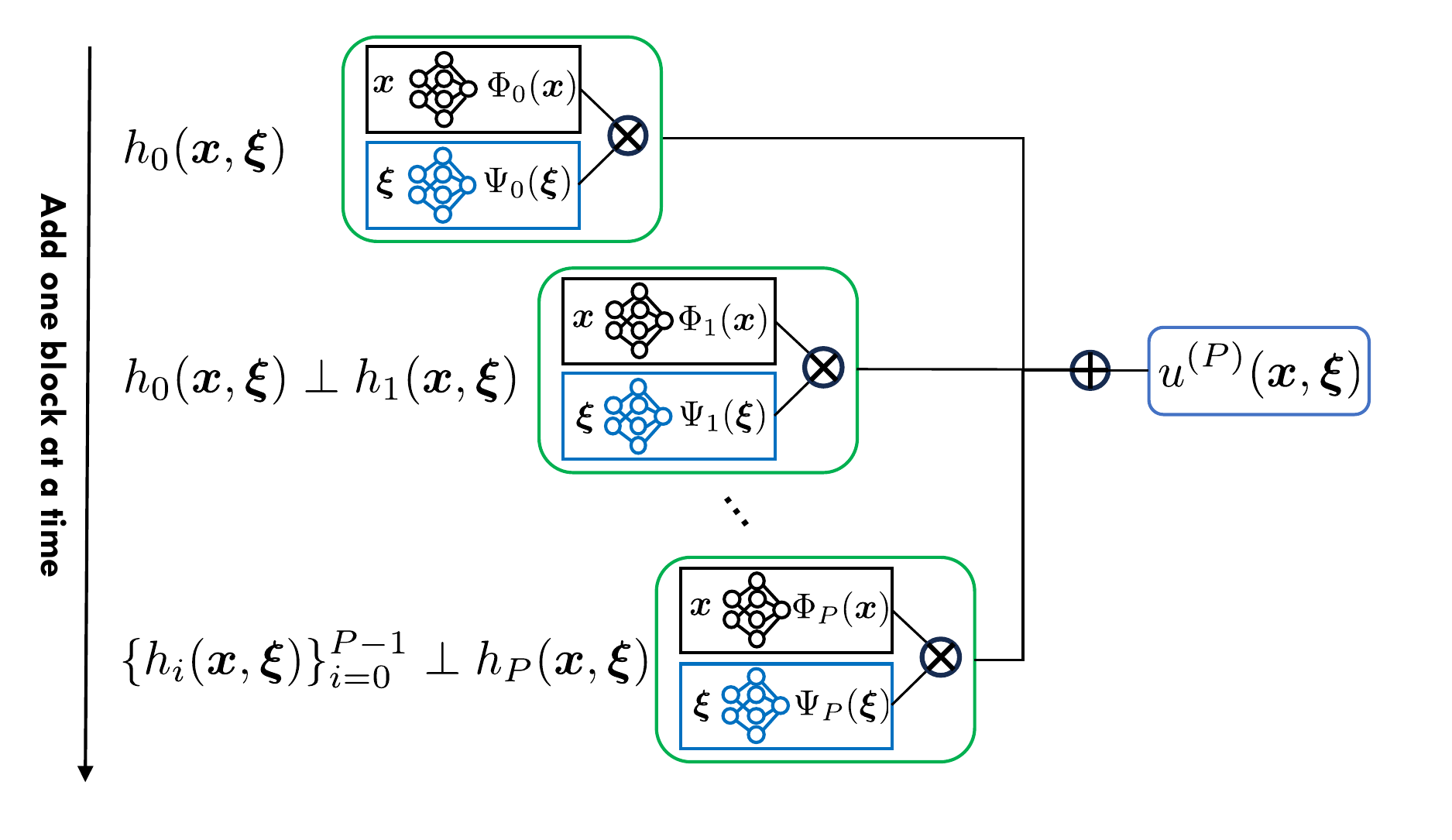}
  \caption{
  Neural Chaos architecture, which mirrors the structure of a spectral expansion. Spectral terms are added one by one such that each new term is orthogonal to the previous ones. Each spectral term has a multiplicative structure with $h_i(\bm{x},\boldsymbol{\xi}) = \Phi_i(\bm{x})\Psi_i(\boldsymbol{\xi})$ comprising deterministic and stochastic basis functions, each parameterized by a neural network.
  }
  \label{fig:nc-method}
\end{figure}

To construct orthogonal stochastic basis functions using neural networks (i.e., without leveraging a class of predetermined orthogonal functions such as orthogonal polynomials), we follow the residual-based approach initially proposed in \cite{bahmani2024resolution} in the context of neural operator learning, which we use here for stochastic processes.

\begin{lemma} The truncated residual $r^{(p)}(\boldsymbol{x}, \boldsymbol{\xi}) = u(\boldsymbol{x}, \boldsymbol{\xi}) - u^{(p)}(\boldsymbol{x}, \boldsymbol{\xi})$ of the spectral expansion series, as described by Equation \ref{eq:spec-expan}, is itself a random process that is orthogonal to the series basis functions $\Psi_i$ if these bases are orthonormal in $L^2(p(\boldsymbol{\xi}))$, i.e., mutually orthogonal and unit norm.
\end{lemma}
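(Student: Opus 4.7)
The plan is to verify the two assertions in the lemma in order: first that $r^{(p)}$ is a bona fide random process, and then that it is orthogonal in $L^2(p_{\boldsymbol{\xi}})$ to each of the basis functions $\Psi_0,\dots,\Psi_P$ appearing in the truncated expansion. The first claim is essentially immediate: for every fixed $(\boldsymbol{x},t)$, $r^{(p)}(\boldsymbol{x},t,\boldsymbol{\xi})$ is the difference of two square-integrable functions of $\boldsymbol{\xi}$ (namely $u$ and the finite sum $u^{(p)}$), so it is itself a well-defined element of $L^2(p_{\boldsymbol{\xi}})$ and, viewed jointly in $(\boldsymbol{x},t,\boldsymbol{\xi})$, it is a random process.

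For the orthogonality part, I would fix $k\in\{0,1,\dots,P\}$ and compute the inner product directly using linearity of $\langle\cdot,\cdot\rangle$:
\begin{equation*}
\langle r^{(p)}(\boldsymbol{x},t,\cdot),\Psi_k\rangle
= \langle u(\boldsymbol{x},t,\cdot),\Psi_k\rangle
- \sum_{q=0}^{P}\Phi_q(\boldsymbol{x},t)\langle \Psi_q,\Psi_k\rangle.
\end{equation*}
The orthonormality assumption collapses the finite sum to the single term $\Phi_k(\boldsymbol{x},t)$, since $\langle\Psi_q,\Psi_k\rangle=\delta_{qk}$. It then remains to identify the first term $\langle u,\Psi_k\rangle$ with $\Phi_k(\boldsymbol{x},t)$. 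Starting from the full spectral expansion $u=\sum_{q=0}^{\infty}\Phi_q\Psi_q$ (with convergence in $L^2(p_{\boldsymbol{\xi}})$), taking the inner product against $\Psi_k$ and using orthonormality term-by-term yields $\langle u,\Psi_k\rangle=\Phi_k(\boldsymbol{x},t)$. Substituting this back gives $\langle r^{(p)},\Psi_k\rangle=0$ for every $k\le P$, which is what we wanted.

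The main, and really only, obstacle is the interchange of the inner product with the infinite sum in the step $\langle \sum_{q=0}^{\infty}\Phi_q\Psi_q,\Psi_k\rangle=\sum_{q=0}^{\infty}\Phi_q\langle\Psi_q,\Psi_k\rangle$. I would justify this by continuity of the inner product: since $u(\boldsymbol{x},t,\cdot)\in L^2(p_{\boldsymbol{\xi}})$ is by hypothesis represented by the series in the $L^2$ sense, and $\Psi_k\in L^2(p_{\boldsymbol{\xi}})$, Cauchy--Schwarz lets us swap the limit defining the series with $\langle\cdot,\Psi_k\rangle$. Everything else is routine algebraic manipulation, and the conclusion of the lemma follows once this single analytic step is in place.
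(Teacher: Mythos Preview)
Your proof is correct and follows essentially the same route as the paper: expand $\langle r^{(p)},\Psi_k\rangle$ by linearity, use orthonormality to collapse the finite sum to $\Phi_k$, and identify $\langle u,\Psi_k\rangle=\Phi_k$ from the full expansion. If anything, you are slightly more careful than the paper in explicitly justifying the interchange of the infinite sum with the inner product via $L^2$ continuity, a step the paper leaves implicit.
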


\begin{proof}
We simply need to show that the inner product of the residual with any stochastic basis $\Psi_i(\boldsymbol{\xi})$ is zero:
\begin{align}
    \langle r^{(p)}, \Psi_i \rangle
    &=
    \mathbb{E}_{\boldsymbol{\xi}}
    \left[
    \Psi_i(\boldsymbol{\xi})
    \left(
         u(\boldsymbol{x}, \boldsymbol{\xi}) - u^{(p)}(\boldsymbol{x}, \boldsymbol{\xi})
    \right)
    \right]\\
    &=
    \mathbb{E}_{\boldsymbol{\xi}}
    \left[
    \Psi_i(\boldsymbol{\xi})
    \left(
         u(\boldsymbol{x}, \boldsymbol{\xi})
         - 
         \sum_{j=0}^p \Phi_j(\boldsymbol{x}) \psi_j(\boldsymbol{\xi})
    \right)
    \right]\\
    &=
    \mathbb{E}_{\boldsymbol{\xi}}
    \left[
        \Psi_i(\boldsymbol{\xi})
        u(\boldsymbol{x}, \boldsymbol{\xi})
    \right]
    -
    \sum_{j=0}^p \Phi_j(\boldsymbol{x}) \mathbb{E}_{\boldsymbol{\xi}}
    \left[
    \Psi_i(\boldsymbol{\xi})
    \Psi_j(\boldsymbol{\xi})
    \right]\\
    &=
    \Phi_i(\boldsymbol{x})
    -
    \Phi_i(\boldsymbol{x}) 
    \|\Psi_i(\boldsymbol{\xi})\|^2_{L^2(p(\boldsymbol{\xi}))} = 0.
\end{align}
\end{proof}

\begin{lemma}
\label{lemma:res-decay}
For any multiplicative composition of a stochastic function $\bar{\Psi}(\boldsymbol{\xi}) \ne 0$ and a deterministic function $\bar{\Phi}(\boldsymbol{x}) \ne 0$ that reduces the truncated residual, the stochastic function cannot belong to the subspace spanned by the stochastic basis functions used in the truncation series, i.e.,

\begin{equation}
    \|
    r^{(p)}(\boldsymbol{x}, \boldsymbol{\xi})
    -
    \bar{\Phi}(\boldsymbol{x})
    \bar{\Psi}(\boldsymbol{\xi})
    \|^2
    <
    \|r^{(p)}(\boldsymbol{x}, \boldsymbol{\xi})
    \|^2
    \implies
    \bar{\Psi}(\boldsymbol{\xi})
    \notin
    \text{span}
    \{
        \Psi_j(\boldsymbol{\xi})
    \}_{j=1}^p.
\end{equation}
\end{lemma}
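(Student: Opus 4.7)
My plan is to prove the contrapositive: assume $\bar{\Psi}(\boldsymbol{\xi})\in\mathrm{span}\{\Psi_j\}_{j=1}^p$ and show that multiplying by any nonzero $\bar{\Phi}(\boldsymbol{x})$ cannot decrease the norm of the residual. Write $\bar{\Psi}(\boldsymbol{\xi}) = \sum_{j=1}^{p}\alpha_j \Psi_j(\boldsymbol{\xi})$ for some coefficients $\alpha_j$ and expand the candidate norm as
\begin{equation*}
\|r^{(p)} - \bar{\Phi}\bar{\Psi}\|^2
=
\|r^{(p)}\|^2 - 2\langle r^{(p)}, \bar{\Phi}\bar{\Psi}\rangle + \|\bar{\Phi}\bar{\Psi}\|^2.
\end{equation*}
The decrease hypothesis therefore forces $\langle r^{(p)}, \bar{\Phi}\bar{\Psi}\rangle > 0$, so the heart of the argument is to show that this cross term vanishes under the spanning assumption.

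The second step is to disentangle the deterministic and stochastic parts of the inner product. Interpreting $\|\cdot\|^2$ as the joint $L^2$ norm over $(\boldsymbol{x},\boldsymbol{\xi})$ with measure $d\mu(\boldsymbol{x}) p_{\boldsymbol{\xi}}(\boldsymbol{\xi})d\boldsymbol{\xi}$, a Fubini-type factorization gives
\begin{equation*}
\langle r^{(p)}, \bar{\Phi}\bar{\Psi}\rangle
=
\int \bar{\Phi}(\boldsymbol{x})\,
\Big\langle r^{(p)}(\boldsymbol{x},\cdot),\, \bar{\Psi}\Big\rangle_{L^2(p(\boldsymbol{\xi}))} \, d\mu(\boldsymbol{x}).
\end{equation*}
The inner $\boldsymbol{\xi}$-integral is, by linearity, $\sum_{j=1}^p \alpha_j \langle r^{(p)}(\boldsymbol{x},\cdot), \Psi_j\rangle_{L^2(p(\boldsymbol{\xi}))}$, and each term in this sum is zero for every fixed $\boldsymbol{x}$ by the preceding lemma (the residual is pointwise-in-$\boldsymbol{x}$ orthogonal to every $\Psi_j$ used in the truncation). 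Hence the cross term vanishes identically.

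Combining the two steps, if $\bar{\Psi}\in\mathrm{span}\{\Psi_j\}_{j=1}^p$ then
\begin{equation*}
\|r^{(p)} - \bar{\Phi}\bar{\Psi}\|^2 = \|r^{(p)}\|^2 + \|\bar{\Phi}\bar{\Psi}\|^2 \geq \|r^{(p)}\|^2,
\end{equation*}
with strict inequality whenever both $\bar{\Phi}$ and $\bar{\Psi}$ are nonzero (as assumed in the statement). This contradicts the hypothesis that the multiplicative composition strictly reduces the residual norm, so $\bar{\Psi}$ cannot lie in the span, proving the lemma.

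The main obstacle I anticipate is a bookkeeping one rather than a conceptual one: the lemma implicitly uses a joint norm over $(\boldsymbol{x},\boldsymbol{\xi})$, whereas Lemma~1 is phrased with an $L^2(p(\boldsymbol{\xi}))$ inner product for fixed $\boldsymbol{x}$. The clean way to bridge the two is the Fubini-style factorization above, which upgrades pointwise-in-$\boldsymbol{x}$ orthogonality to orthogonality in the full joint space for any product-form test function $\bar{\Phi}(\boldsymbol{x})\bar{\Psi}(\boldsymbol{\xi})$; once that is stated explicitly, the rest is just expanding the square.
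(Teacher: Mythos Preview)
Your proof is correct and follows essentially the same contrapositive route as the paper: assume $\bar{\Psi}\in\mathrm{span}\{\Psi_j\}_{j=1}^p$, expand the square, kill the cross term via the orthogonality established in the preceding lemma, and conclude that the remaining nonnegative term prevents any decrease. The only cosmetic difference is that the paper computes the norm as $\mathbb{E}_{\boldsymbol{\xi}}[\cdot]$ with $\boldsymbol{x}$ held fixed (so $\bar{\Phi}^2(\boldsymbol{x})$ factors out directly), whereas you work in the joint $L^2$ space and invoke Fubini---your version is slightly more careful on exactly the bookkeeping point you anticipated.
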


\begin{proof}
    By contradiction, we need to show that if $\bar{\Psi}(\boldsymbol{\xi})$ belongs to the subspace of basis functions, then there is no way to reduce the residual. If $\bar{\Psi}(\boldsymbol{\xi})$ belongs to the subspace, it can be spanned linearly by the basis functions, i.e., $\bar{\Psi}(\boldsymbol{\xi}) = \sum_{j=1}^p \alpha_j \Psi_j(\boldsymbol{\xi})$:
    \begin{align}
    \|
    r^{(p)}(\boldsymbol{x}, \boldsymbol{\xi})
    -
    \bar{\Phi}(\boldsymbol{x})
    \bar{\Psi}(\boldsymbol{\xi})
    \|^2
    &=
    \mathbb{E}_{\boldsymbol{\xi}}
    \left[
        \left(
        r^{(p)}(\boldsymbol{x}, \boldsymbol{\xi})
        -
        \bar{\Phi}(\boldsymbol{x})
        \bar{\Psi}(\boldsymbol{\xi})
        \right)^2
    \right]\\
    &=
    \| r^{(p)} \|^2
    +
    \underbrace{
    \bar{\Phi}^2(\boldsymbol{x})
    \| \bar{\Psi}(\boldsymbol{\xi}) \|^2
    }_{\ge 0}
    -
    2 \bar{\Phi}(\boldsymbol{x})
    \underbrace{
    \mathbb{E}_{\boldsymbol{\xi}}
    \left[
        r^{(p)}(\boldsymbol{x}, \boldsymbol{\xi} 
        ) \bar{\Psi}(\boldsymbol{\xi}))
    \right]
    }_{=0 (\text{ orthogonality})}
    \\
    &\ge
    \| r^{(p)} \|^2.
    \end{align}    
\end{proof}

\begin{algorithm}
\caption{\textbf{Continuous} Spectral Stochastic Dictionary Learning}
\label{algo:SSDL-cont}
\begin{algorithmic}[1] 

\State \textbf{Input:} 
\State $
\mathcal{D}_{u}=\left\{
\mathcal{D}_{u}^{(i)} \right\}_{i=1}^{N}, \mathcal{D}_{u}^{(i)} =
\left(
\boldsymbol{\xi}^{(i)},
\left\{
\boldsymbol{x}^{(i, j)}, u^{(i, j)}
\right\}_{j=1}^{M}
\right)$ \Comment{$N$ realizations of the stochastic process $u(\boldsymbol{x}, \boldsymbol{\xi})$}
\State $0 < \text{Tol.} \ll 1$ \Comment{A small tolerance of target accuracy}
\State \textbf{Output:}  $P, \Phi_0(\cdot; \boldsymbol{\theta}_0), \left\{ \Phi_p(\cdot; \boldsymbol{\theta}_p), \Psi_p(\cdot; \boldsymbol{\beta}_p) \right\}_{p=1}^{P}$ \Comment{A set of $2P+1$ parameterized basis functions}
%
\State $p \gets 0$
\State $\boldsymbol{\theta}_0 = {\text{argmin}}\
    \underset{(\boldsymbol{x}, \boldsymbol{\xi})\sim \mathcal{D}_{u}}{\mathbb{E}}\left\|
    u(\boldsymbol{x}, \boldsymbol{\xi}) -
    \Phi_0(\boldsymbol{x}; \boldsymbol{\theta}_0)
    \right\|_2^2
    $
\State $r(\boldsymbol{x}, \boldsymbol{\xi}) = u(\boldsymbol{x}, \boldsymbol{\xi}) - \Phi_0(\boldsymbol{x}; \boldsymbol{\theta}_0)$
\While{$\mathbb{E}_{\boldsymbol{x}, \boldsymbol{\xi}}[r^2(\boldsymbol{x}, \boldsymbol{\xi})] \ge \text{Tol.}$}
    \State $p \gets p + 1$
    \State Randomly initialize two new neural networks $\Phi_p(\boldsymbol{x};\boldsymbol{\theta}_p)$, $\Psi_p(\boldsymbol{\xi};\boldsymbol{\beta}_p)$
    \State $\boldsymbol{\theta}_p, \boldsymbol{\beta}_p = {\text{argmin}}\
    \underset{(\boldsymbol{x}, \boldsymbol{\xi})\sim \mathcal{D}_{u}}{\mathbb{E}}\left\|
    r(\boldsymbol{x}, \boldsymbol{\xi}) -
    \Phi_p(\boldsymbol{x}; \boldsymbol{\theta}_p)
    \Psi_p(\boldsymbol{\xi}; \boldsymbol{\beta}_p)
    \right\|_2^2
    $
    \State $r(\boldsymbol{x}, \boldsymbol{\xi}) 
    \gets
    r(\boldsymbol{x}, \boldsymbol{\xi})
    -
    \Phi_p(\boldsymbol{x}; \boldsymbol{\theta}_{p}) \Psi_p(\boldsymbol{\xi}; \boldsymbol{\beta}_p)$
\EndWhile
\end{algorithmic}
\end{algorithm}

Based on these two lemmas, we propose an algorithm that learns the $p$-th term in the spectral expansion by finding the unknown functions $\Phi_{p}(\boldsymbol{x}; \boldsymbol{\theta}_{p})$ and $\Psi_{p}(\boldsymbol{\xi}; \boldsymbol{\beta}_p)$ that best factorize the stochastic residual field $r^{(p-1)}(\boldsymbol{x}, \boldsymbol{\xi})$ in a multiplicative way. The neural network parameters $\boldsymbol{\theta}_p$ and $\boldsymbol{\beta}_p$ are optimized using gradient descent to minimize the expected error across the realizations of the random process. Based on Lemma~\ref{lemma:res-decay}, if the residual decays, we can conclude that the newly extended basis function introduces a new function that does not belong to the subspace spanned by the previous basis functions. We call this method ``\textbf{continuous}'' spectral stochastic dictionary learning, as outlined in Algorithm~\ref{algo:SSDL-cont}. Note that, similar to PCE-based methods, the zero-th term sets $\Psi_0(\boldsymbol{\xi}) = 1$, and accordingly, the deterministic basis function aims to capture the mean field across the realizations (see Line 6 in Algorithm~\ref{algo:SSDL-cont}).

Learning the basis functions in this way may introduce some challenges during optimization, particularly in Line 11, where an identifiability issue arises. Specifically, one can \textit{arbitrarily scale up} one of the basis functions and scale down the other by the same factor, while the loss function remains unchanged. Ideas involving alternating optimization and rescaling after each iteration, as proposed in \cite{bahmani2024resolution}, can be leveraged to mitigate these issues. However, along these lines, we propose a more robust algorithm as follows: By leveraging the bilinearity of each spectral term, if one of the functions is fixed, the other can be found analytically at the function level, regardless of the chosen parameterization. In other words, this is the optimal solution, independent of how the function has been parameterized via polynomials, neural networks, etc.
\begin{lemma}
For the fixed $\Psi_{p}(\boldsymbol{\xi})$, the optimal $\Phi_{p}(\boldsymbol{x})$ has a closed form solution as follows:
\begin{align}
    \underset{\Phi_p}{\text{argmin}} \ \mathbb{E}_{\boldsymbol{\xi}}
    [
        (\Phi_p(\boldsymbol{x}) \Psi_p(\boldsymbol{\xi})
        - r(\boldsymbol{x}, \boldsymbol{\xi}))^{2}
    ] &=
    \underset{\Phi_p}{\text{argmin}} \ 
    \Phi_p^2(\boldsymbol{x})\mathbb{E}_{\boldsymbol{\xi}}[\Psi_p^2(\boldsymbol{\xi})]
    -2\Phi_p(\boldsymbol{x})\mathbb{E}_{\boldsymbol{\xi}}[\Psi_p(\boldsymbol{\xi})r(\boldsymbol{x}, \boldsymbol{\xi})]
    +
    \mathbb{E}_{\boldsymbol{\xi}}
    [r^2(\boldsymbol{x}, \boldsymbol{\xi})]
    \\
    &=
    \underset{\Phi_p}{\text{argmin}} \ 
    \Phi_p^2(\boldsymbol{x})\mathbb{E}_{\boldsymbol{\xi}}[\Psi_p^2(\boldsymbol{\xi})]
    -2\Phi_p(\boldsymbol{x})\mathbb{E}_{\boldsymbol{\xi}}[\Psi_p(\boldsymbol{\xi})r(\boldsymbol{x}, \boldsymbol{\xi})]
    \\
    &=    
    \frac{\mathbb{E}_{\boldsymbol{\xi}}[\Psi_p(\boldsymbol{\xi})r(\boldsymbol{x}, \boldsymbol{\xi})]}{\mathbb{E}_{\boldsymbol{\xi}}[\Psi_p^2(\boldsymbol{\xi})]}.
\end{align}
\end{lemma}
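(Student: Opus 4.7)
The plan is to recognize this as a pointwise-in-$\boldsymbol{x}$ scalar quadratic minimization. Because $\Phi_p$ depends only on $\boldsymbol{x}$ and the expectation is taken over $\boldsymbol{\xi}$, for each fixed $\boldsymbol{x}$ the inner optimization problem is a scalar quadratic in the single unknown $c := \Phi_p(\boldsymbol{x})$. First I would expand the square in the objective and pull the $\boldsymbol{\xi}$-independent factor $\Phi_p(\boldsymbol{x})$ outside the expectation, obtaining
$$J(c;\boldsymbol{x}) \;=\; c^{2}\,\mathbb{E}_{\boldsymbol{\xi}}[\Psi_p^{2}(\boldsymbol{\xi})] \;-\; 2c\,\mathbb{E}_{\boldsymbol{\xi}}[\Psi_p(\boldsymbol{\xi})\,r(\boldsymbol{x},\boldsymbol{\xi})] \;+\; \mathbb{E}_{\boldsymbol{\xi}}[r^{2}(\boldsymbol{x},\boldsymbol{\xi})],$$
which is exactly the first equality in the lemma. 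Since the final term is independent of $c$, it can be dropped from the $\arg\min$, yielding the second equality.

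Next I would set the derivative in $c$ to zero. The leading coefficient $\mathbb{E}_{\boldsymbol{\xi}}[\Psi_p^{2}(\boldsymbol{\xi})]$ is strictly positive whenever $\Psi_p \not\equiv 0$ with respect to $p_{\boldsymbol{\xi}}$, which is precisely the non-degeneracy already assumed in Lemma~\ref{lemma:res-decay}. Hence $J(\cdot;\boldsymbol{x})$ is a strictly convex quadratic with unique global minimizer obtained from the first-order condition
$$2c\,\mathbb{E}_{\boldsymbol{\xi}}[\Psi_p^{2}(\boldsymbol{\xi})] \;-\; 2\,\mathbb{E}_{\boldsymbol{\xi}}[\Psi_p(\boldsymbol{\xi})\,r(\boldsymbol{x},\boldsymbol{\xi})] \;=\; 0,$$
which rearranges to the claimed closed-form expression for $\Phi_p(\boldsymbol{x})$.

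The main subtlety — rather than a true obstacle — is justifying that the minimization may be carried out pointwise at every $\boldsymbol{x}$. The statement implicitly assumes $\Phi_p$ is optimized over an unrestricted function class; any parametric restriction (for example a neural network) would only allow one to \emph{approximate} the closed-form profile, not to attain it exactly. I would make this point explicit, and also note that the interchange of differentiation in $c$ with the expectation over $\boldsymbol{\xi}$ is routine under the $L^{2}(p_{\boldsymbol{\xi}})$ assumption on $\Psi_p$ and $r(\boldsymbol{x},\cdot)$ that is implicit throughout the paper. With these remarks in place, the three displayed equalities of the lemma follow directly from the chain described above.
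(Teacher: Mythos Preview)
Your proposal is correct and follows essentially the same route as the paper: the paper's argument is precisely the displayed chain of equalities (expand the square, drop the $\Phi_p$-independent term, minimize the resulting scalar quadratic), and you reproduce it while adding the appropriate remarks on pointwise optimization and strict convexity under $\Psi_p \not\equiv 0$.
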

\begin{lemma}
    For the fixed $\Phi_{p}(\boldsymbol{x})$, the optimal $\Psi_{p}(\boldsymbol{\xi})$ has a closed form solution as follows:
\begin{align}
    \underset{\Psi_p}{\text{argmin}} \ \mathbb{E}_{\boldsymbol{x}}
    [
        (\Phi_p(\boldsymbol{x}) \Psi_p(\boldsymbol{\xi})
        - r(\boldsymbol{x}, \boldsymbol{\xi}))^{2}
    ] &=
    \underset{\Psi_p}{\text{argmin}} \ 
    \Psi_p^2(\boldsymbol{\xi}) \mathbb{E}_{\boldsymbol{x}}[\Phi_p^2(\boldsymbol{x})]
    -2\Phi_p(\boldsymbol{x})\mathbb{E}_{\boldsymbol{x}}[\Psi_p(\boldsymbol{\xi})r(\boldsymbol{x}, \boldsymbol{\xi})]
    +
    \mathbb{E}_{\boldsymbol{x}}
    [r^2(\boldsymbol{x}, \boldsymbol{\xi})]
    \\
    &=
    \underset{\Psi_p}{\text{argmin}} \ 
    \Psi_p^2(\boldsymbol{\xi})\mathbb{E}_{\boldsymbol{x}}[\Phi_p^2(\boldsymbol{x})]
    -2\Psi_p(\boldsymbol{\xi})\mathbb{E}_{\boldsymbol{x}}[\Phi_p(\boldsymbol{x})r(\boldsymbol{x}, \boldsymbol{\xi})]
    \\
    &=    
    \frac{\mathbb{E}_{\boldsymbol{x}}[\Phi_p(\boldsymbol{x})r(\boldsymbol{x}, \boldsymbol{\xi})]}{\mathbb{E}_{\boldsymbol{x}}[\Phi_p^2(\boldsymbol{x})]}.
\end{align}
\end{lemma}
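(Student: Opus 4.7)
My proposal is essentially to mirror the argument used for the companion lemma concerning fixed $\Psi_p$, exchanging the roles of $\boldsymbol{x}$ and $\boldsymbol{\xi}$. The key structural observation is that the objective $\mathbb{E}_{\boldsymbol{x}}[(\Phi_p(\boldsymbol{x})\Psi_p(\boldsymbol{\xi}) - r(\boldsymbol{x},\boldsymbol{\xi}))^{2}]$ is a function of $\boldsymbol{\xi}$, and for each fixed $\boldsymbol{\xi}$ the scalar $\Psi_p(\boldsymbol{\xi})$ is constant with respect to the expectation being taken over $\boldsymbol{x}$, so it factors out of the integrals. The variational minimization over the function $\Psi_p$ therefore decouples into a family of scalar quadratic minimizations, one at each $\boldsymbol{\xi}$.

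First I would expand the square inside the expectation and use linearity of $\mathbb{E}_{\boldsymbol{x}}$ to obtain the intermediate expression already written in the lemma statement, namely $\Psi_p^2(\boldsymbol{\xi})\mathbb{E}_{\boldsymbol{x}}[\Phi_p^2(\boldsymbol{x})] - 2\Psi_p(\boldsymbol{\xi})\mathbb{E}_{\boldsymbol{x}}[\Phi_p(\boldsymbol{x})r(\boldsymbol{x},\boldsymbol{\xi})] + \mathbb{E}_{\boldsymbol{x}}[r^2(\boldsymbol{x},\boldsymbol{\xi})]$. The last term is independent of $\Psi_p$ and can be dropped from the $\operatorname{argmin}$, leaving the pure quadratic in $\Psi_p(\boldsymbol{\xi})$ shown in the middle line of the stated identity. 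Viewing this as a scalar quadratic $a\Psi_p^2(\boldsymbol{\xi}) - 2b\Psi_p(\boldsymbol{\xi})$ at each $\boldsymbol{\xi}$, with $a = \mathbb{E}_{\boldsymbol{x}}[\Phi_p^2(\boldsymbol{x})]$ and $b = \mathbb{E}_{\boldsymbol{x}}[\Phi_p(\boldsymbol{x})r(\boldsymbol{x},\boldsymbol{\xi})]$, I differentiate with respect to $\Psi_p(\boldsymbol{\xi})$, set the derivative to zero, and solve to get the claimed ratio $b/a$. The second-order condition holds because $a = \mathbb{E}_{\boldsymbol{x}}[\Phi_p^2(\boldsymbol{x})] > 0$ provided $\Phi_p \not\equiv 0$, which is the same nonvanishing assumption already invoked in Lemma~\ref{lemma:res-decay}, so the critical point is the unique global minimizer.

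The main obstacle is a minor technicality rather than a mathematical difficulty: one should verify that the pointwise optimum actually defines a legitimate function $\Psi_p(\boldsymbol{\xi})$ (rather than merely a collection of pointwise values) and that pointwise minimization is justified under an outer expectation over $\boldsymbol{\xi}$. Both are standard: the numerator $\boldsymbol{\xi}\mapsto \mathbb{E}_{\boldsymbol{x}}[\Phi_p(\boldsymbol{x})r(\boldsymbol{x},\boldsymbol{\xi})]$ is measurable by Fubini, and since the integrand in any subsequent outer $\boldsymbol{\xi}$-average is nonnegative and depends on $\Psi_p$ only through its value at $\boldsymbol{\xi}$, minimizing the integrand pointwise minimizes its integral. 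With these observations the proof is essentially mechanical and completely parallel to the preceding lemma.
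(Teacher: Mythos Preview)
Your proposal is correct and follows exactly the paper's approach: the paper's entire proof is the chain of equalities displayed in the lemma statement itself (expand the square, drop the $\Psi_p$-independent term, read off the vertex of the scalar quadratic), and you reproduce this verbatim. The second-order and measurability remarks you add are extra rigor the paper omits, not a different route.
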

Based on these lemmas, we can first find the optimal basis vectors in a fully discrete manner, and then, in parallel, learn continuous versions of these bases that map the domain of the functions to the identified images via parameterization with neural networks. For example, \( \Psi_p(\boldsymbol{\xi}; \boldsymbol{\beta}_p): \boldsymbol{\xi} \to \Psi_p(\boldsymbol{\xi}) \). We call this method ``\textbf{discrete-continuous}'' spectral stochastic dictionary learning, as outlined in Algorithm~\ref{algo:SSDL-disc-cont}. The key difference between this algorithm and the previous one lies in Line 11, where we introduce the Closest Multiplicative Decomposition (CMD) iterations, as outlined in Algorithm~\ref{algo:MulDecomp}. This provides a practical use case for deriving the closed-form solution for stochastic and deterministic functions, with one of them assumed to be fixed. Note that at the end of each CMD step, we normalize the stochastic basis to have unit norm by appropriately scaling the stochastic and deterministic basis vectors. In this way, the magnitude of the deterministic basis reflects the importance of their corresponding stochastic basis in the overall spectral expansion for approximating the random process under study.

\begin{algorithm}
\caption{\textbf{Discrete-Continuous} Spectral Stochastic Dictionary Learning}
\label{algo:SSDL-disc-cont}
\begin{algorithmic}[1] 

\State \textbf{Input:} 
\State $
\mathcal{D}_{u}=\left\{
\mathcal{D}_{u}^{(i)} \right\}_{i=1}^{N}, \mathcal{D}_{u}^{(i)} = 
\left(
\boldsymbol{\xi}^{(i)}, \left\{\boldsymbol{x}^{(i, j)}, u^{(i, j)}
\right\}_{j=1}^{M}\right)$ \Comment{$N$ realizations of the stochastic process $u(\boldsymbol{x}, \boldsymbol{\xi})$}
\State $0 < \text{Tol.} \ll 1$ \Comment{A small tolerance of target accuracy}
\State \textbf{Output:}  $P, \Phi_0, \left\{ \Phi_p, \Psi_p \right\}_{p=1}^{P}$ \Comment{A set of $2P+1$ basis vectors}
%
\State $p \gets 0$
\State $ {\Phi_0}_j
    =
    \mathbb{E}_{i}
    [u^{(i, j})]
    $ \Comment{For all $1 \le j\le M$}
\State $r^{(i, j)} = u^{(i, j)} - {\Phi_0}_j$ \Comment{For all $1 \le j\le M$ and $1\le i \le N$}
\While{$\mathbb{E}_{i,j}[|| r^{(i, j)}||_2^2] \ge \text{Tol.}$}
    \State $
    \mathcal{D}_{r}=\left\{
    \mathcal{D}_{r}^{(i)} \right\}_{i=1}^{N}, \mathcal{D}_{r}^{(i)} = 
    \left(
    \boldsymbol{\xi}^{(i)}, \left\{\boldsymbol{x}^{(i, j)}, r^{(i, j)}
    \right\}_{j=1}^{M}\right)$
    \State $p \gets p + 1$
    \State $\Phi_p, \Psi_p \gets \text{CMD}(\mathcal{D}_r)$ \Comment{See Algorithm \ref{algo:MulDecomp}}
    \State $r^{(i,j)} 
    \gets
    r^{(i, j)}
    -
    {\Phi_p}_j {\Psi_p}_{i}$ \Comment{For all $1 \le j\le M$ and $1\le i \le N$}
\EndWhile
\end{algorithmic}
\end{algorithm}

%
%
%
%

\begin{algorithm}
\caption{Closest Multiplicative Decomposition}
\label{algo:MulDecomp}
\begin{algorithmic}[1] 

\State \textbf{Input:} 
\State $
\mathcal{D}_{r}=\left\{
\mathcal{D}_{r}^{(i)} \right\}_{i=1}^{N}, \mathcal{D}_{r}^{(i)} = 
\left(
\boldsymbol{\xi}^{(i)}, \left\{\boldsymbol{x}^{(i, j)}, r^{(i, j)}
\right\}_{j=1}^{M}\right)$ \Comment{$N$ realizations of the stochastic process $r(\boldsymbol{x}, \boldsymbol{\xi})$}
\State $0 < \text{Tol.} \ll 1$ \Comment{A small tolerance of target accuracy}
\State \textbf{Output:}  $\Phi(\boldsymbol{x}), \Psi(\boldsymbol{\xi})$ \Comment{Such that $r(\boldsymbol{x}, \boldsymbol{\xi}) \approx \Phi(\boldsymbol{x}) \Psi(\boldsymbol{\xi})$}
%
\State $ \Psi_{i} = 1$ \Comment{Initialize to ones for $1\le i \le N$}
\While{\text{Not Converged}} \Comment{Based on changes in $\Psi(\boldsymbol{\xi})$}
    \State $\Phi_j = \mathbb{E}_{i}[\Psi_{i} r^{(i, j)}] / \| \Psi \|_{L^2(p(\boldsymbol{\xi}))}$ \Comment{For all $1 \le j \le M$}
    \State $\Psi_{i} = \mathbb{E}_{j}[\Phi_j r^{(i, j)}] / \|\Phi \|_2^2$
\EndWhile
\end{algorithmic}
\end{algorithm}

A useful property of spectral expansion with orthogonal stochastic basis functions is the ability to estimate the variance analytically, without the need for Monte Carlo estimation. One can easily show that if $\Psi_{i}(\boldsymbol{\xi})$ are orthogonal, the variance of the truncated approximation $u^{(p)}$ can be calculated as follows:
\begin{equation}
    \mathbb{V}_{\boldsymbol{\xi}}[u(\boldsymbol{x}, \boldsymbol{\xi})]
    \approx 
    \mathbb{V}_{\boldsymbol{\xi}}[u^{(p)}(\boldsymbol{x}, \boldsymbol{\xi})] = \sum_{i=1}^P \Phi_i^2(\boldsymbol{x})\mathbb{E}_{\boldsymbol{\xi}}[\Psi_i^2(\boldsymbol{\xi})].
    \label{eqn:var_SSNO}
\end{equation}
If the stochastic basis functions are normalized, this equation simplifies to a summation over only the deterministic basis functions. We use this method of variance estimation through the identified basis functions as a sanity check to indirectly assess how well they satisfy orthogonality. If their orthogonality is poor, the variance estimated this way will deviate significantly from the true variance. Recall that the mean of the random process is equal to the zeroth-order term $\Phi_0(\boldsymbol{x})$, i.e., 
\begin{equation}
    \mathbb{E}_{\boldsymbol{\xi}}[u(\boldsymbol{x}, \boldsymbol{\xi})]
    \approx 
    \mathbb{E}_{\boldsymbol{\xi}}[u^{(p)}(\boldsymbol{x}, \boldsymbol{\xi})] = \Phi_0(\boldsymbol{x}).
    \label{eqn:mean_SSNO}
\end{equation}

\remark{This decomposition of mean and variance is commonly used in the global sensitivity analysis with Sobol' method \cite{sudret2008global}.}

\section{Numerical Examples}
\label{sec:exam}
In this section, five numerical examples are presented to discuss and illustrate the practical aspects of the proposed algorithms. In the first problem, the effectiveness of Algorithm~\ref{algo:SSDL-cont} is assessed using a stochastic ODE from the literature. In the second problem, a stochastic heat conduction problem is introduced to compare the performance of Algorithm~\ref{algo:SSDL-cont} to that of Algorithm~\ref{algo:SSDL-disc-cont}. In the third example, a stochastic beam deflection problem from the literature is selected to compare the proposed scheme with classical PCE. In the fourth example, a stochastic heat conduction PDE in two spatial dimensions is solved to benchmark neural chaos performance when vector random variables are involved. In the final example, we demonstrate the capability of neural chaos in addressing stochastic problems involving both linearly and nonlinearly dependent random variables.

\subsection{Example 1: First order SDE with a single random variable}
\label{sec:ex-xui-1d}
We begin by studying the robustness of the proposed method for
handling problems with scalar random variables sampled from different distributions. 
Here, data is generated by solving the simple stochastic differential equation introduced in \cite{xiu2002wiener} given by:
\begin{equation}
    \frac{d u}{dx} = \xi x; \xi\sim p(\xi),
    \label{eqn:sde1}
\end{equation}
where the random variable $\xi$ is sampled from a known probability distribution function $p(\xi)$. We consider four scenarios where the random variable is sampled from uniform, normal, gamma, and Poisson distributions ($\text{Uniform}(0, 1), \mathcal{N}(\mu=0, \sigma=1), \Gamma(k=1, \theta=1), \text{Poisson}(\lambda=1)$, respectively).
One thousand samples from each of these distributions are 
generated and the solutions of \cref{eqn:sde1}, using the values of these generated random variables, are stored at 20 equidistant points $x \in [0, 1]$. Seven hundred simulations are used for training, while the remaining 300 simulations are used for testing the stochastic neural operator.


The convergence rate of the model identified using Algorithm~\ref{algo:SSDL-disc-cont} as the number of terms in the spectral expansion increases is shown in Fig.~\ref{fig:ode1d-convg-disc-algo}. All models converge exponentially with the number of terms, but the rate depends on the type of distribution of the input random variable. After learning the corresponding neural network basis functions, the error distribution between the model predictions and the ground truth data is plotted in Fig.~\ref{fig:ode1d-errDist-disc-algo} for training and test realizations. On average, both training and test errors demonstrate satisfactory prediction capability, and the error distributions are almost identical between the training and test data, which demonstrates good generalization as well. The training error is slightly higher with the neural network basis functions compared to the basis vectors used in Fig.~\ref{fig:ode1d-convg-disc-algo}, which is due to the small neural network training error.

\begin{figure}[h]
  \centering
  \begin{subfigure}[b]{0.23\textwidth}
    \includegraphics[width=\textwidth]{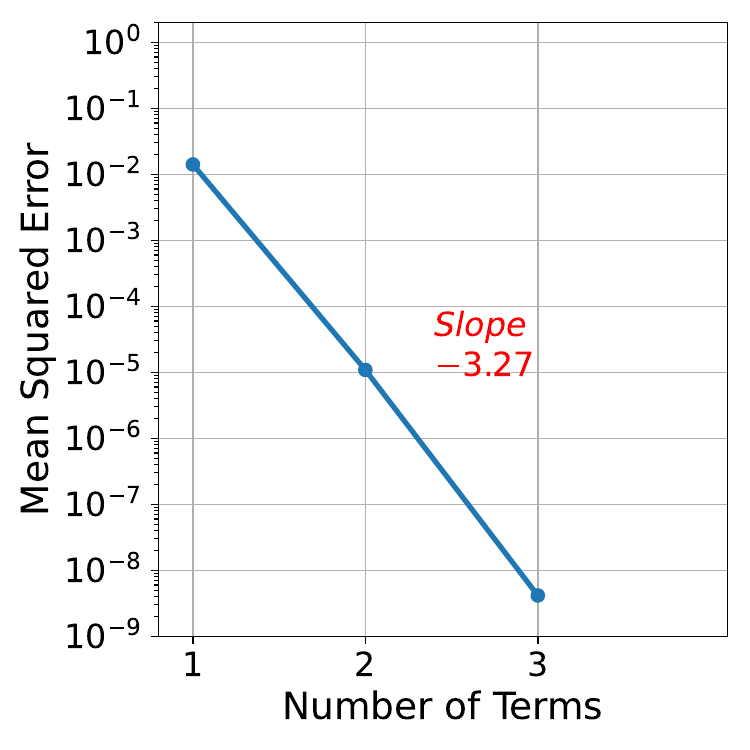}
    \caption{}
  \end{subfigure}
    \begin{subfigure}[b]{0.23\textwidth}
    \includegraphics[width=\textwidth]{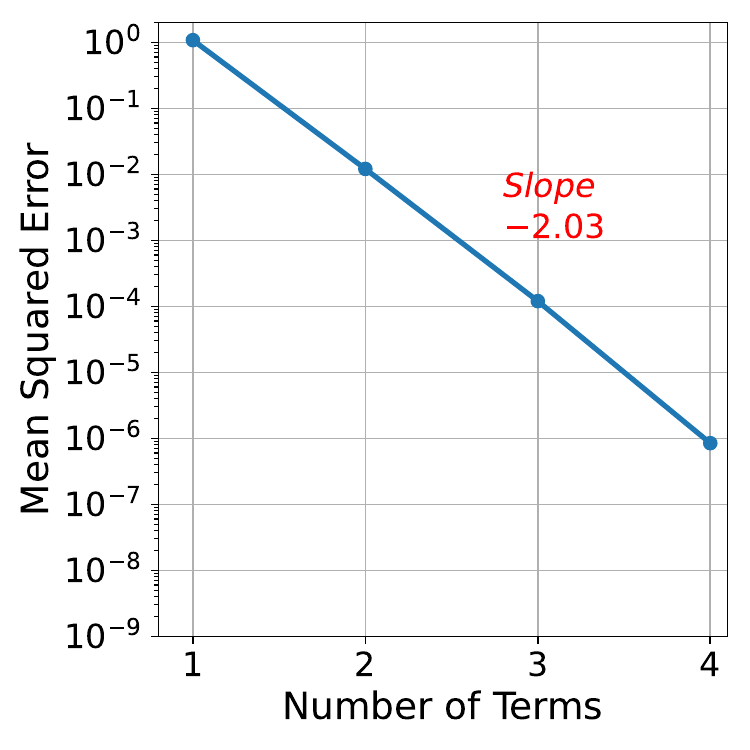}
    \caption{}
  \end{subfigure}
     \begin{subfigure}[b]{0.23\textwidth}
    \includegraphics[width=\textwidth]{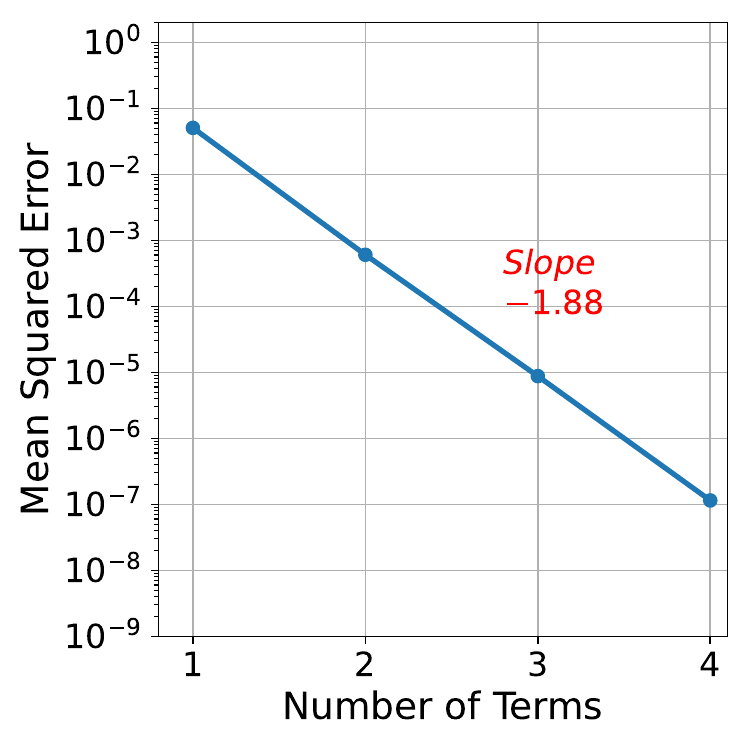}
    \caption{}
  \end{subfigure}
     \begin{subfigure}[b]{0.23\textwidth}
    \includegraphics[width=\textwidth]{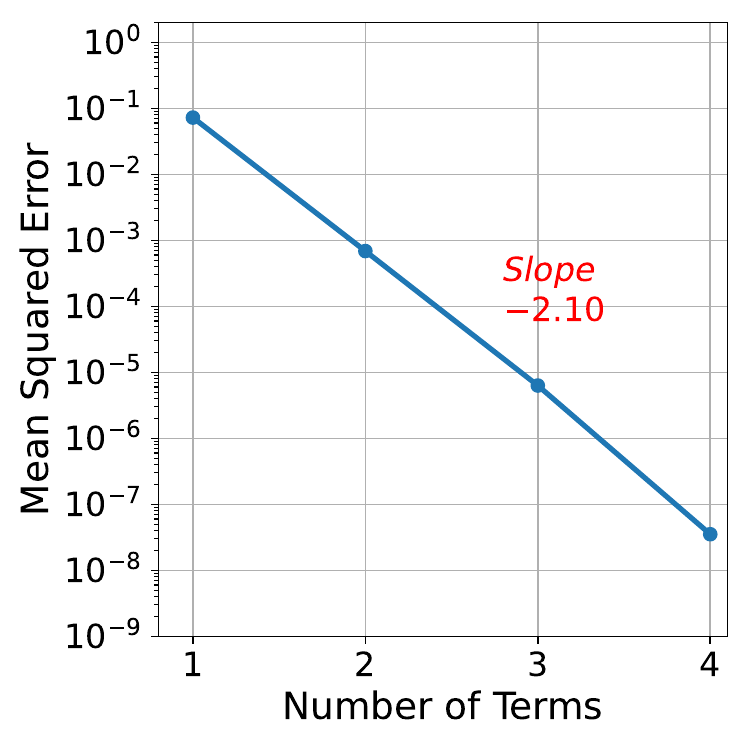}
    \caption{}
  \end{subfigure}
  \caption{
  Example 1 -- Mean squared error between the data and the learned spectral expansion via Algorithm~\ref{algo:SSDL-disc-cont} for increasing number of terms used in the expansion and different input distributions: (a) Uniform, (b) Normal, (c) Gamma, (d) Poisson. The slope indicates the rate of exponential error decay.
  }
  \label{fig:ode1d-convg-disc-algo}
\end{figure}

\begin{figure}[h]
  \centering
  \begin{subfigure}[b]{0.23\textwidth}
    \includegraphics[width=\textwidth]{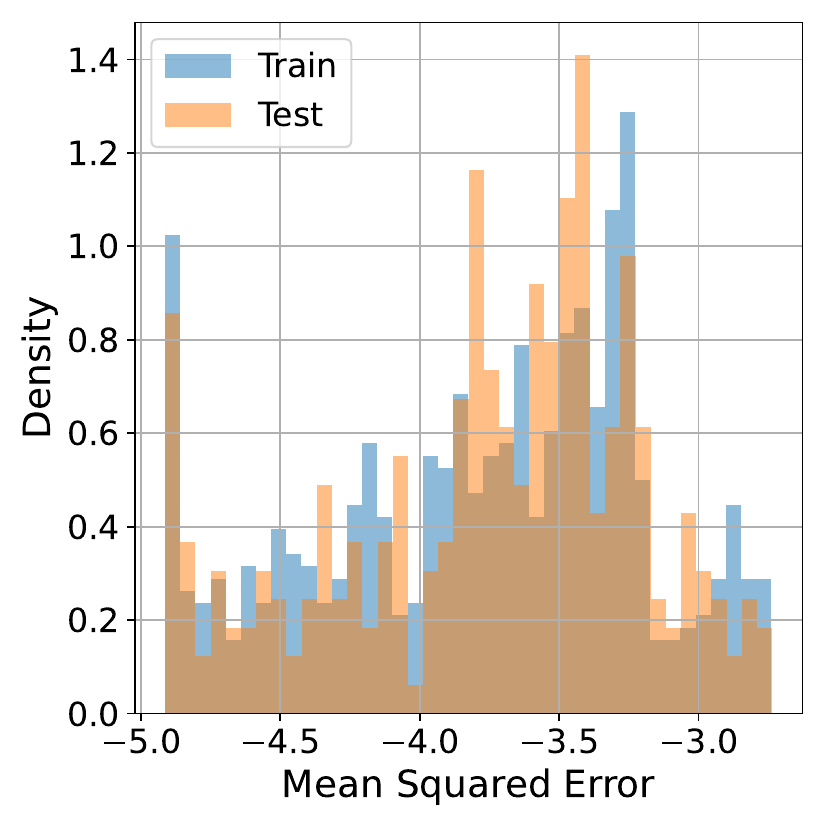}
    \caption{}
  \end{subfigure}
    \begin{subfigure}[b]{0.23\textwidth}
    \includegraphics[width=\textwidth]{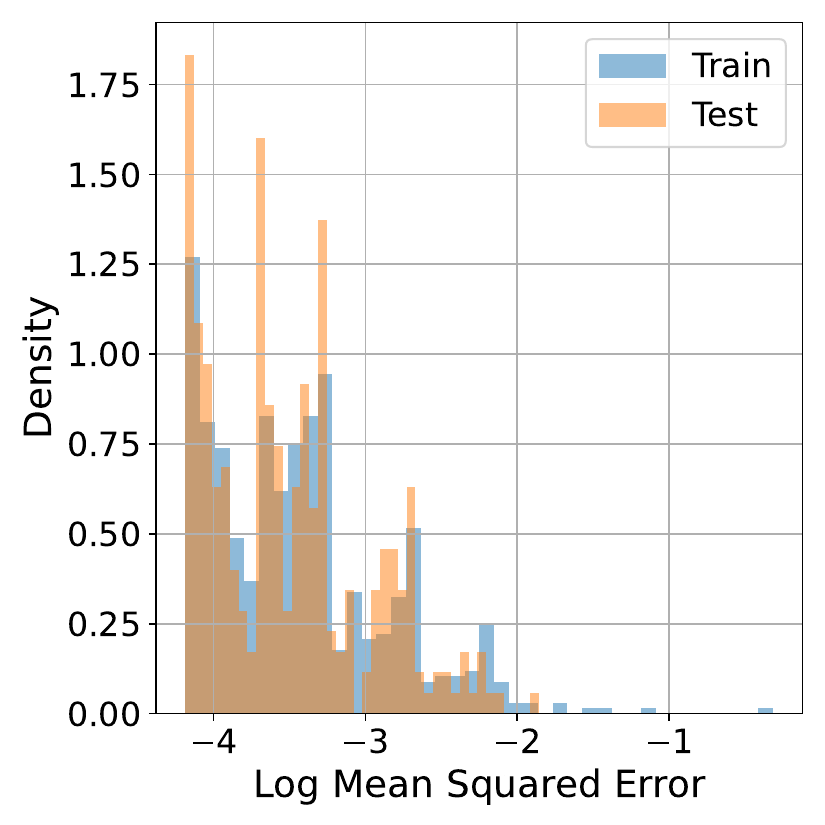}
    \caption{}
  \end{subfigure}
     \begin{subfigure}[b]{0.23\textwidth}
    \includegraphics[width=\textwidth]{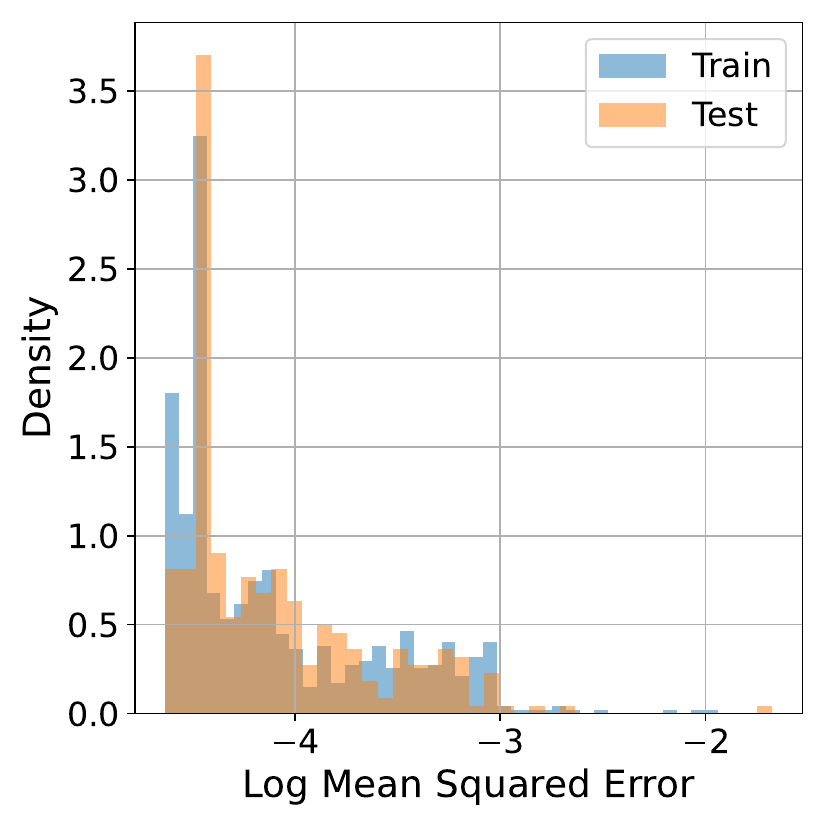}
    \caption{}
  \end{subfigure}
     \begin{subfigure}[b]{0.23\textwidth}
    \includegraphics[width=\textwidth]{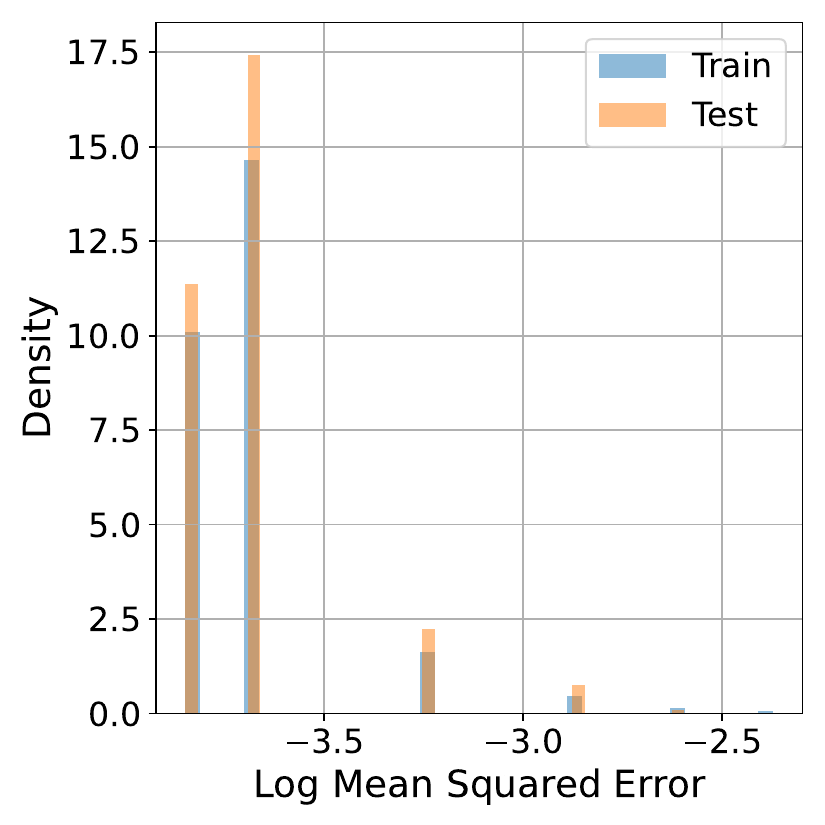}
    \caption{}
  \end{subfigure}
  \caption{
  Example 1 -- Distribution of error between the data and the model built via Algorithm~\ref{algo:SSDL-disc-cont} using neural network basis functions for different input distributions: (a) Uniform, (b) Normal, (c) Gamma, (d) Poisson. The training data includes 700 realizations, while the test data includes 300 realizations.
  }
  \label{fig:ode1d-errDist-disc-algo}
\end{figure}

\Cref{fig:ode1d-stoch-basis,fig:ode1d-det-basis} plot the learned stochastic and deterministic basis functions, respectively. The dots represent the learned basis vectors directly from Algorithm~\ref{algo:SSDL-disc-cont}, and the solid lines represent their fit as neural network basis functions, which is needed for inference on new data. As expected, neural networks adequately approximate the basis vectors; however, there is a discrepancy between them due to training error, for which we set a threshold of $5 \times 10^{-4}$ to stop the training iterations. This threshold hyperparameter controls the balance among approximation error, computation time, and overfitting. Importantly, both the stochastic and deterministic basis functions are data-dependent and distribution-dependent. Data-dependence will be explored later. Regarding distribution-dependence, while it's expected that the stochastic basis functions are dependent on the distribution, it may not be obvious that the deterministic basis functions should also be distribution dependent. But,  intuitively, if we solve \cref{eqn:sde1} deterministically for a set value of $\xi$, the optimal set of basis functions will depend on the selected value of $\xi$. Therefore, it follows that, if we change the distribution $p(\xi)$, the corresponding best deterministic basis will change as well.  This is clearly observed in ~\cref{fig:ode1d-det-basis}.

\begin{figure}[!ht]
  \centering
  \begin{subfigure}[b]{0.4\textwidth}
    \includegraphics[width=\textwidth]{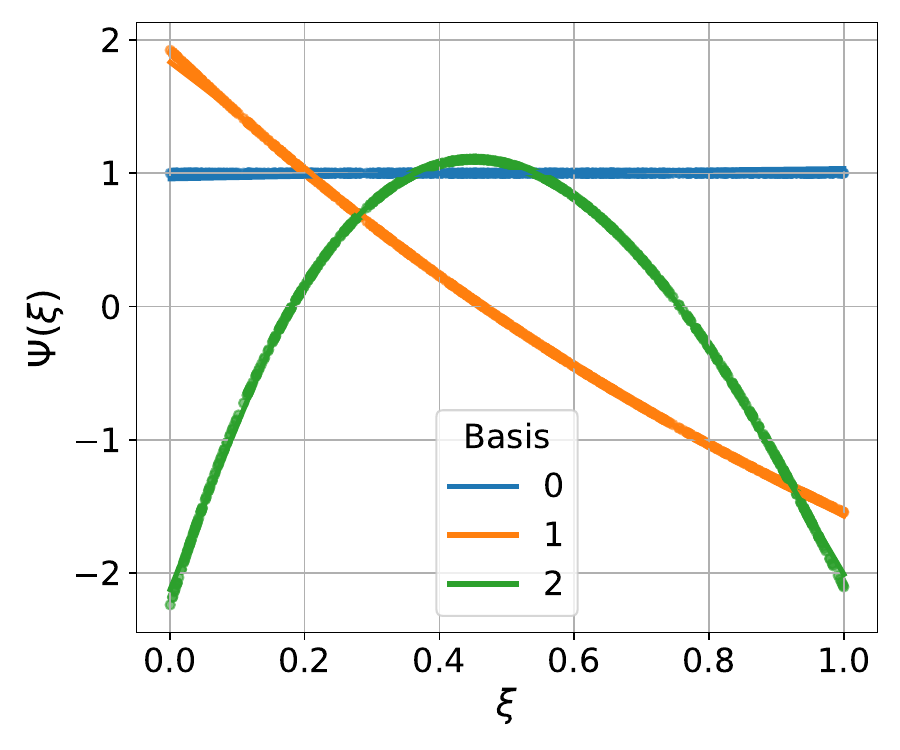}
    \caption{}
  \end{subfigure}
  \begin{subfigure}[b]{0.4\textwidth}
    \includegraphics[width=\textwidth]{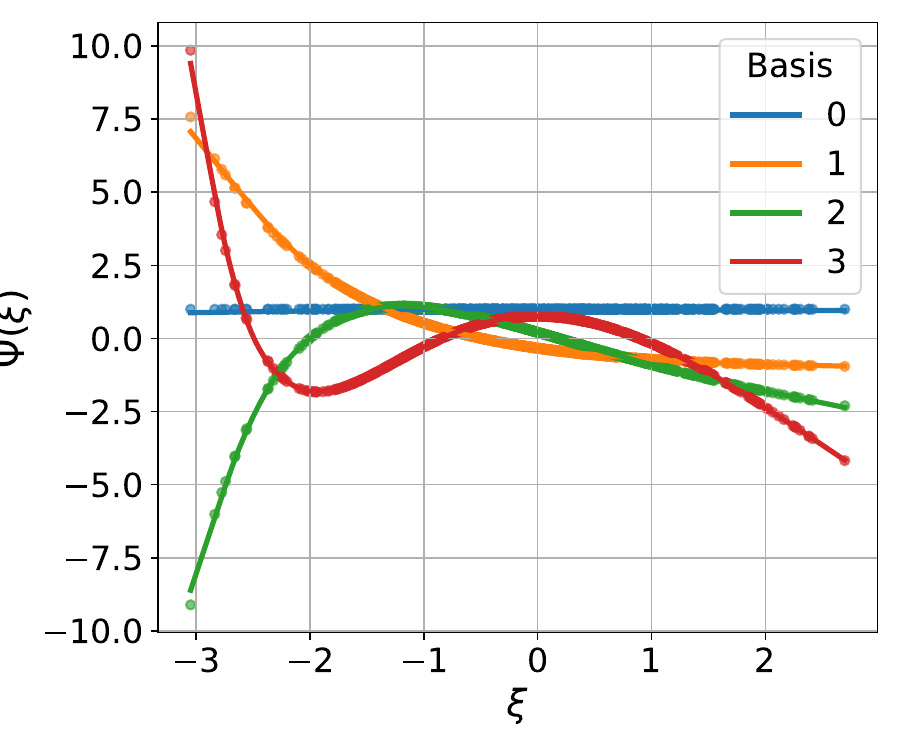}
    \caption{}
  \end{subfigure}
    \begin{subfigure}[b]{0.4\textwidth}
    \includegraphics[width=\textwidth]{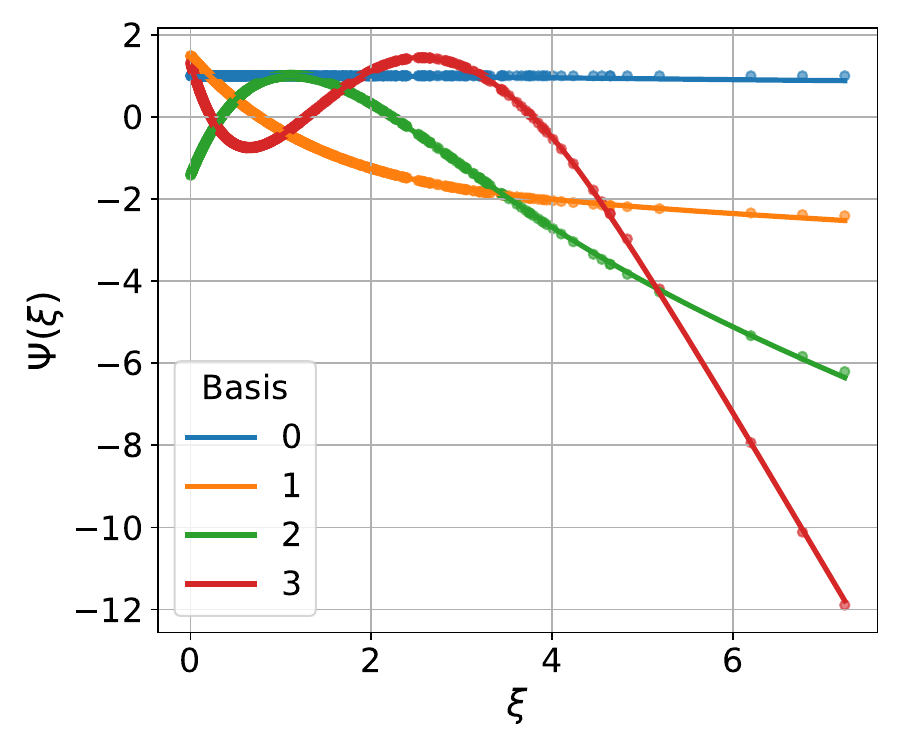}
    \caption{}
  \end{subfigure}
  \begin{subfigure}[b]{0.4\textwidth}
    \includegraphics[width=\textwidth]{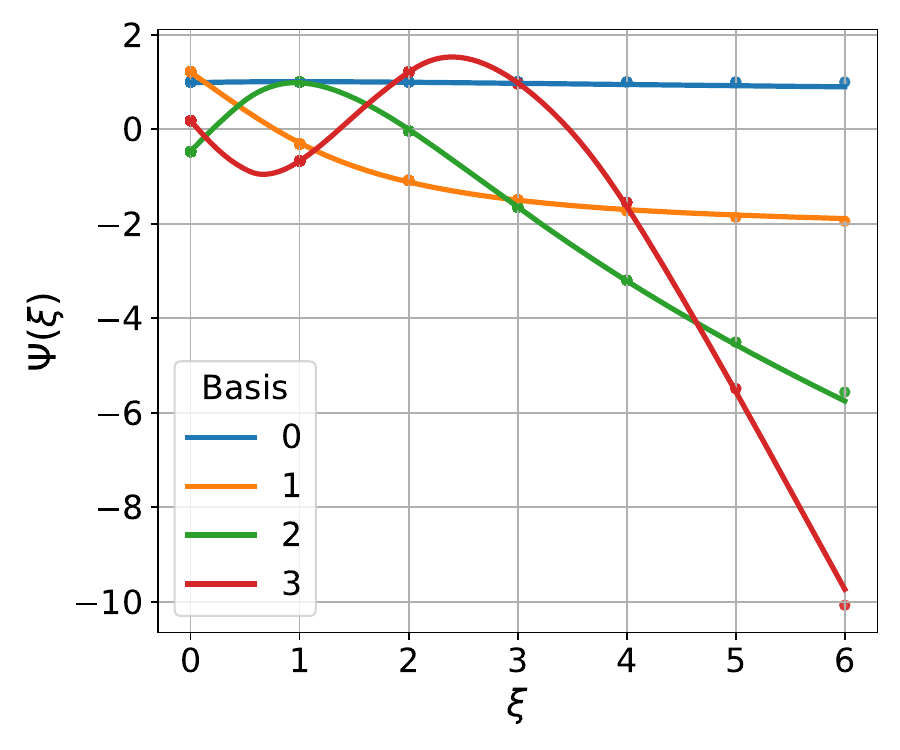}
    \caption{}
  \end{subfigure}
  \caption{
  Example 1 -- Learned stochastic basis functions for data sampled from different input distributions: (a) Uniform, (b) Normal, (c) Gamma, (d) Poisson. The dots represent the basis vectors learned from Algorithm~\ref{algo:SSDL-disc-cont}, while the solid lines correspond to their respective neural network basis functions.
  }
  \label{fig:ode1d-stoch-basis}
\end{figure}

\begin{figure}[!ht]
  \centering
  \begin{subfigure}[b]{0.4\textwidth}
    \includegraphics[width=\textwidth]{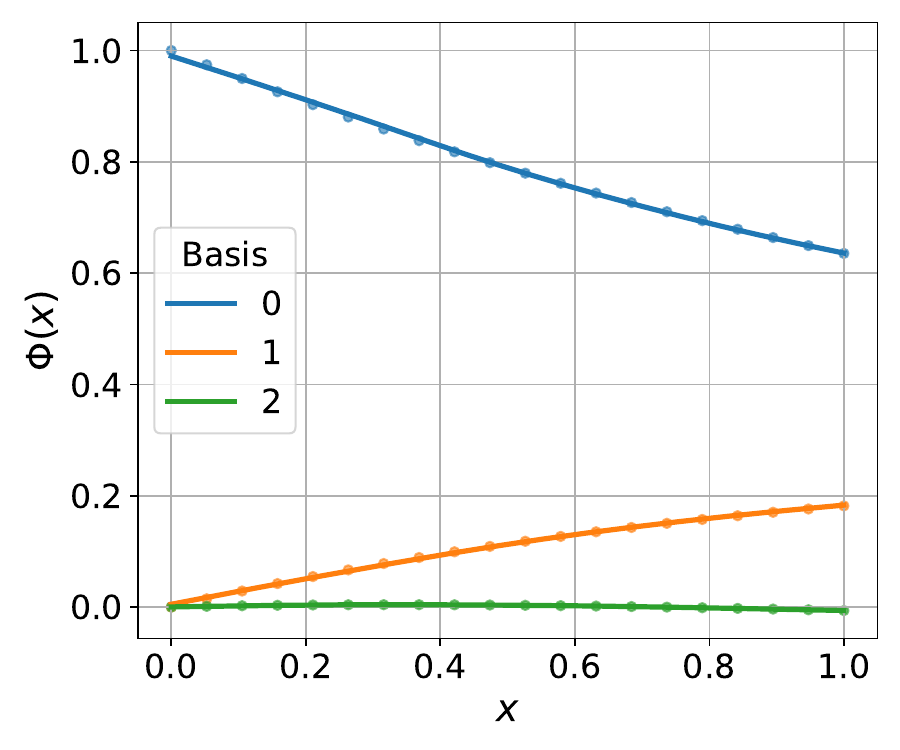}
    \caption{}
  \end{subfigure}
  \begin{subfigure}[b]{0.4\textwidth}
    \includegraphics[width=\textwidth]{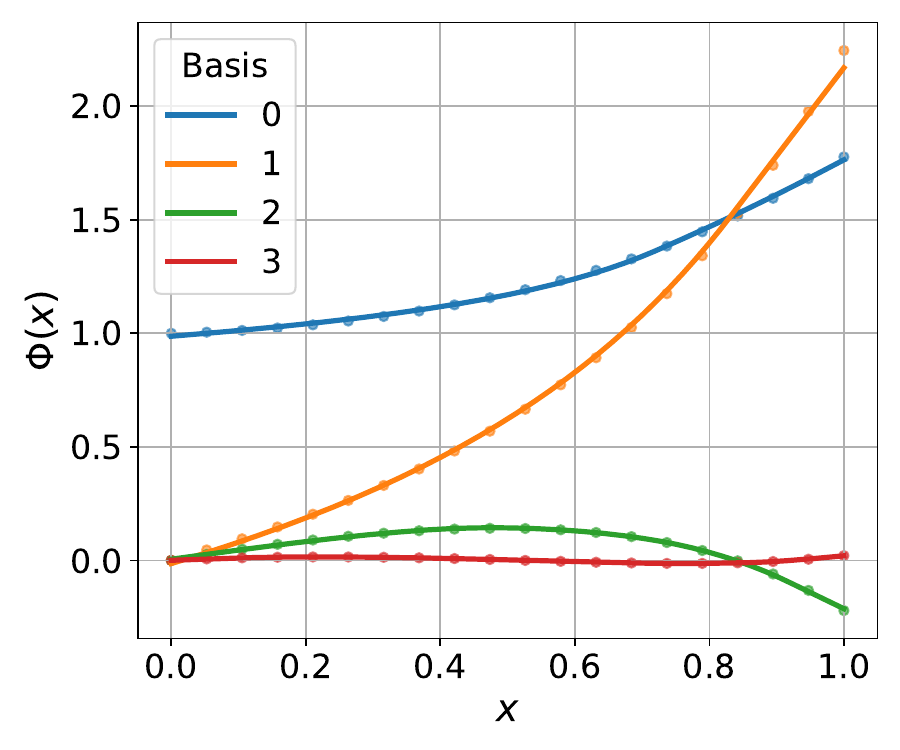}
    \caption{}
  \end{subfigure}
    \begin{subfigure}[b]{0.4\textwidth}
    \includegraphics[width=\textwidth]{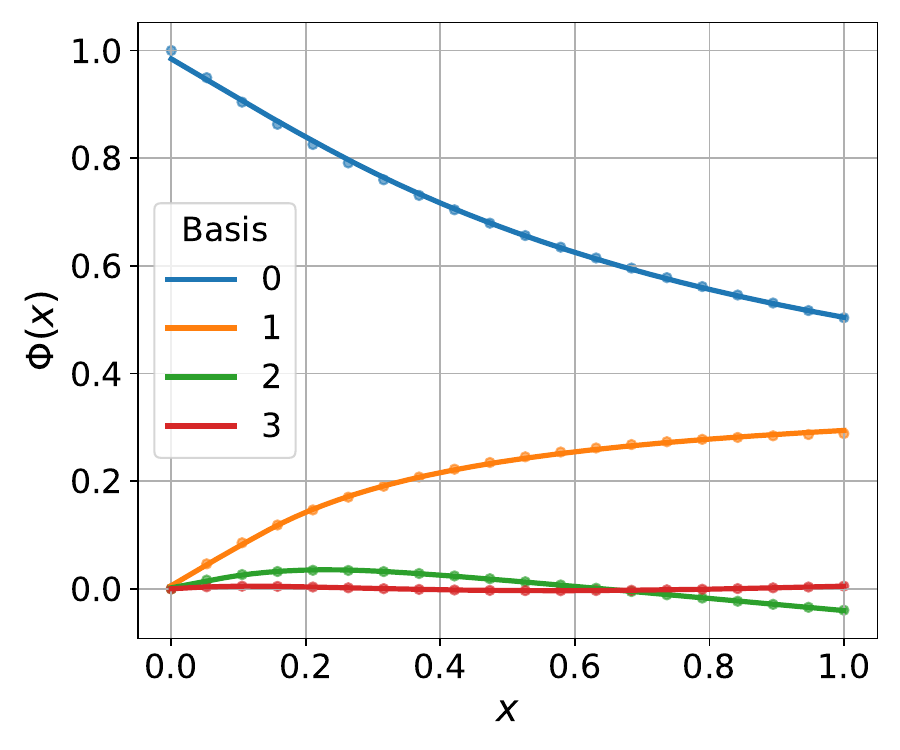}
    \caption{}
  \end{subfigure}
  \begin{subfigure}[b]{0.4\textwidth}
    \includegraphics[width=\textwidth]{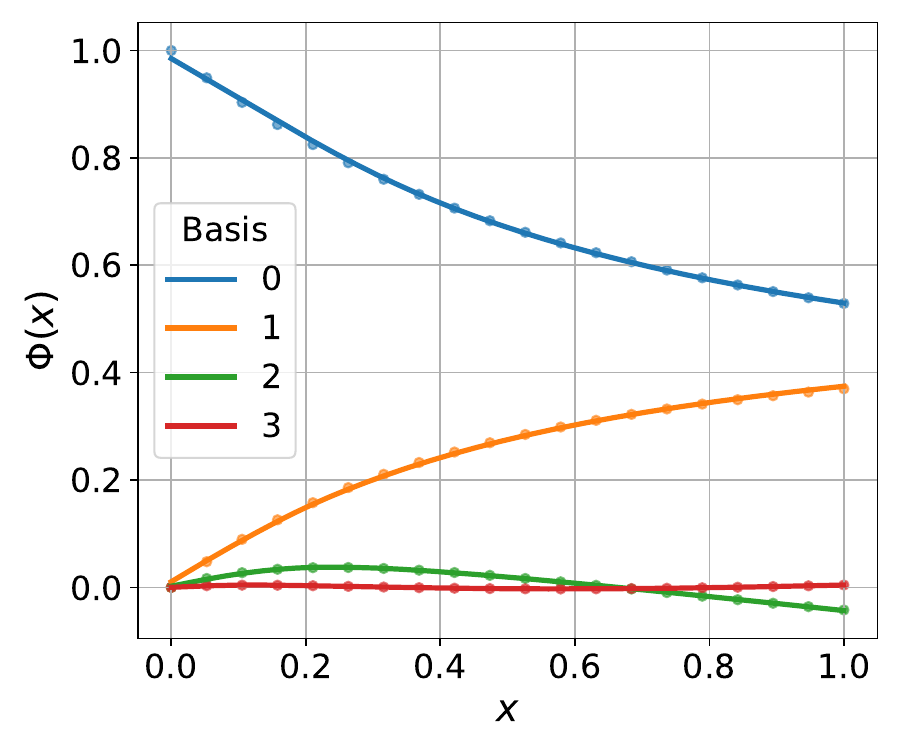}
    \caption{}
  \end{subfigure}
  \caption{
  Example 1 -- Learned deterministic basis functions for data sampled from different input distributions: (a) Uniform, (b) Normal, (c) Gamma, (d) Poisson. The dots represent the basis vectors learned from Algorithm~\ref{algo:SSDL-disc-cont}, while the solid lines correspond to their respective neural network basis functions.
  }
  \label{fig:ode1d-det-basis}
\end{figure}

As mentioned, one of the advantages of a spectral expansion is direct mean and variance estimation (from \cref{eqn:mean_SSNO,eqn:var_SSNO}), without the need for Monte Carlo simulation during inference. We demonstrate this feature of the proposed method for the different input distributions in \cref{fig:ode1d-mean,fig:ode1d-var}. These figures show the mean $\pm$ one standard deviation confidence bounds from 100 repeated trials of the spectral stochastic neural operator along with the `ground truth' estimated by Monte Carlo simulation with $10^5$ simulations. These plots further serve to indirectly check the orthogonality of the learned stochastic basis functions. If they are not orthogonal, the mean and standard deviation estimates from the expansion coefficients would not match the true mean and variance. 
As the results suggest, there is good agreement between the model's estimated mean and standard deviation and the ground truth. However, the error represented by the confidence bounds could be reduced if more data were provided during training.

\begin{figure}[h]
  \centering
  \begin{subfigure}[b]{0.4\textwidth}
    \includegraphics[width=\textwidth]{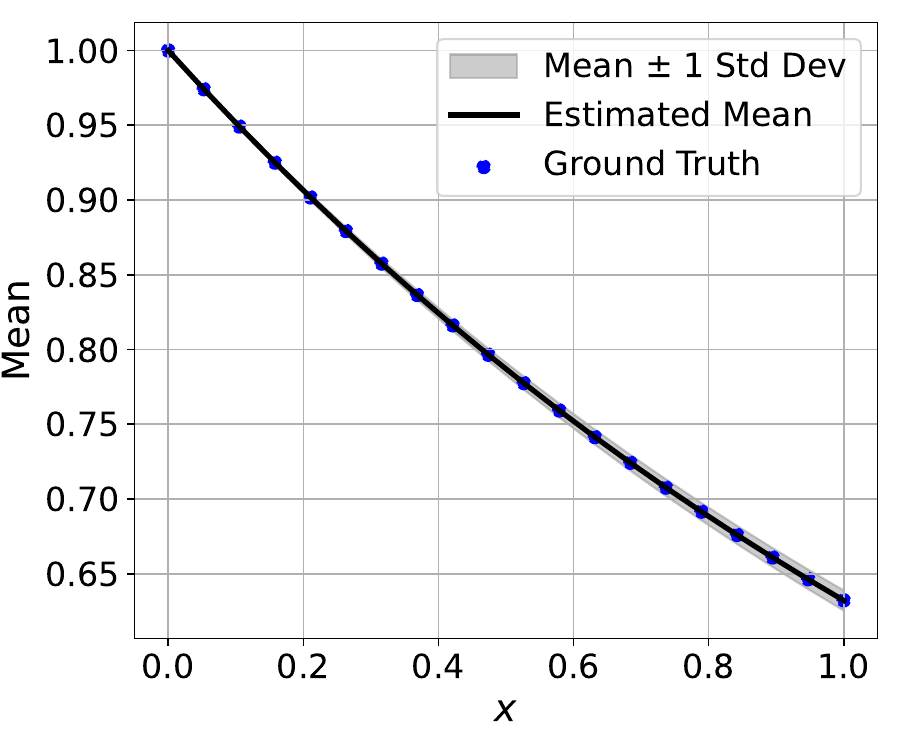}
    \caption{}
  \end{subfigure}
  \begin{subfigure}[b]{0.4\textwidth}
    \includegraphics[width=\textwidth]{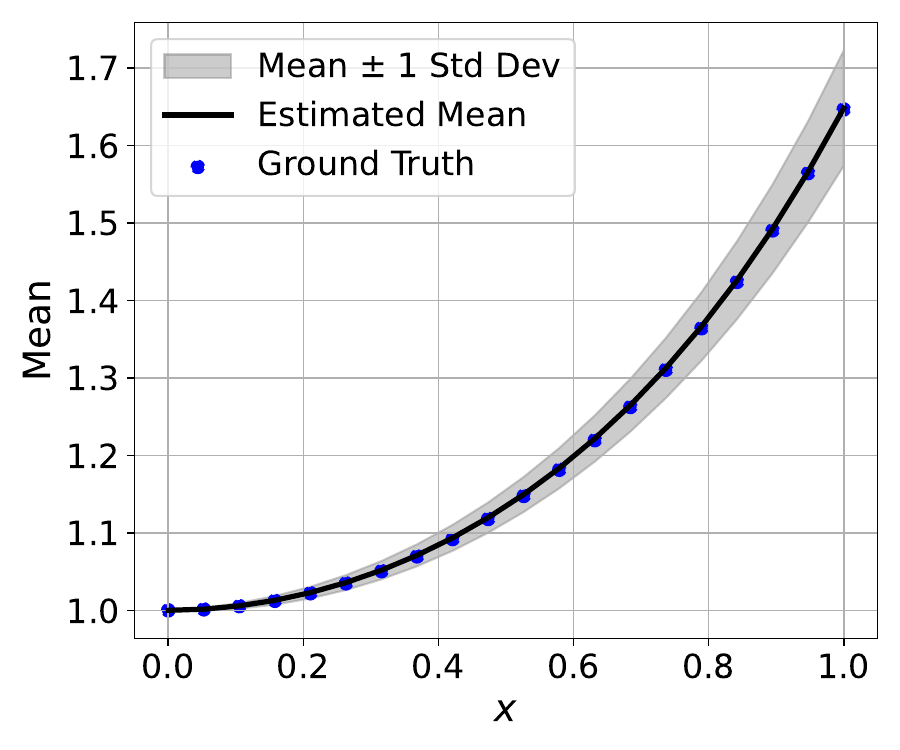}
    \caption{}
  \end{subfigure}
    \begin{subfigure}[b]{0.4\textwidth}
    \includegraphics[width=\textwidth]{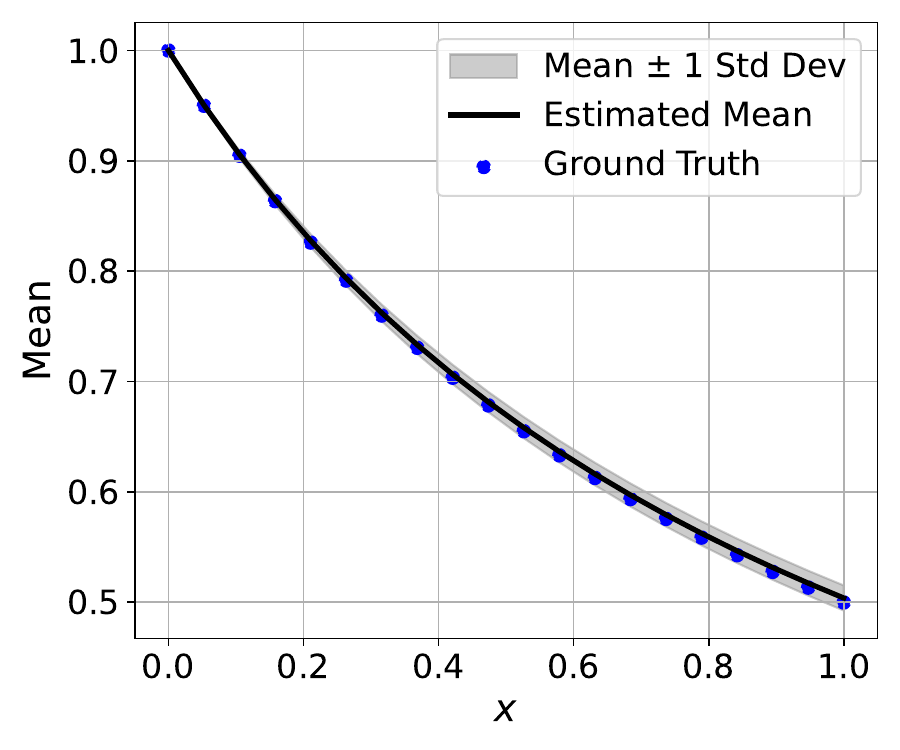}
    \caption{}
  \end{subfigure}
  \begin{subfigure}[b]{0.4\textwidth}
    \includegraphics[width=\textwidth]{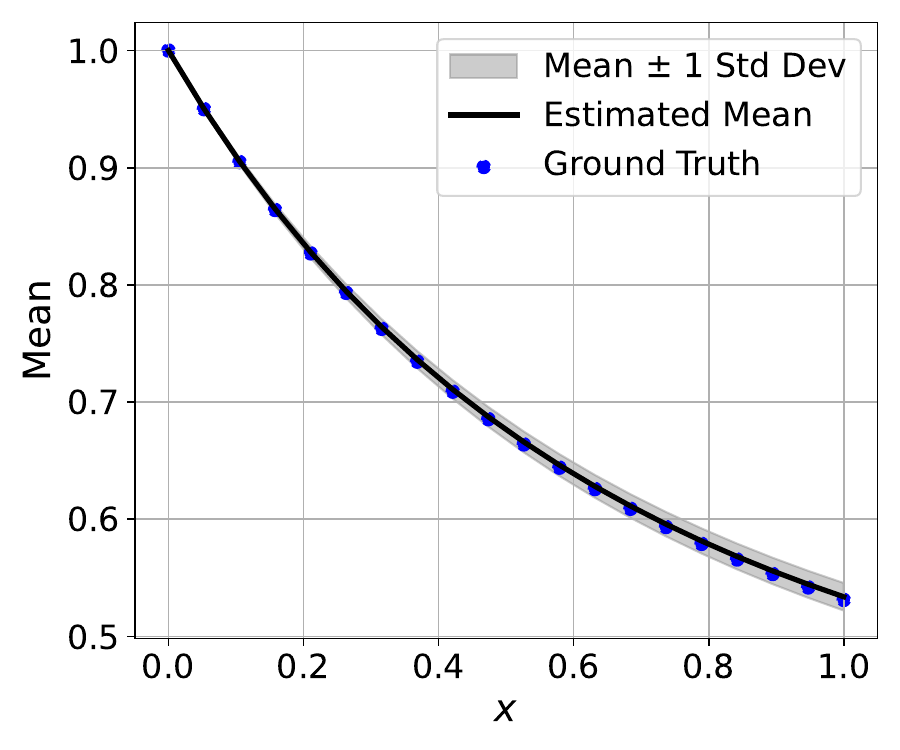}
    \caption{}
  \end{subfigure}
  \caption{
  Example 1 -- Mean field estimate obtained from the first coefficient of the spectral stochastic neural operator for different input distributions: (a) Uniform, (b) Normal, (c) Gamma, (d) Poisson. The plot shows the mean $\pm$ one standard deviation for 100 independent training trials. The ground truth is estimated from Monte Carlo simulation using $10^5$ realizations.
  }
  \label{fig:ode1d-mean}
\end{figure}

\begin{figure}[h]
  \centering
  \begin{subfigure}[b]{0.4\textwidth}
    \includegraphics[width=\textwidth]{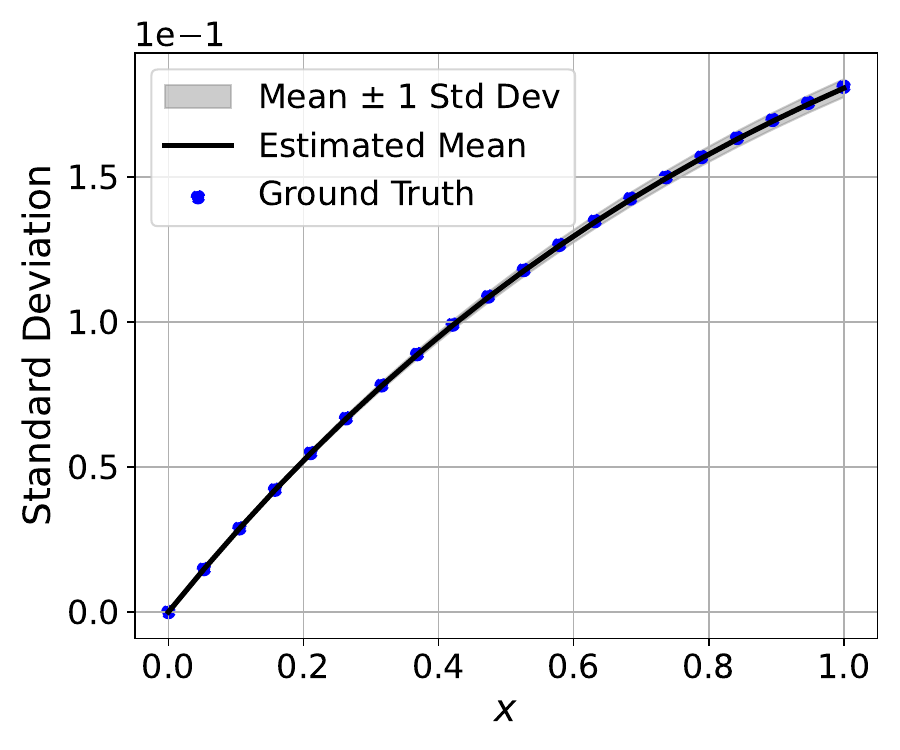}
    \caption{}
  \end{subfigure}
  \begin{subfigure}[b]{0.4\textwidth}
    \includegraphics[width=\textwidth]{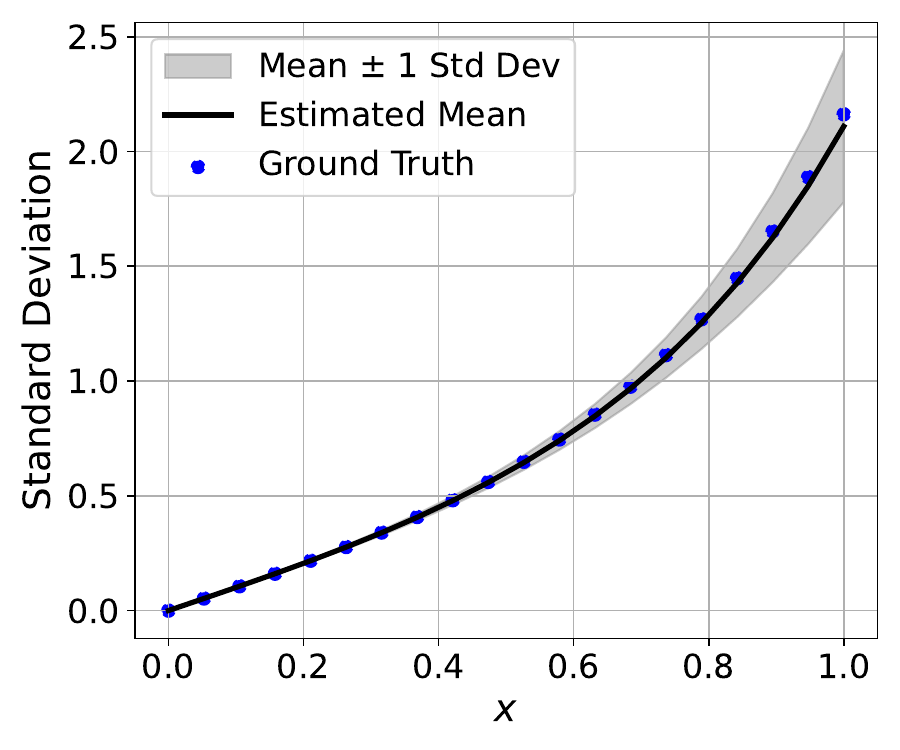}
    \caption{}
  \end{subfigure}
    \begin{subfigure}[b]{0.4\textwidth}
    \includegraphics[width=\textwidth]{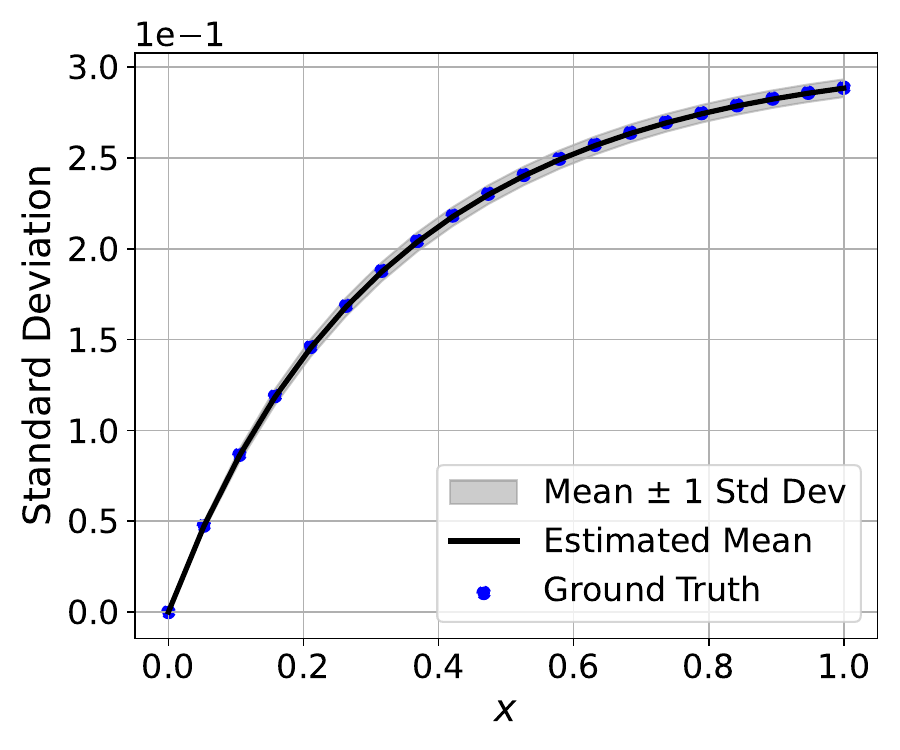}
    \caption{}
  \end{subfigure}
  \begin{subfigure}[b]{0.4\textwidth}
    \includegraphics[width=\textwidth]{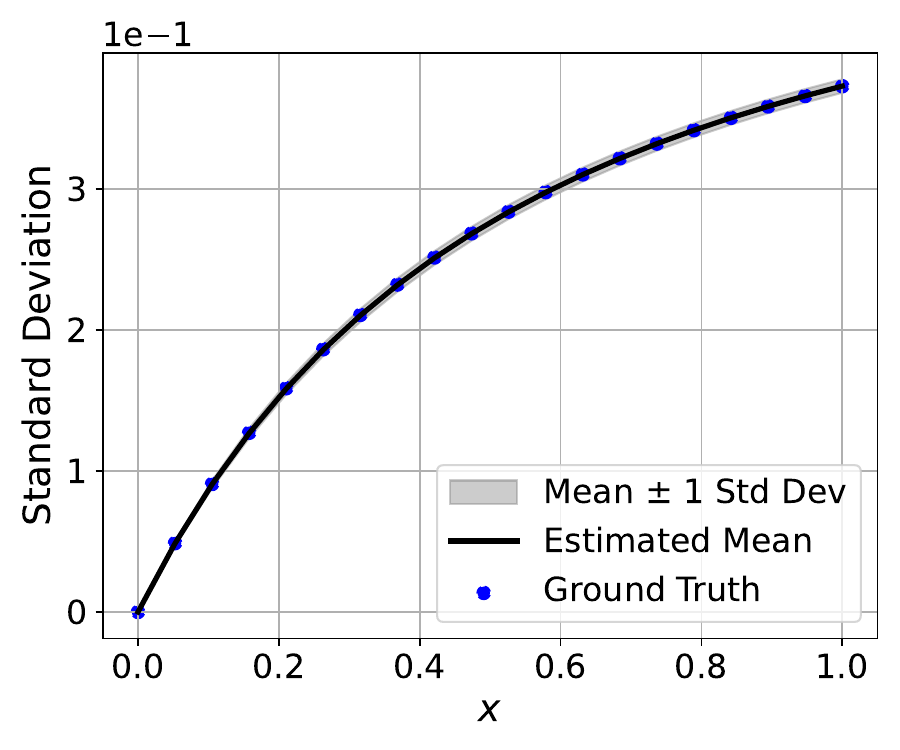}
    \caption{}
  \end{subfigure}
  \caption{
  Example 1 -- Standard Deviation field estimate obtained from the first coefficient of the spectral stochastic neural operator for different input distributions: (a) Uniform, (b) Normal, (c) Gamma, (d) Poisson. The plot shows the mean $\pm$ one standard deviation for 100 independent training trials. The ground truth is estimated from Monte Carlo simulation using $10^5$ realizations.
  }
  \label{fig:ode1d-var}
\end{figure}

\subsection{Example 2: Second order SDE with 1D random variable}
\label{sec:ex-heat1D}

In this problem, we compare the continuous and discrete-continuous  algorithms (Algorithms~\ref{algo:SSDL-cont} and \ref{algo:SSDL-disc-cont}, respectively) and demonstrate the superior performance of the discrete-continuous algorithm. Additionally, we compare the proposed method with the classical PCE-based surrogate model to demonstrate how the proposed method converges with fewer terms than PCE.

We consider the stochastic heat conduction boundary value problem under steady state as follows:
\begin{equation}
    \frac{d}{dx}
    \left(
        \kappa(x, \xi) \frac{d u}{d x}
    \right)
    = \sin(2\pi x), x\in[0, 1], \xi \sim \text{Uniform}[1, 3],
\end{equation}
where $u(x, \xi)$ is the unknown random temperature field under deterministic homogeneous boundary conditions (i.e., $u(0, \xi) = u(1, \xi) = 0 $). The stochasticity arises from the assumed random heat conduction $\kappa(x, \xi) = 1.1 + \cos(x\xi)$. The only input random variable in this process is $\xi \sim \text{Uniform}[1, 3]$. For training and testing purposes, 1000 realizations are generated, with the spatial domain discretized on a one-dimensional grid with 30 equidistant points. Seventy percent of the data are used for training, while the remaining 30\% are used as test data.
%

Under this setup, classical PCE uses Legendre basis functions. 
However, many other basis functions, such as Fourier or rescaled Chebyshev basis functions, can also be constructed to be orthogonal under a uniform distribution (albeit not optimal in the sense of the Weiner-Askey scheme\cite{xiu2002wiener}). Given that multiple sets of orthogonal basis functions can be established \textit{a priori}, it seems more reasonable to suggest that orthogonality should not be the sole criterion; selection of the appropriate basis should also depend on the actual stochastic operator. In this regard, the proposed approach offers greater flexibility (and, as we'll show, a  more compact representation) than methods that prescribe the basis \textit{a priori}.
The learned deterministic and stochastic basis functions from Algorithm~\ref{algo:SSDL-disc-cont} are plotted in \cref{fig:heat1d-basis-disc-algo}. The neural network basis functions directly learned from the fully continuous approach in Algorithm~\ref{algo:SSDL-cont} are plotted in \cref{fig:heat1d-basis-cont-algo}. One can observe an almost one-to-one match between the dominant basis functions (zeroth and first terms) between these two algorithms, up to a sign difference. However, the second stochastic basis from Algorithm~\ref{algo:SSDL-cont} does not match the optimal solution from Algorithm~\ref{algo:SSDL-disc-cont}. There are two reasons for this: 1) the second deterministic basis is nearly zero everywhere in the domain, meaning the second term, in comparison to the other terms, does not contribute much; and 2) the optimization in the fully continuous version becomes more difficult for higher-order terms due to the complexity of training two neural networks simultaneously in a multiplicative manner. To further assess the second point, we compare the convergence of these two algorithms with increasing number of terms in \cref{fig:heat1d-err-comp}. Clearly, the rate of convergence of the fully continuous method decays for higher-order terms.  Although theoretically the fully continuous algorithm should converge to the optimal solution, in practice, this highly nonlinear and non-convex optimization is challenging from an optimization perspective. Hence, the discrete-continuous algorithm offers much more flexibility, as it learns the neural network basis in an offline manner and as a post-processing step.

\begin{figure}[!ht]
  \centering
  \begin{subfigure}[b]{0.4\textwidth}
    \includegraphics[width=\textwidth]{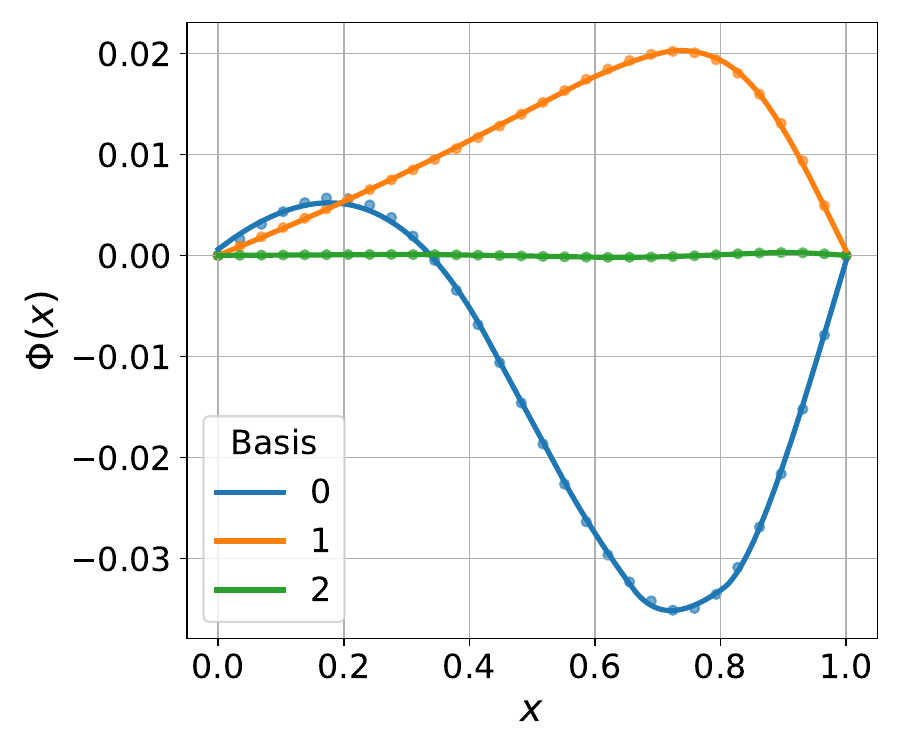}
    \caption{}
  \end{subfigure}
    \begin{subfigure}[b]{0.4\textwidth}
    \includegraphics[width=\textwidth]{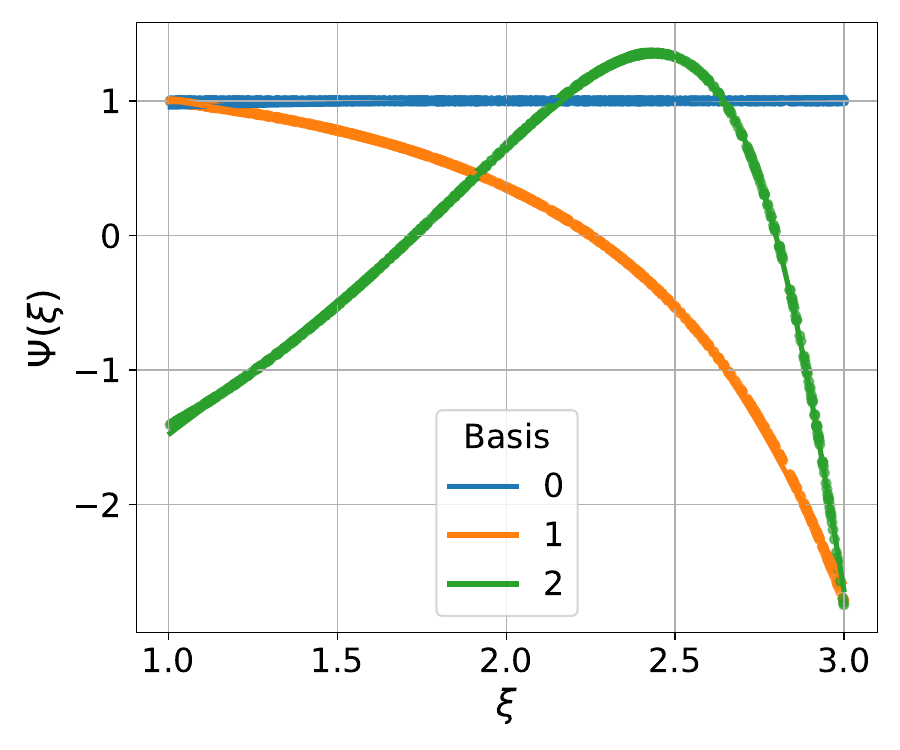}
    \caption{}
  \end{subfigure}
  \caption{
  Example 2 -- Learned (a) deterministic and (b) stochastic basis functions. The dots represent the basis functions learned from Algorithm~\ref{algo:SSDL-disc-cont}, while the solid lines correspond to their corresponding neural network basis functions.
  }
  \label{fig:heat1d-basis-disc-algo}
\end{figure}

\begin{figure}[!ht]
  \centering
  \begin{subfigure}[b]{0.4\textwidth}
    \includegraphics[width=\textwidth]{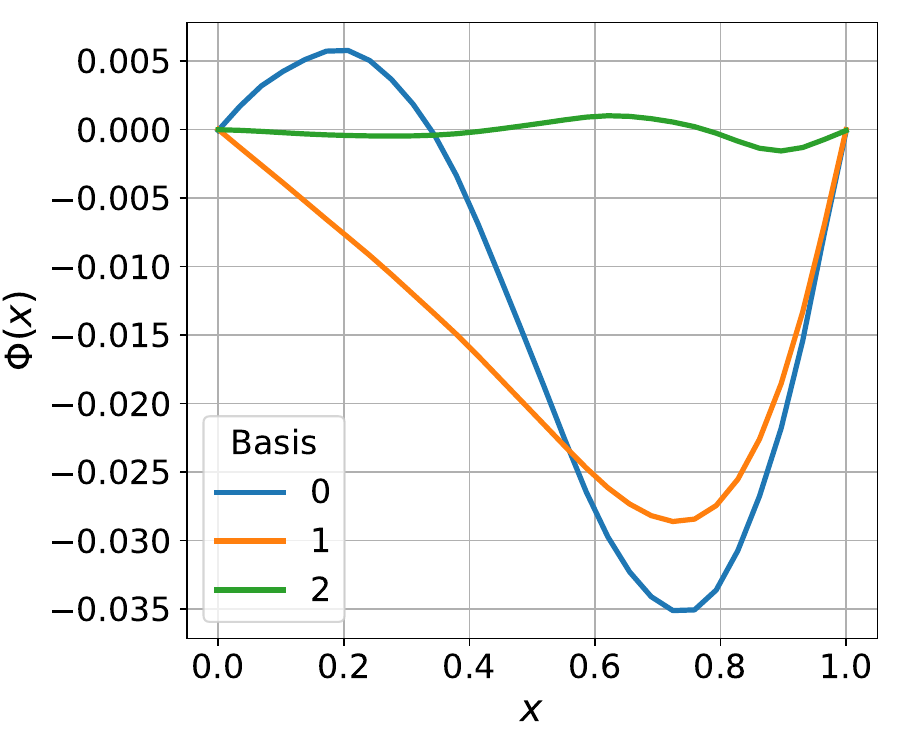}
    \caption{}
  \end{subfigure}
    \begin{subfigure}[b]{0.4\textwidth}
    \includegraphics[width=\textwidth]{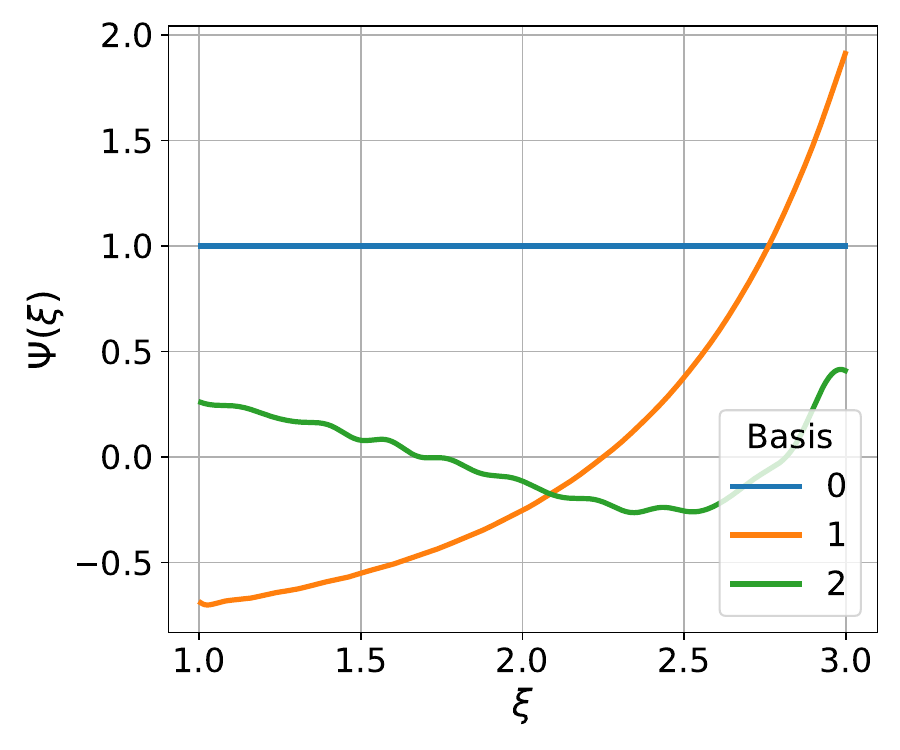}
    \caption{}
  \end{subfigure}
  \caption{
  Example 2 --- Learned (a) deterministic and (b) stochastic basis functions using Algorithm~\ref{algo:SSDL-cont}.
  }
  \label{fig:heat1d-basis-cont-algo}
\end{figure}

\begin{figure}[h]
  \centering
    \includegraphics[width=0.35\textwidth]{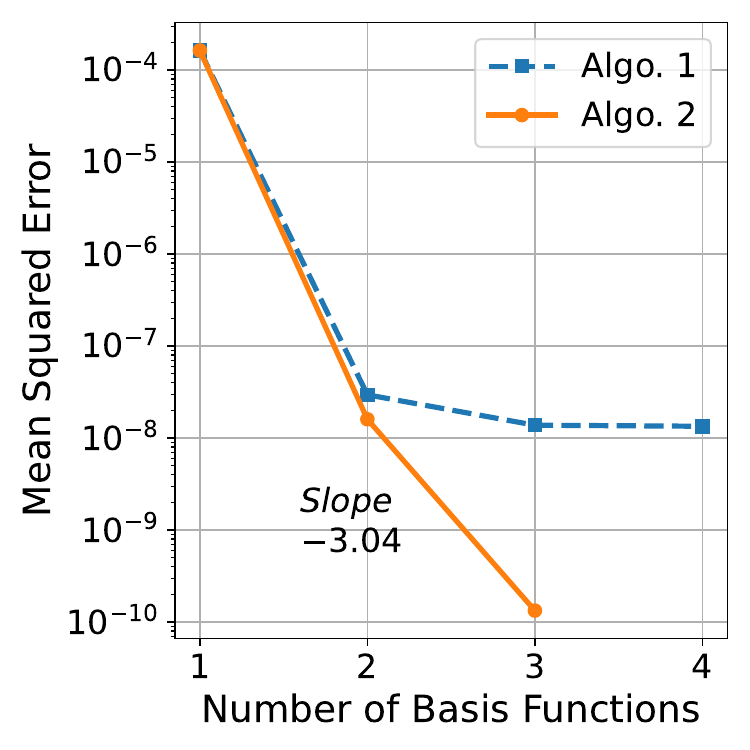}
  \caption{
  Example 2 -- Mean squared error between the data and the learned spectral expansion using 
  Algorithm~\ref{algo:SSDL-cont} (dashed line) and Algorithm~\ref{algo:SSDL-disc-cont} (solid line).
  }
  \label{fig:heat1d-err-comp} 
\end{figure}

We now explore how many PCE terms are needed to replicate the same basis functions, assuming the identified basis functions are correct and optimal. \Cref{fig:heat1d-pce-basis-disc-algo} shows the fits of Legendre polynomials of different orders (i.e., 2, 4, and 8) to the second and third identified stochastic basis functions. 
To achieve an accuracy threshold of \(\sim1 \times 10^{-6}\) for this simple one-dimensional problem, more than 8 Legendre polynomials are required, whereas the proposed method only requires two neural basis functions. This is expected, as neural network representations are much more flexible. This becomes especially important in higher dimensions where PCE requires tensor product bases (unlike neural chaos) and compact basis sets are essential.

\begin{figure}[!ht]
  \centering
  \begin{subfigure}[b]{0.7\textwidth}
    \includegraphics[width=\textwidth]{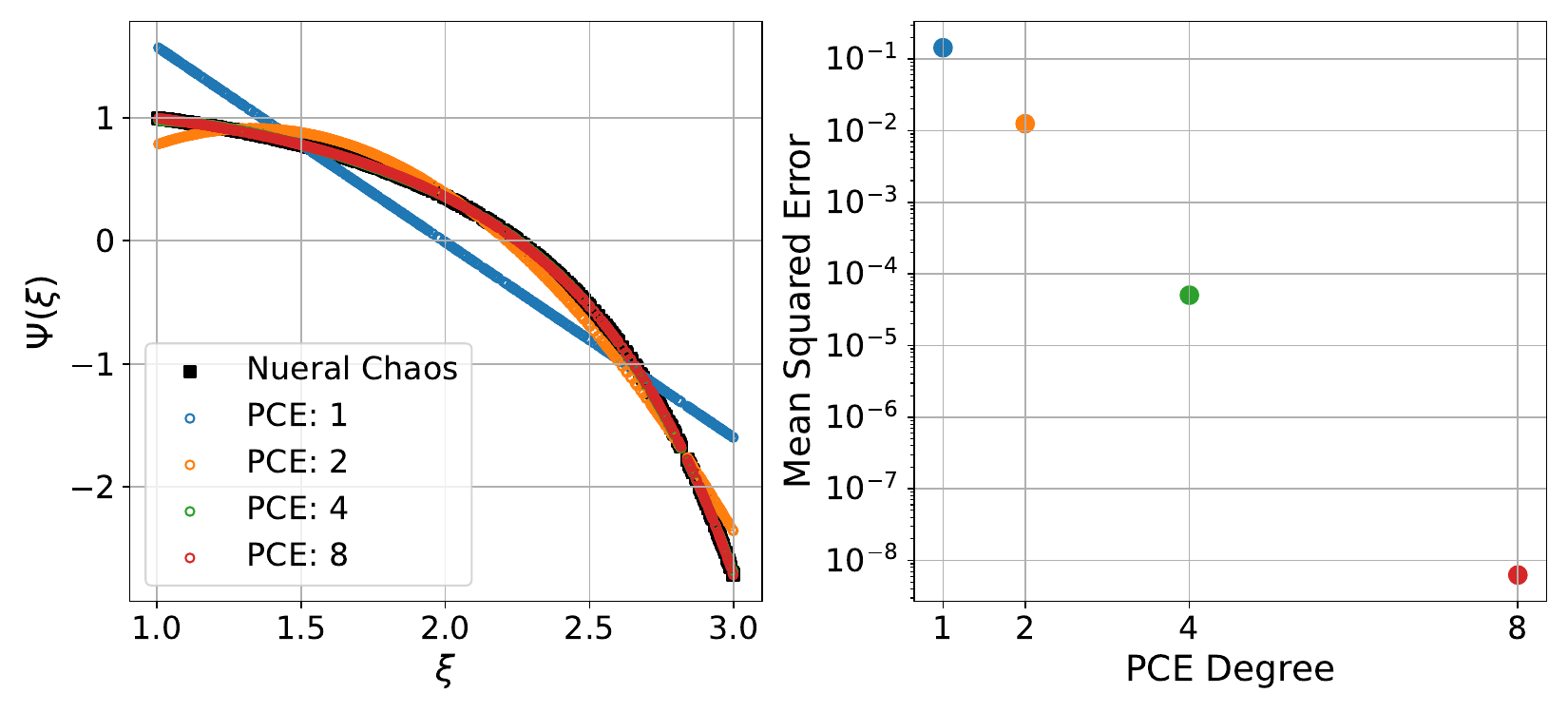}
    \caption{}
  \end{subfigure}
    \begin{subfigure}[b]{0.7\textwidth}
    \includegraphics[width=\textwidth]{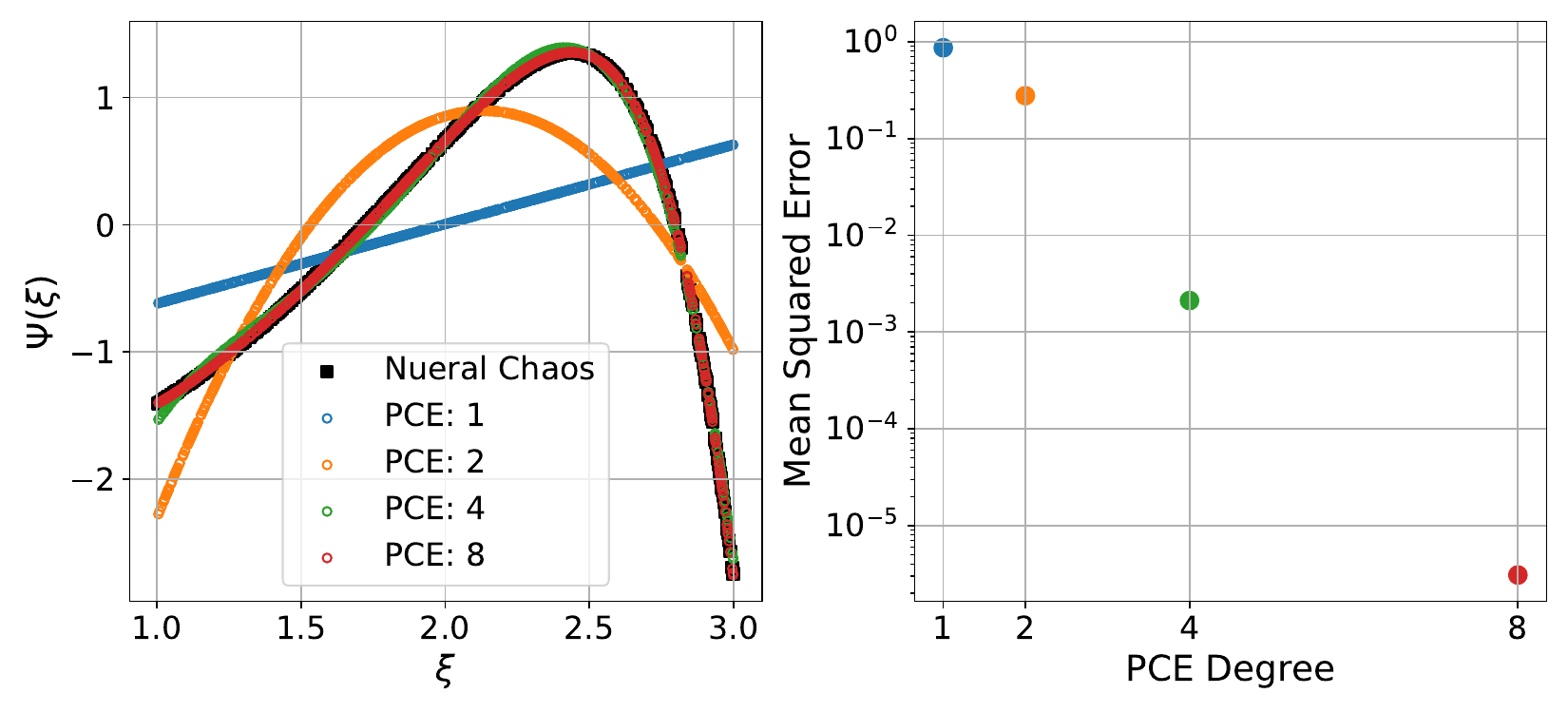}
    \caption{}
  \end{subfigure}
  \caption{
  Example 2 -- Approximation error of the learned neural chaos basis functions by Legendre polynomials: (a) second stochastic basis function, (b) third stochastic basis function. The left plot shows the functions, while the right plot indicates the MSE based on the number of terms used. In the left plot, points with solid square markers represent the learned basis from the proposed method, while circular dots in different colors correspond to different PCEs with varying numbers of terms.
  }
  \label{fig:heat1d-pce-basis-disc-algo}
\end{figure}

\subsection{Example 3: Fourth order SDE with 7D random variables}
\label{sec:ex-beam}
In this example, we consider a random vector input of higher dimensions. This problem setup is adopted from \cite{sharma2024physics}.

The governing equation for deflection of a beam of length $L=10$m under the Euler-Bernoulli simplification is as follows:
\begin{equation}
    \frac{d^2}{dx^2}
    \left(
        K(x,\boldsymbol{\xi}) \frac{d^2 u}{d x^2}
    \right)
    = -0.005, x\in[0, 10],
\end{equation}
where $u(x, \boldsymbol{\xi})$ is the unknown beam deflection under the deterministic simply supported boundary conditions ($u(0, \boldsymbol{\xi}) = M(0, \boldsymbol{\xi}) = u(10, \boldsymbol{\xi}) = M(10, \boldsymbol{\xi})$) and the bending moment $M(x, \boldsymbol{\xi})$ is defined as follows:
\begin{equation}
    M(x, \boldsymbol{\xi}) = - K(x, \boldsymbol{\xi}) \frac{d^2 u}{d x^2}.
\end{equation}
Here, we assume the only source of stochasticity in the deflection field is attributed to the stiffness random field $K(x,\boldsymbol{\xi})$, which is modeled as a Gaussian random process $K\sim \mathcal{GP}(8, \text{Cov}(x_1, x_2;l_c))$, with mean value $\mu_K=8$GPa and radial basis function covariance defined as follows:
\begin{equation}
    \text{Cov}(x_1, x_2;l_c)) = \exp
    \left(
        -\frac{(x_1 - x_2)^2}{2 l_c^2}
    \right),
\end{equation}
where its characteristic length scale is $l_c=2$m. One hundred representative samples of this random field, with the mean removed, are plotted in \cref{fig:beam1d-input-data}(a), and the decay of the eigenvalues of the covariance matrix computed from 1000 samples is depicted in \cref{fig:beam1d-input-data}(b), illustrating that the effective dimensionality of the random field can be set equal to 7. 
The Gaussian random field is discretized and simulated using the Karhunen-Loève (KL) expansion~\cite{phoon2002simulation} with seven independent standard normal random variables.

\begin{figure}[!ht]
  \centering
  \begin{subfigure}[b]{0.4\textwidth}
    \includegraphics[width=\textwidth]{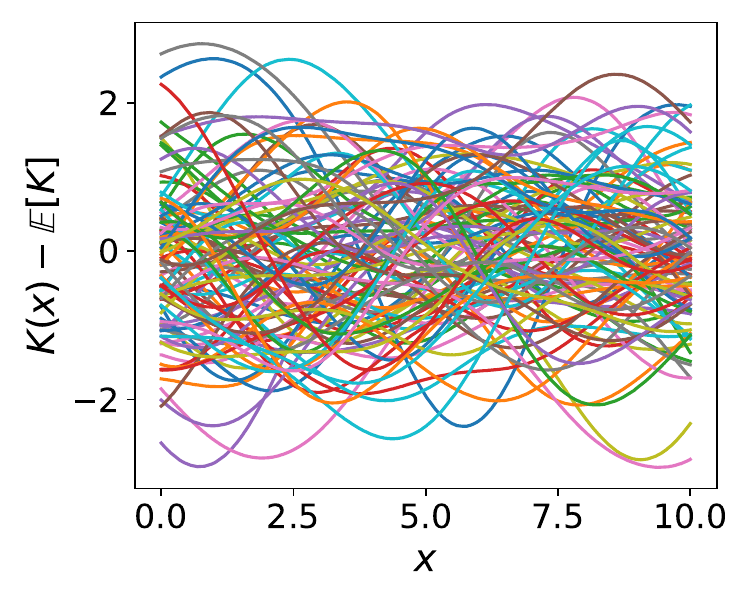}
    \caption{}
  \end{subfigure}
    \begin{subfigure}[b]{0.4\textwidth}
    \includegraphics[width=\textwidth]{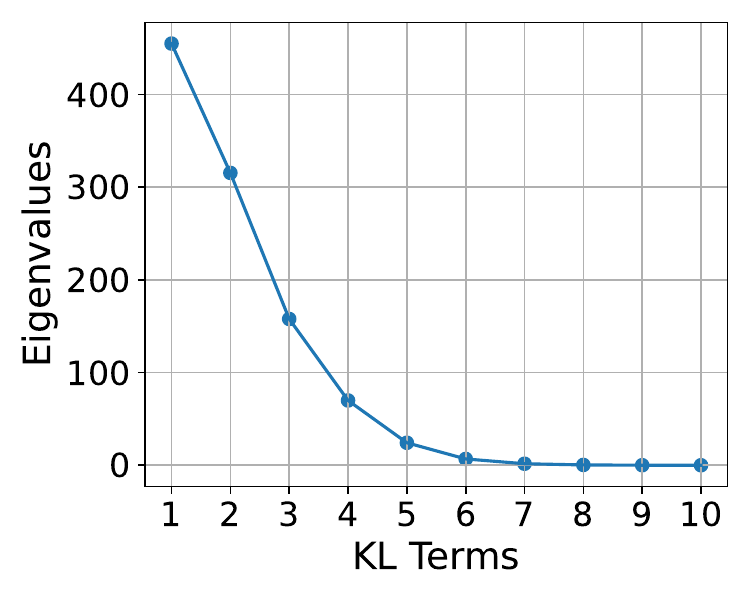}
    \caption{}
  \end{subfigure}
  \caption{
  Example 3 -- (a) One hundred sample realizations of the beam stiffness random field, and (b) Eigenvalue decay of the covariance matrix from 1000 samples of the stiffness random field.
  }
  \label{fig:beam1d-input-data}
\end{figure}

One thousand realizations of the stiffness random fields are simulated and 700 are used for training, while the remaining 300 are used as test data, unseen during training. The one-dimensional spatial domain is discretized into 51 equidistant points.
The MSE convergence rate of the proposed algorithm with respect to the number of identified bases is shown in \cref{fig:beam1d-err-disc-algo}(a-b). In \cref{fig:beam1d-err-disc-algo}(a), the convergence of the discrete algorithm on the training data is illustrated, while \cref{fig:beam1d-err-disc-algo}(b) shows the convergence from both the training and test data after learning the continuous neural network basis functions showing modest training loss.  
This discrepancy can be reduced by training the neural network for longer or by making the network more expressive. However, this may affect their generalization capability for unseen data and increase the test error, reflecting the bias-variance tradeoff. 
In \cref{fig:beam1d-err-disc-algo}(b), we observe strong generalization capability up to the third term, where the training and test errors are nearly identical. However, adding a fourth term increases the generalization gap. This is expected because increasing model complexity also tends to increase the generalization gap. The distribution of errors between the proposed model predictions (with four terms), both in training and testing, are plotted in \cref{fig:beam1d-err-disc-algo}(c). 
Mild overfitting is evident but the errors remain sufficiently small that prediction accuracy is satisfactory.

\begin{figure}[!ht]
  \centering
  \begin{subfigure}[b]{0.33\textwidth}
    \includegraphics[width=\textwidth]{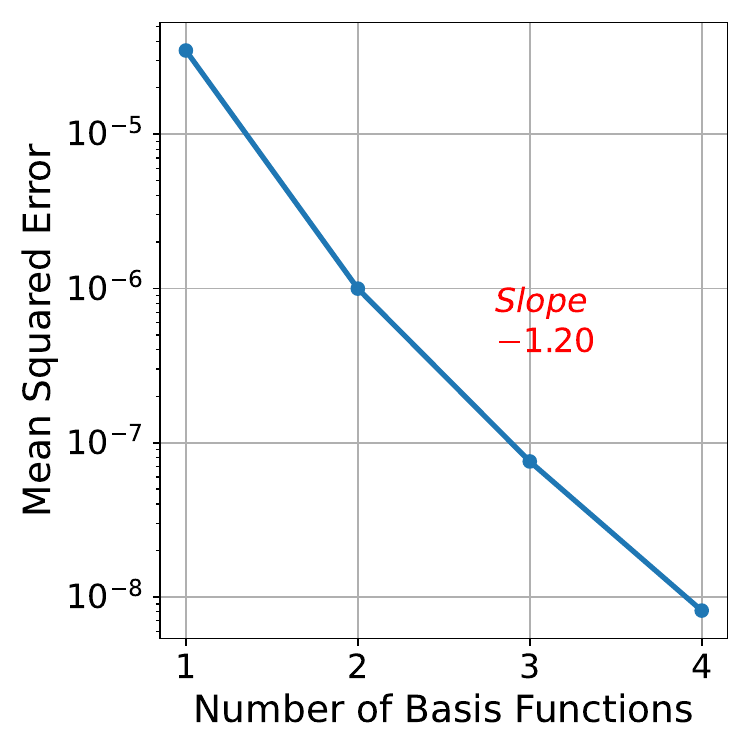}
    \caption{}
  \end{subfigure}
  \begin{subfigure}[b]{0.33\textwidth}
    \includegraphics[width=\textwidth]{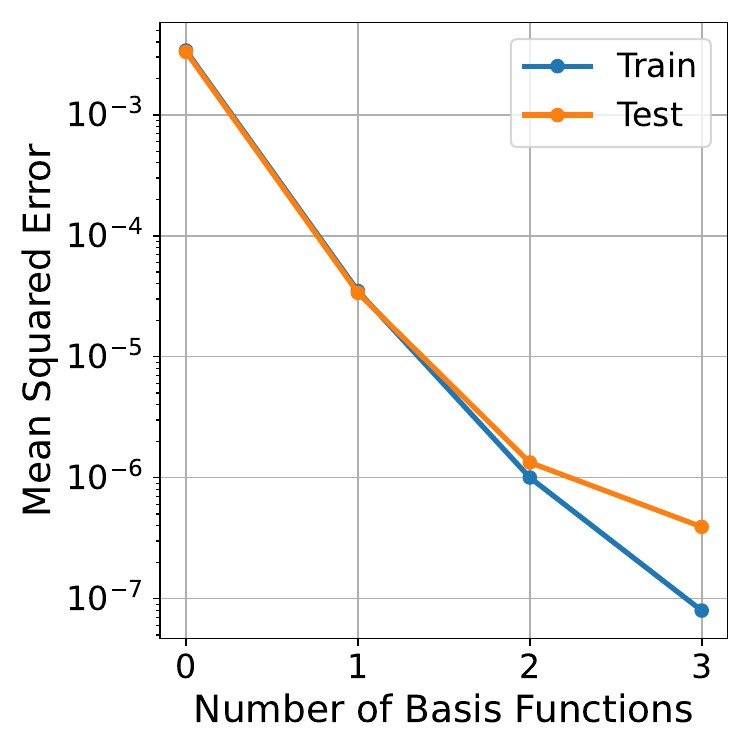}
    \caption{}
  \end{subfigure}
    \begin{subfigure}[b]{0.33\textwidth}
    \includegraphics[width=\textwidth]{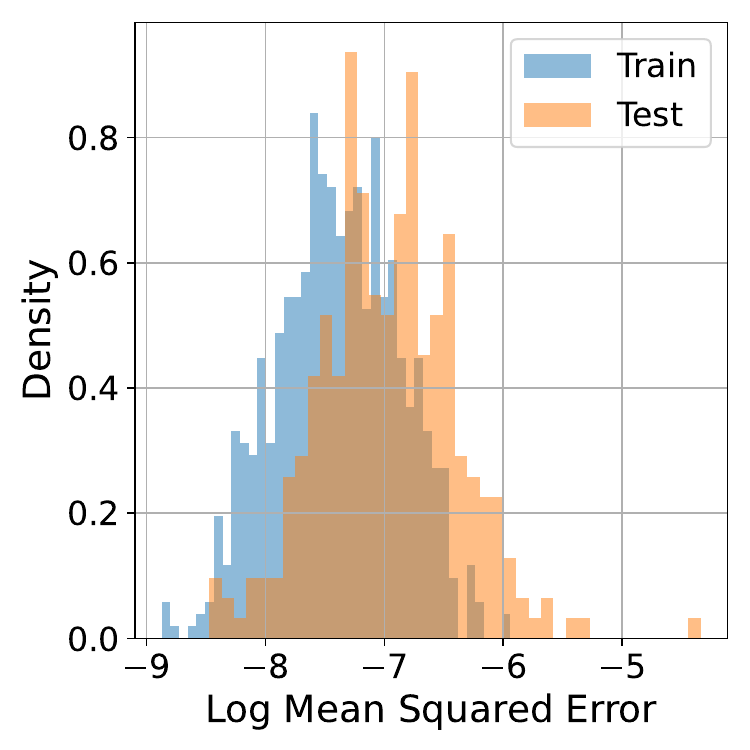}
    \caption{}
  \end{subfigure}
  \caption{
    Example 3 -- (a) Mean squared error between the data and the learned discrete spectral expansion via Algorithm~\ref{algo:SSDL-disc-cont} for increasing number of basis functions. The slope indicates the rate of error decay on a logarithmic scale. (b) The same error after training the continuous neural network basis functions.  (c) The distribution of error between the data and the learned model for training and test data.
  }
  \label{fig:beam1d-err-disc-algo}
\end{figure}

Since the stochastic basis functions are seven-dimensional (they do not employ tensor products of one-dimensional basis functions), it is not possible to plot these learned stochastic basis functions usefully in 3D. The learned deterministic basis functions are shown in \cref{fig:beam1d-det-basis-disc-algo}. Note that most of the contribution comes from the zeroth and first basis functions, as the scale of the other bases is very small compared to these basis functions. As expected, these learned basis functions resemble the expected scaled bending modes of the beam structure.

\begin{figure}[!ht]
  \centering
    \includegraphics[width=.4\textwidth]{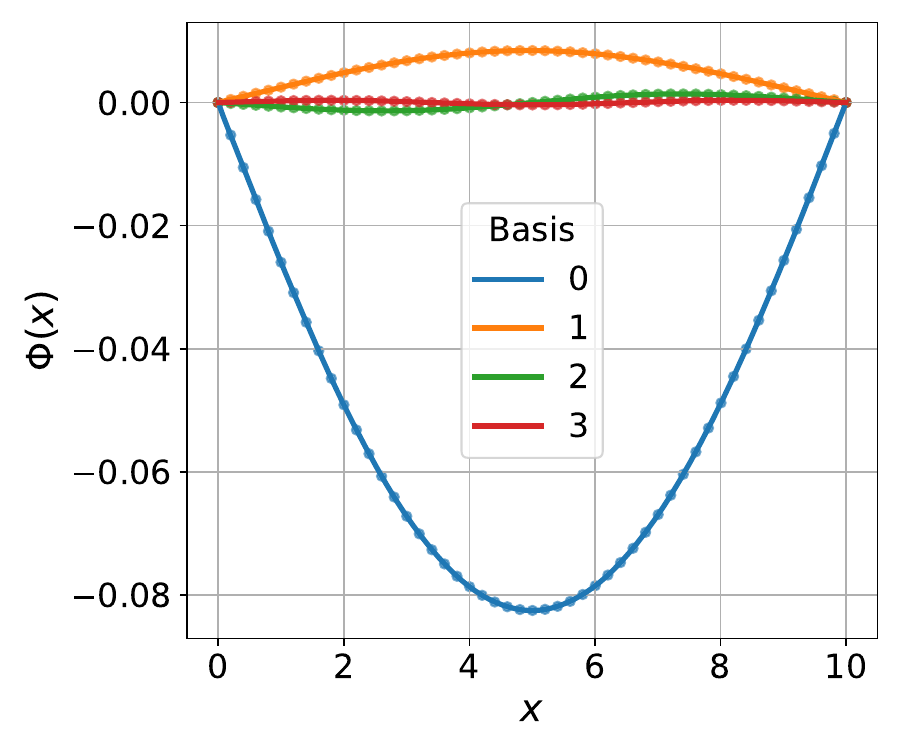}
  \caption{
  Example 3 -- Learned deterministic basis functions for the Euler-Bernoulli beam. The dots represent the basis functions learned from Algorithm~\ref{algo:SSDL-disc-cont}, while the solid lines show the corresponding neural network basis functions.
  }
  \label{fig:beam1d-det-basis-disc-algo}
\end{figure}

The direct mean and variance estimates from 
\cref{eqn:var_SSNO,eqn:mean_SSNO} are presented in \cref{fig:beamt1d-basis-mean-var-disc-algo}. These plots show good agreement between the model estimation and the ground truth obtained from Monte Carlo simulation with $10^5$ samples.

\begin{figure}[!ht]
  \centering
  \begin{subfigure}[b]{0.4\textwidth}
    \includegraphics[width=\textwidth]{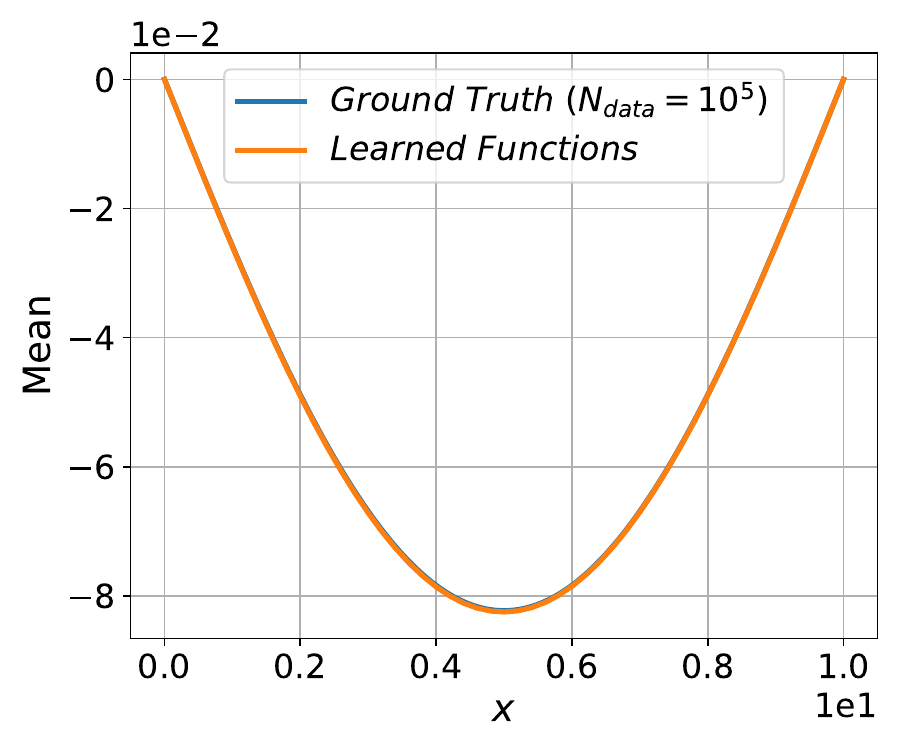}
    \caption{}
  \end{subfigure}
    \begin{subfigure}[b]{0.4\textwidth}
    \includegraphics[width=\textwidth]{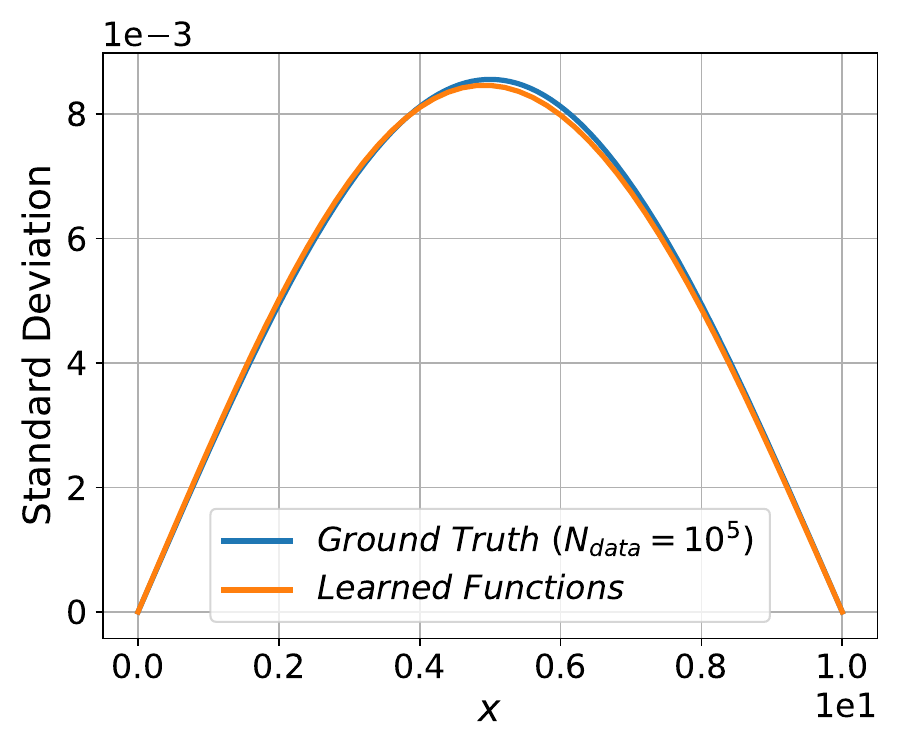}
    \caption{}
  \end{subfigure}
  \caption{
  Example 3 -- (a) Mean and (b) standard deviation field estimates obtained directly from the coefficients of the learned model. The ground truth corresponds to Monte Carlo estimation using $10^5$ realizations.
  }
  \label{fig:beamt1d-basis-mean-var-disc-algo}
\end{figure}

\subsubsection{Comparison with Polynomial Chaos Expansion}
In a non-intrusive manner, one may use classical PCE as a surrogate model in a similar way to the proposed method. For comparison, we utilize the same data used for the proposed method to train different PCEs with different number of basis functions. Following the strategy in \cite{novak2024physics, sharma2024physics}, we build the PCE models by considering the product of deterministic and random variables as a new random vector, where the components are independent and their marginals are known. Specifically, the deterministic variable is treated as a random variable with a uniform distribution, while the stochastic variables, as mentioned, follow a standard normal distribution. For this purpose, we use the PCE implemented in the UQpy package \cite{olivier2020uqpy,tsapetis2023uqpy}, where the multivariate orthogonal basis functions are constructed from the tensor product of orthogonal univariate basis functions. The coefficients are then determined via least squares regression on the training data.

The convergence of PCE with increasing number of basis functions (or equivalently, the maximum polynomial order of the univariate basis functions) is shown in \cref{fig:beamt1d-PCE-convg}. To achieve a similar training error as neural chaos, the PCE requires 5,005 terms in the spectral expansion (equal to the number of basis functions), while neural chaos uses only 4 terms in the spectral expansion (equivalent to 8 basis functions -- 4 deterministic and 4 stochastic). Note that this comparison is based on the number of spectral terms (or basis functions) needed to achieve a similar error. One might argue that it is more appropriate to count the number of unknown parameters than the number of unknown functions. In that case, the total number of parameters for the neural network basis functions may exceed the total number of PCE coefficients. 
It would be an interesting topic for future research to compare, in terms of flexibility, accuracy, and training cost, the use of a small set of highly parameterized basis functions (neural chaos) versus the use of a large set of basis functions that each have very few parameters (PCE).
\begin{figure}[!ht]
  \centering
    \includegraphics[width=.4\textwidth]{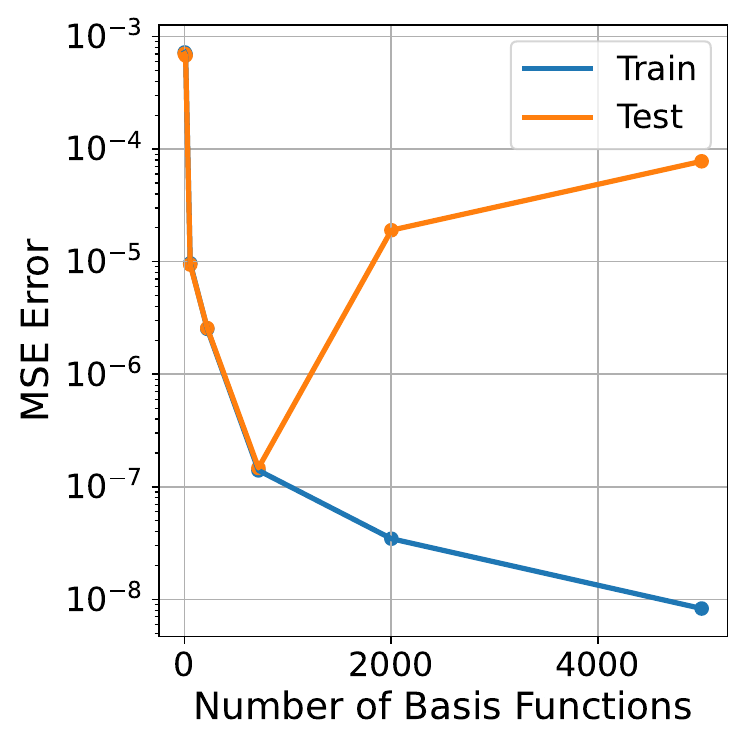}
  \caption{
    Example 3 -- Convergence of classical PCE for training and test data versus the total number of basis functions used for training the surrogate mode. The total number of multivariate PCE basis functions are 1, 10, 55, 220, 715, 2002, and 5005 when the maximum degree of the univariate basis functions are 0, 1, 2, 3, 4, 5, and 6 respectively.
  }
  \label{fig:beamt1d-PCE-convg}
\end{figure}

Apart from the number of functions or parameters, we observe an interesting difference between neural chaos and PCE in this example. PCE demonstrates perfect generalization up to 220 basis functions (polynomial order 4), but its generalization gap suddenly becomes large, and the test error increases as more basis functions are added. This is due to the well-known instability (e.g., overfiting, bad conditioned design matrix, etc.) associated with fitting high-order polynomials for regression -- as observed, for example, in Runge's phenomenon \cite{runge1901empirische}. Meanwhile, as shown in \cref{fig:beam1d-err-disc-algo}(b), the test error in neural chaos continues to decrease, and the generalization gap increases smoothly without a reduction in test accuracy.

\subsection{Example 4: Nonlinear SDE with a 13D random variable in 2D spatial domain}
\label{sec:ex-heat-2d}
The next problem extends Example \ref{sec:ex-heat1D} to a nonlinear elliptic equation in a 2D spatial domain with a higher-dimensional stochastic variable. We consider the heat conduction problem under homogeneous boundary conditions as follows:
\begin{equation}
    \nabla \cdot
    \left(
        -(1 + \frac{u^2}{2}) \nabla u
    \right)
    =
    100 \sin(5x_1) \sin(4x_2) + f(\boldsymbol{x}, \boldsymbol{\xi}), \boldsymbol{x} \in [0,1]^2,
\end{equation}
where the heat source $f(\boldsymbol{x}, \boldsymbol{\xi})$ is a Gaussian random field, $f\sim \mathcal{GP}(0, \text{Cov}(\boldsymbol{x}_i, \boldsymbol{x}_j; l_c=0.2))$, with the following covariance function:
\begin{equation}
\text{Cov}(\boldsymbol{x}_i, \boldsymbol{x}_j; l_c)
=
\exp
\left(
    -\frac{\|\boldsymbol{x}_i - \boldsymbol{x}_j \|_2^2}{2l_c^2}
\right).
\end{equation}
For illustration, two realizations of this random field are shown in \cref{fig:heat2D-soure-data}, which are discretized on a two-dimensional grid of size $20 \times 20$. From the decay of the eigenvalues of the covariance matrix from 5000 realizations of the field shown in \cref{fig:heat2D-kl-coeffs}, we choose the first 13 coefficients of the KL decomposition as the effective dimension of the input random field.

\begin{figure}[!ht]
  \centering
  \begin{subfigure}[b]{0.4\textwidth}
    \includegraphics[width=\textwidth]{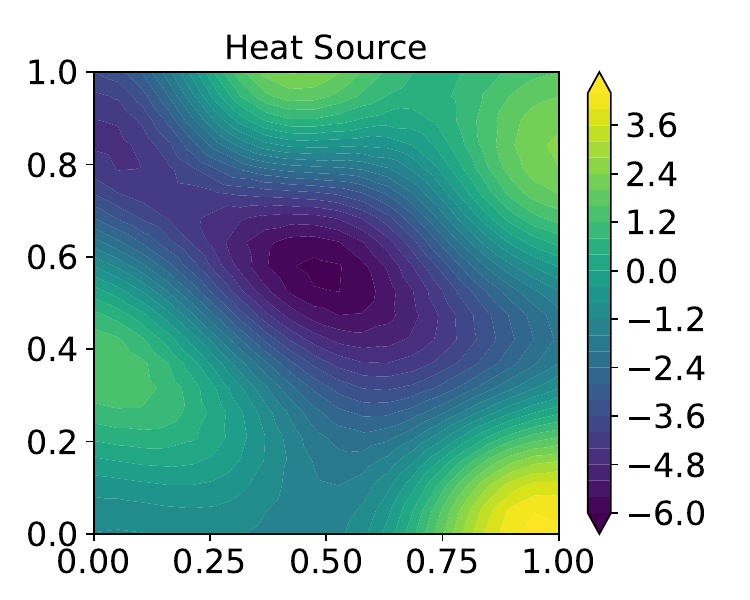}
  \end{subfigure}
    \begin{subfigure}[b]{0.4\textwidth}
    \includegraphics[width=\textwidth]{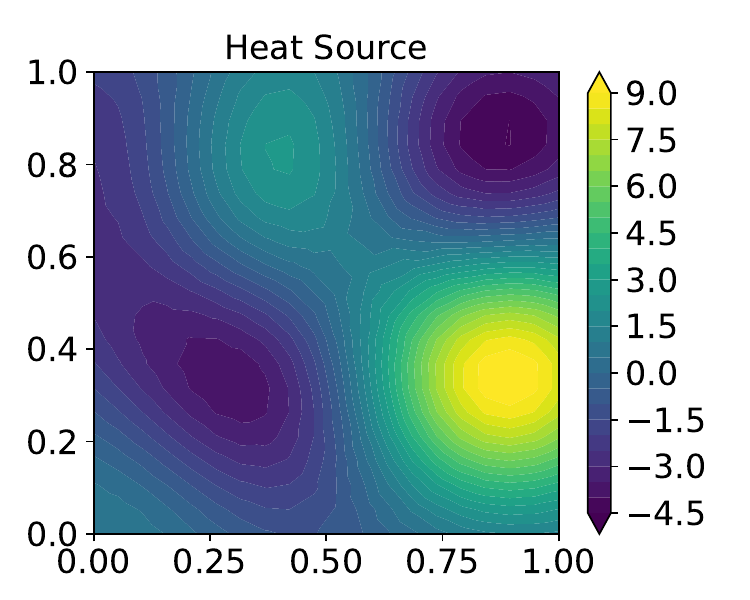}
  \end{subfigure}
  \caption{
  Example 4 -- Two random realizations of the heat source random field.
  }
  \label{fig:heat2D-soure-data}
\end{figure}

\begin{figure}[!ht]
  \centering
    \includegraphics[width=.4\textwidth]{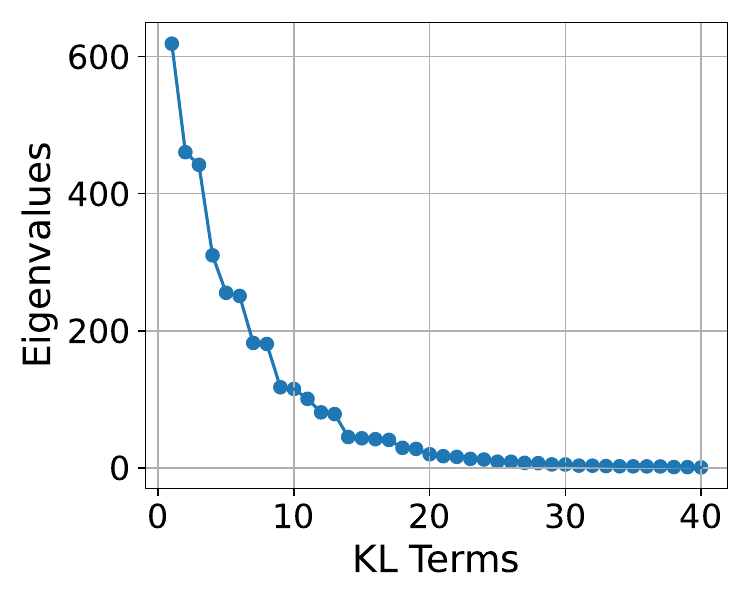}
  \caption{
  Example 4 -- Eigenvalues of the heat source random field obtained from 5000 realizations. From this decay, we choose an effective dimension of 13 random variables.
  }
  \label{fig:heat2D-kl-coeffs}
\end{figure}

The error convergence for increasing number of basis functions is shown in \cref{fig:heat2d-err-disc-algo}(a).  We observe that the rate of convergence in this high-dimensional problem is reduced compared to previous examples, requiring a larger number of basis functions, although it is still satisfactory. This could be attributed to the high dimensionality of the problem, the nonlinearity, or both. The distribution of errors between the trained model's predictions and the data for 4000 training realizations and 1000 test realizations is shown in \cref{fig:heat2d-err-disc-algo}(b). Minor overfitting is evident, as the error distribution is shifted slightly to the right compared to that of the training set, but overall the model generalizes well.
The leading learned deterministic basis functions are depicted in \cref{fig:heat2D-det-basis-disc-algo}. As we can see, in this 2D spatial domain, a significantly higher number of basis functions is required to achieve reasonable prediction accuracy compared to the previous 1D spatial domain, where fewer than four bases were sufficient.

\begin{figure}[h]
  \centering
  \begin{subfigure}[b]{0.355\textwidth}
    \includegraphics[width=\textwidth]{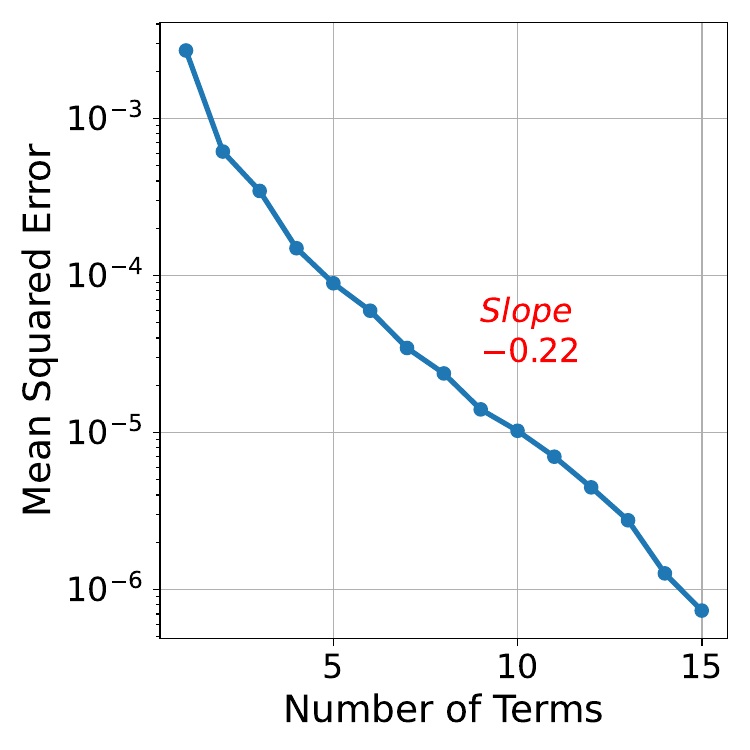}
    \caption{}
  \end{subfigure}
    \begin{subfigure}[b]{0.35\textwidth}
    \includegraphics[width=\textwidth]{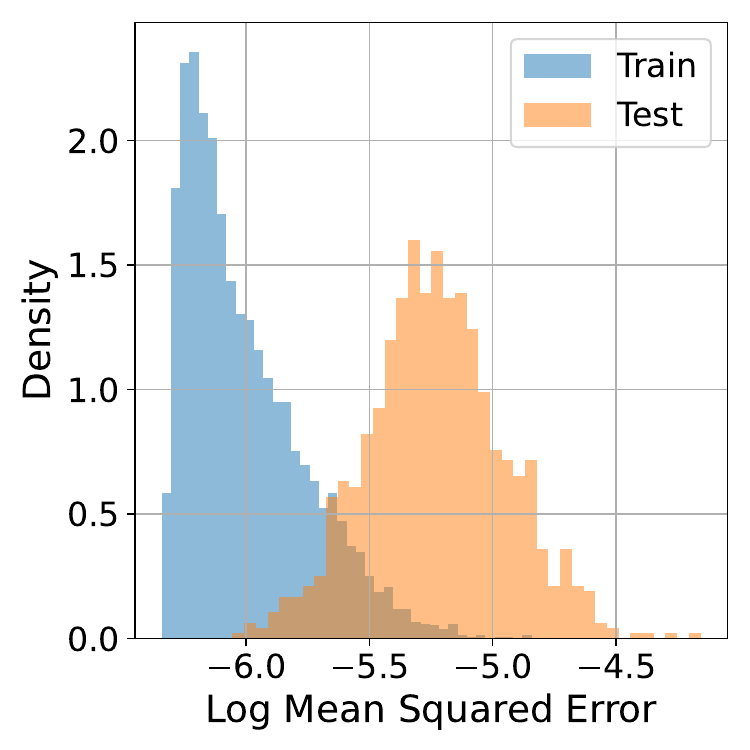}
    \caption{}
  \end{subfigure}
  \caption{
  Example 4 -- (a) Mean squared error between the data and the learned spectral expansion via Algorithm~\ref{algo:SSDL-disc-cont} for increasing number of basis functions.  The slope indicates the rate of error decay on a logarithmic scale. (b) The distribution of error between the data and the learned model. The training data includes 4000 realizations, while the test data includes 1000 realizations.
  }
  \label{fig:heat2d-err-disc-algo}
\end{figure}

\begin{figure}[!ht]
  \centering
    \includegraphics[width=.7\textwidth]{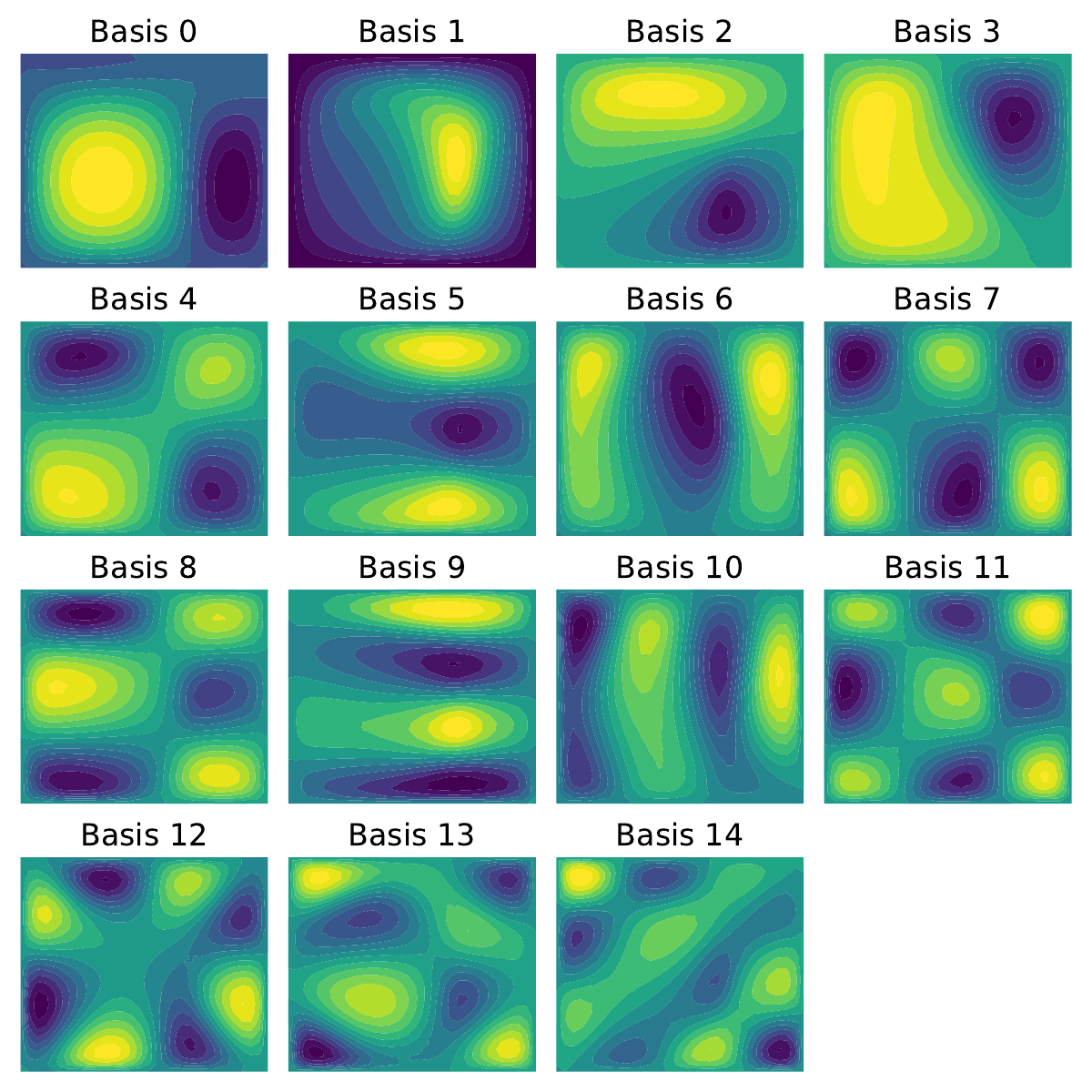}
  \caption{
  Example 4 -- Learned deterministic basis functions via Algorithm~\ref{algo:SSDL-cont}.
  }
  \label{fig:heat2D-det-basis-disc-algo}
\end{figure}

For comparison, \cref{fig:heat2D-mean-std-est} presents the mean and standard deviation fields estimated using Monte Carlo simulations with $10^5$ samples with those derived from the learned basis functions, demonstrating very good agreement.

\begin{figure}[!ht]
  \centering
  \begin{subfigure}[b]{0.6\textwidth}
    \includegraphics[width=\textwidth]{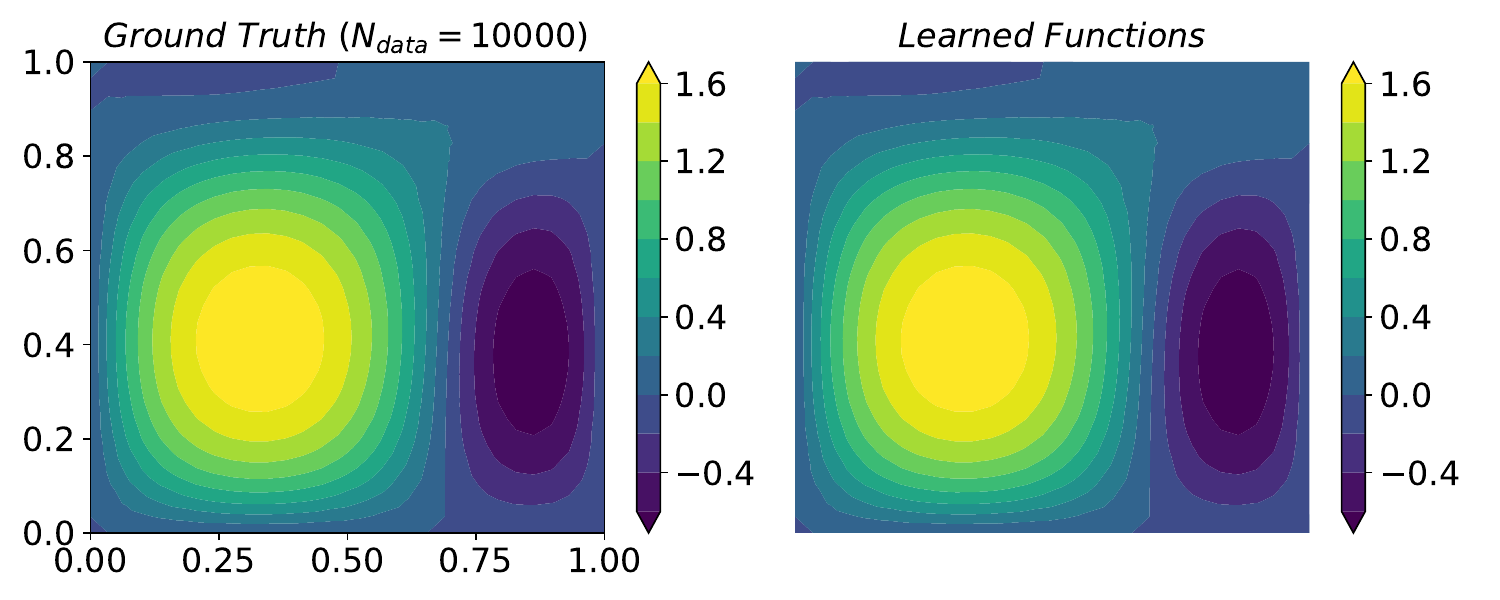}
    \caption{}
  \end{subfigure}
    \begin{subfigure}[b]{0.6\textwidth}
    \includegraphics[width=\textwidth]{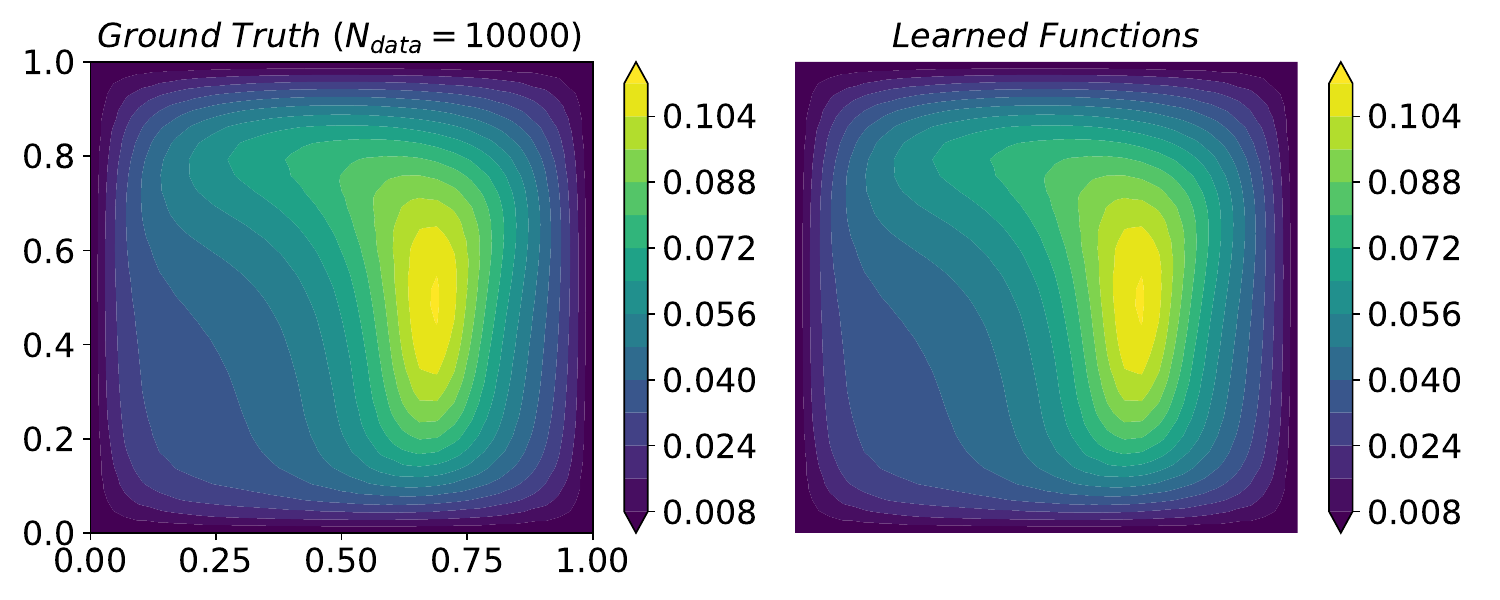}
    \caption{}
  \end{subfigure}
  \caption{
  Example 4 -- Estimated (a) Mean and (b) standard deviation fields. The ground truth plots on the left correspond to Monte Carlo estimation using $10^4$ realizations. The plots on the right compute the mean and standard deviation directly from the neural chaos expansion.}
  \label{fig:heat2D-mean-std-est}
\end{figure}

\subsection{Example 5: SDE with  dependent random variables}
\label{sec:ex-dependence}
In this problem, we demonstrate the application of the proposed method in handling problems with \textit{dependent} random variables. The problem setup is inspired by \cite{rahman2018polynomial}. We consider the following boundary value problem:
\begin{equation}
    - \frac{d}{d x}\left(\exp(\xi_1) \frac{d u}{d x}\right) = \exp(\xi_2); x \in [0, 1], (\xi_1, \xi_2) \sim p(\xi_1, \xi_2),
\end{equation}
where the Dirichlet boundary condition $u(0) = 0$ and the Neumann boundary condition $\exp(\xi_1) \frac{d u}{d x}(1) = 1$ are imposed, and the coefficients are random variables having joint probability distribution $p(\boldsymbol{\xi}) = p(\xi_1, \xi_2)$. We consider two cases: linear dependence and nonlinear dependence, which are imposed using a Gaussian copula and a Gumbel copula \citep{sklar1959fonctions,joe2014dependence,emile1960distributions}, respectively.

\subsubsection{Random Variables with Linear Dependence}
To train the model, 1000 realizations are sampled from a Gaussian distribution with mean $\boldsymbol{\mu}$ and covariance $\boldsymbol{\Sigma}$ as follows:
\begin{equation}
    (\xi_1, \xi_2) \sim \mathcal{N}
    \left(
        \boldsymbol{\mu} = 
        \begin{bmatrix}
            0 \\
            0
        \end{bmatrix},
        \boldsymbol{\Sigma} = 
        \begin{bmatrix}
            0.25^2 & -0.5 \times 0.25^2\\
            -0.5\times 0.25^2 & 0.25^2
        \end{bmatrix}
    \right).
\end{equation}
The training data consists of 70\% of the data randomly selected, while the remaining 30\% is reserved for unseen test cases. These samples are shown in \cref{fig:sharifRahman-guassian-stoch-basis-disc-algo} along with contours of the second and third stochastic basis functions learned using the proposed methods. The learned deterministic basis functions are plotted in \cref{fig:sharifRahman-guassian-det-basis-disc-algo}.
\begin{figure}[!ht]
  \centering
  \begin{subfigure}[b]{0.4\textwidth}
    \includegraphics[width=\textwidth]{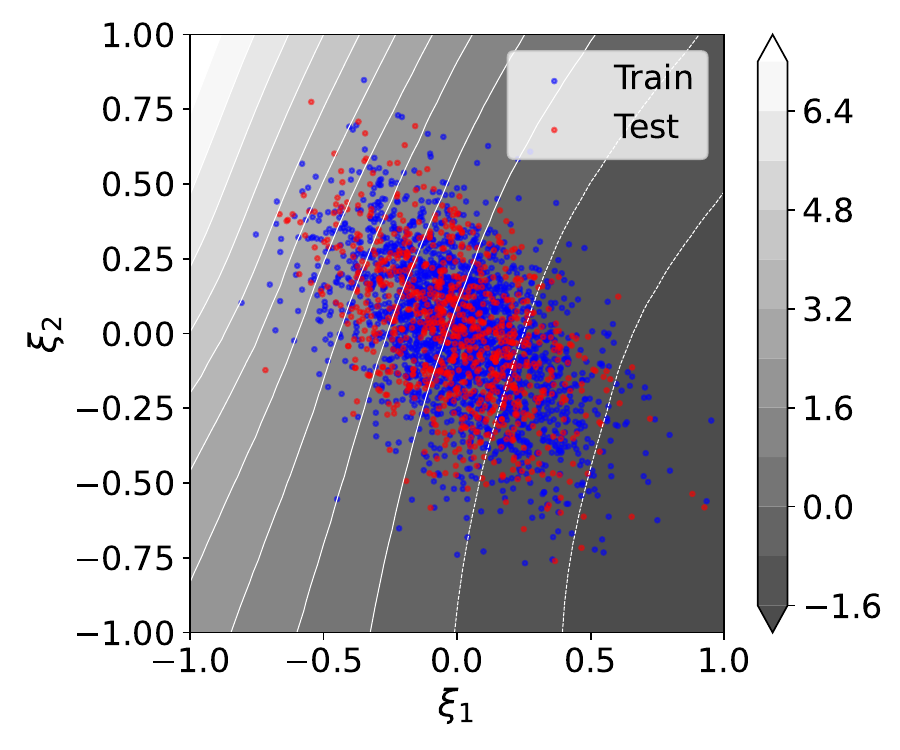}
    \caption{}
  \end{subfigure}
    \begin{subfigure}[b]{0.4\textwidth}
    \includegraphics[width=\textwidth]{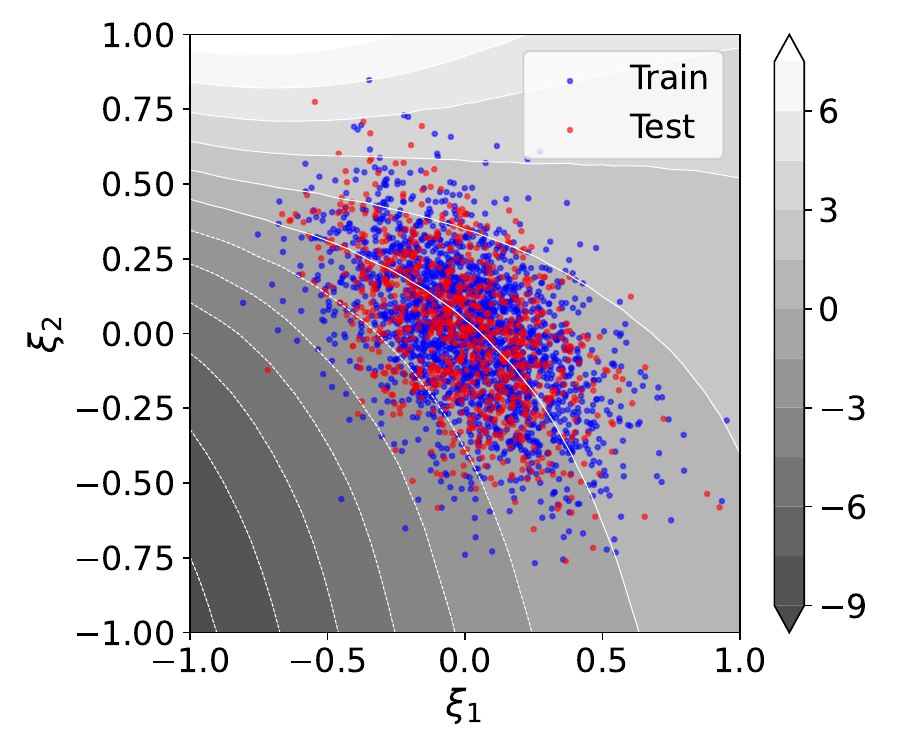}
    \caption{}
  \end{subfigure}
  \caption{
    Example 5, Linear Dependence -- Contours of the learned stochastic basis functions: (a) Second basis function, (b) Third basis function. The gray shading represents the value of the basis at each point $(\xi_1, \xi_2)$, while the blue and red dots indicate the training and test samples, respectively.
  }
  \label{fig:sharifRahman-guassian-stoch-basis-disc-algo}
\end{figure}

\begin{figure}[!ht]
  \centering
    \includegraphics[width=.4\textwidth]{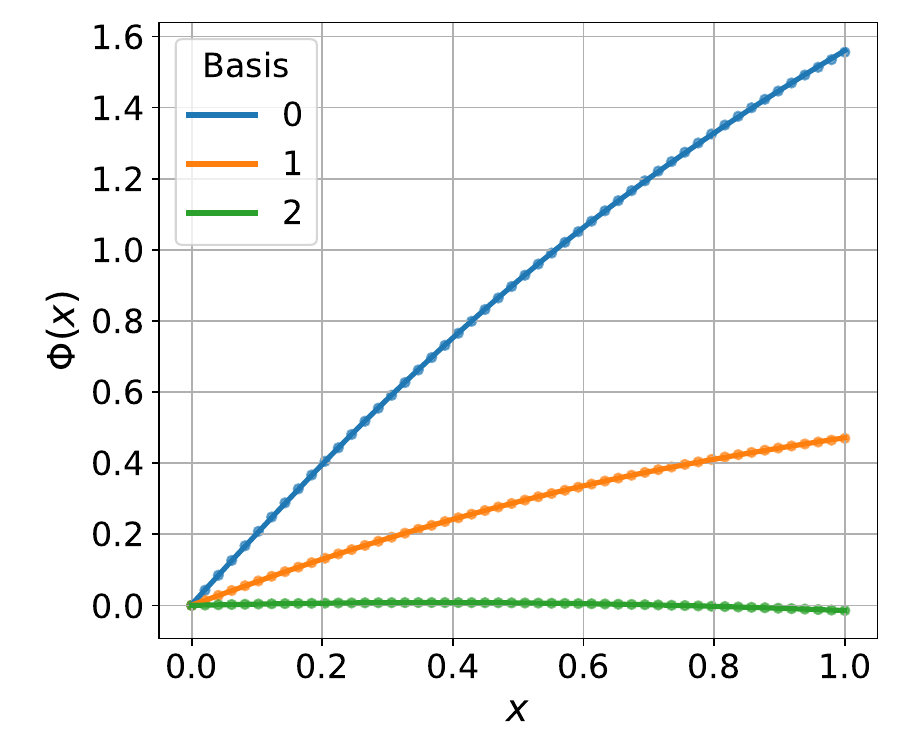}
  \caption{
    Example 5, Linear Dependence -- Learned deterministic basis functions. The dots represent the basis vectors learned from Algorithm~\ref{algo:SSDL-disc-cont}, while the solid lines correspond to their respective learned neural network basis functions.
  }
  \label{fig:sharifRahman-guassian-det-basis-disc-algo}
\end{figure}

\Cref{fig:sharifRahman-guassian-err-disc-algo} shows convergence of the error with increasing number of basis functions and the distribution of errors for the training and test sets after fitting the neural network basis functions. The discrete algorithm converges to within machine precision with only three terms, as shown in \cref{fig:sharifRahman-guassian-err-disc-algo}(a). Its continuous neural network version still performs well, with strong generalization between the training and test sets, as shown in \cref{fig:sharifRahman-guassian-err-disc-algo}(b). 
The mean and variance fields estimated directly from the spectral stochastic neural operator shown in \cref{fig:sharifRahman-guassian-mean-var-disc-algo} are in good agreement with the Monte Carlo estimates. This can be attributed to the mutual orthogonality of the basis functions with respect to the \textit{joint distribution} of the random variables.

\begin{figure}[!ht]
  \centering
  \begin{subfigure}[b]{0.4\textwidth}
    \includegraphics[width=\textwidth]{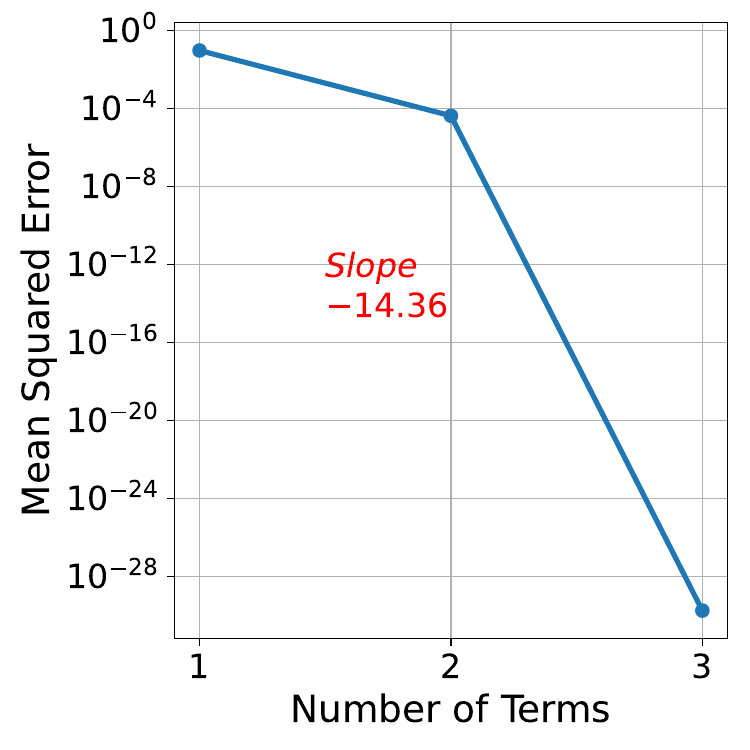}
    \caption{}
  \end{subfigure}
    \begin{subfigure}[b]{0.4\textwidth}
    \includegraphics[width=\textwidth]{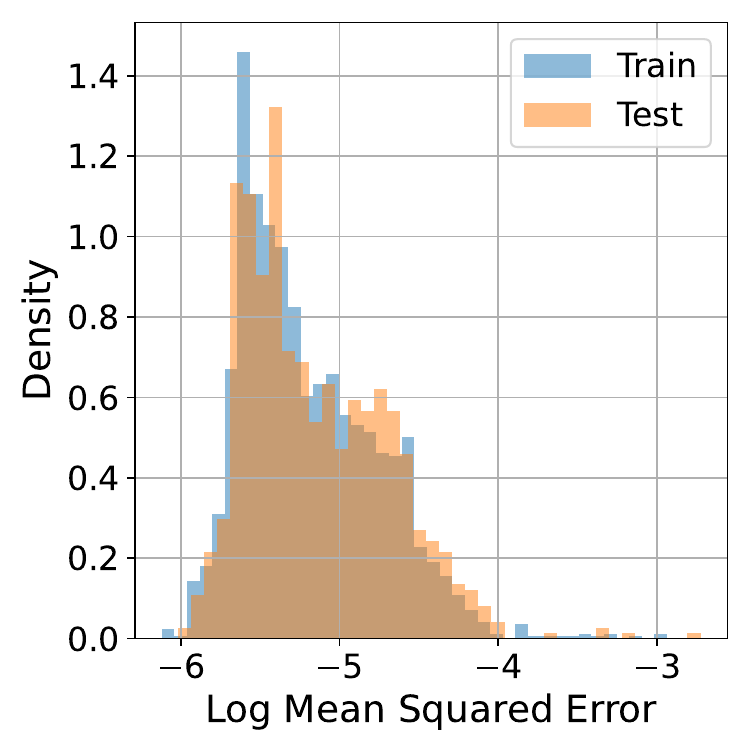}
    \caption{}
  \end{subfigure}
  \caption{
  Example 5, Linear Dependence -- (a) Mean squared error between the data and the learned spectral expansion via Algorithm~\ref{algo:SSDL-disc-cont} for increasing number of basis functions. The slope indicates the rate of error decay on a logarithmic scale. (b) The distribution of error between the data and model after training the neural network basis functions.
  }
  \label{fig:sharifRahman-guassian-err-disc-algo}
\end{figure}

\begin{figure}[!ht]
  \centering
  \begin{subfigure}[b]{0.4\textwidth}
    \includegraphics[width=\textwidth]{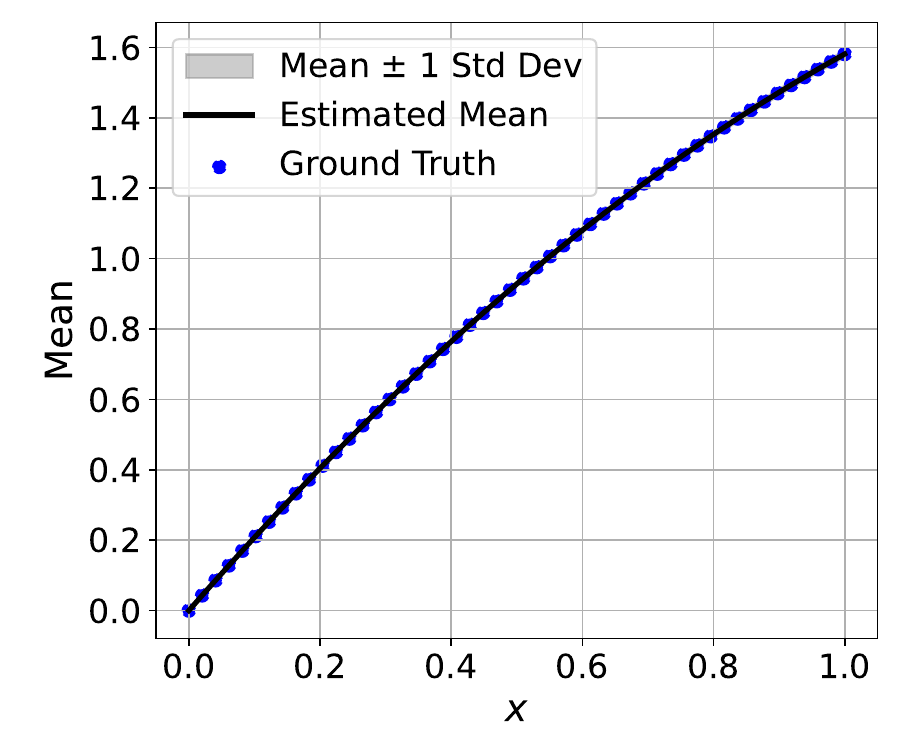}
    \caption{}
  \end{subfigure}
    \begin{subfigure}[b]{0.4\textwidth}
    \includegraphics[width=\textwidth]{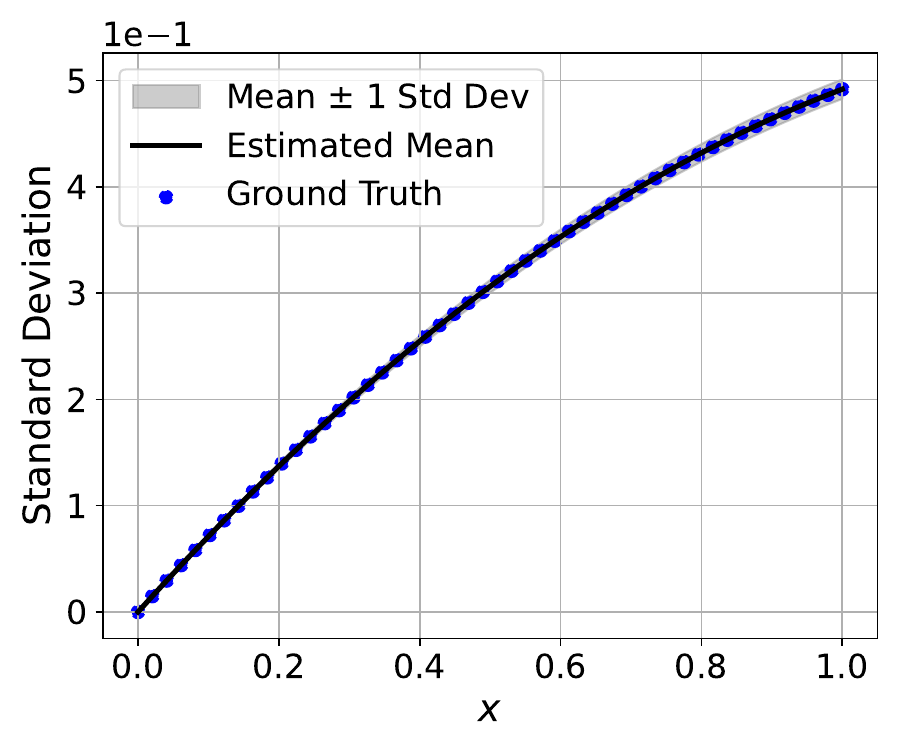}
    \caption{}
  \end{subfigure}
  \caption{
  Example 5, Linear Dependence -- (a) Mean and (b) standard deviation field estimates derived directly from the proposed spectral stochastic expansion. The plots show the mean $\pm$ one standard deviation for 100 independent training trials. The ground truth is estimated from Monte Carlo simulation using $10^5$ realizations.
  }
  \label{fig:sharifRahman-guassian-mean-var-disc-algo}
\end{figure}

\subsubsection{Random Variables with Nonlinear Dependence}
In this example, we assume that the marginal distribution of the $\xi_1$ follows a Gamma distribution (with shape and scale parameters 2 and 1, respectively) and $\xi_2$ follows a standard normal distribution, i.e., $p(\xi_1)\sim \text{Gamma}(2, 1), p(\xi_2)\sim \mathcal{N}(0, 1)$. 
Following Sklar's theorem \citep{sklar1959fonctions,joe2014dependence}, the joint cumulative distribution function (CDF) of $\xi_1$ and $\xi_2$ is expressed in terms of the Gumbel copula $C_{\tau} (u, v)$
and their marginal distributions as follows:
\begin{equation}
    F_{\xi_1, \xi_2}(y_1, y_2)
    =
    C_r(F_{\xi_1}(y_1), F_{\xi_2}(y_2))
    F_{\xi_1}(y_1)
    F_{\xi_2}(y_2);
    \quad y_1, y_2 \in \mathbb{R},
\end{equation}
where $F$ denotes the CDF, and $C_r$ represents the Gumbel copula function \cite{emile1960distributions}, defined as:
\begin{equation}
C_{\tau} (u, v)
=
\exp
\left[
-(
(-\log(u))^\tau
)
+
(-\log(v))^\tau
)
^{\frac{1}{\tau}}
\right],
\end{equation}
with parameter $\tau=2$. Note that traditional PCE with tensor product basis polynomials \textit{cannot be used} to model these nonlinearly dependent random variables. Again, 1,000 samples of this random variable are drawn and 70\% are used for training and 30\% for test data. These samples are shown along with contours of the second and third learned stochastic basis functions in \cref{fig:sharifRahman-gumble-stoch-basis-disc-algo}, with blue and red points corresponding to the training and test data, respectively. Notice the distinct difference in the form of the stochastic basis functions compared to the linear case in \cref{fig:sharifRahman-guassian-stoch-basis-disc-algo}. The learned deterministic basis functions are plotted in \cref{fig:sharifRahman-gumble-det-basis-disc-algo}. Again, we notice differences from those with linear dependence in \cref{fig:sharifRahman-guassian-det-basis-disc-algo}.
\begin{figure}[!ht]
  \centering
  \begin{subfigure}[b]{0.4\textwidth}
    \includegraphics[width=\textwidth]{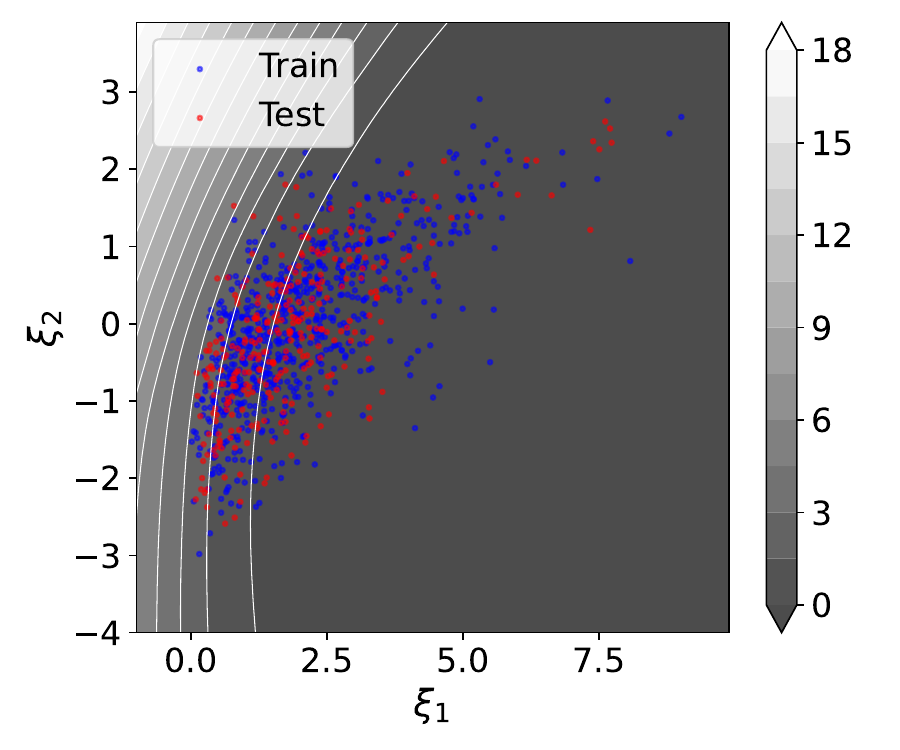}
    \caption{}
  \end{subfigure}
    \begin{subfigure}[b]{0.4\textwidth}
    \includegraphics[width=\textwidth]{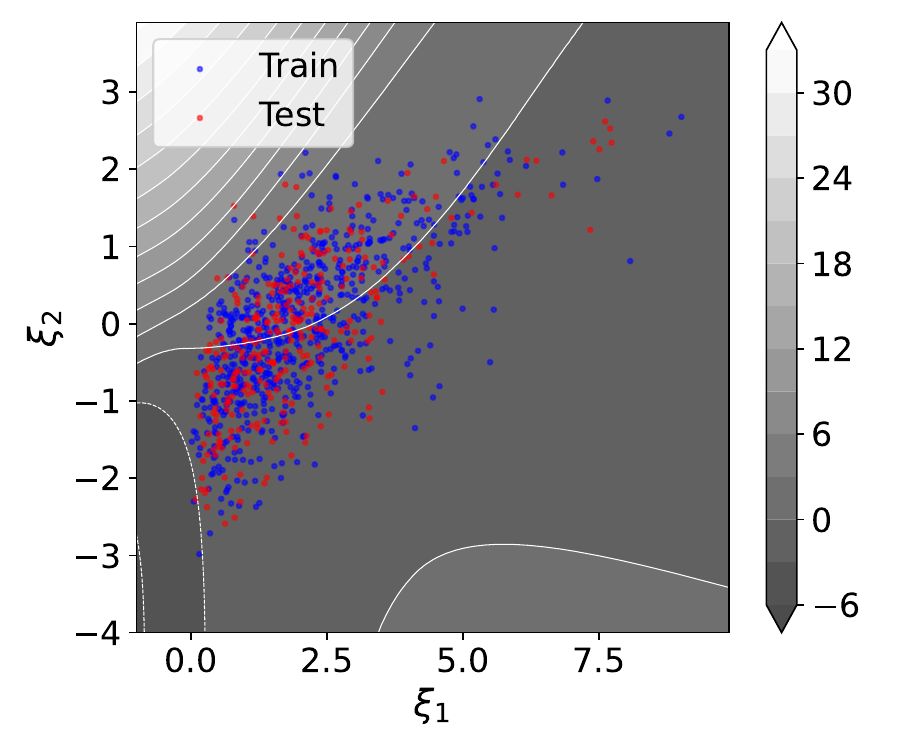}
    \caption{}
  \end{subfigure}
  \caption{
    Example 5, Nonlinear Dependence -- Contours of the learned stochastic basis functions: (a) Second basis function, (b) Third basis function. The gray shading represents the value of the basis at each point $(\xi_1, \xi_2)$, while the blue and red dots indicate the training and test samples, respectively.
  }
  \label{fig:sharifRahman-gumble-stoch-basis-disc-algo}
\end{figure}

\begin{figure}[!ht]
  \centering
    \includegraphics[width=.4\textwidth]{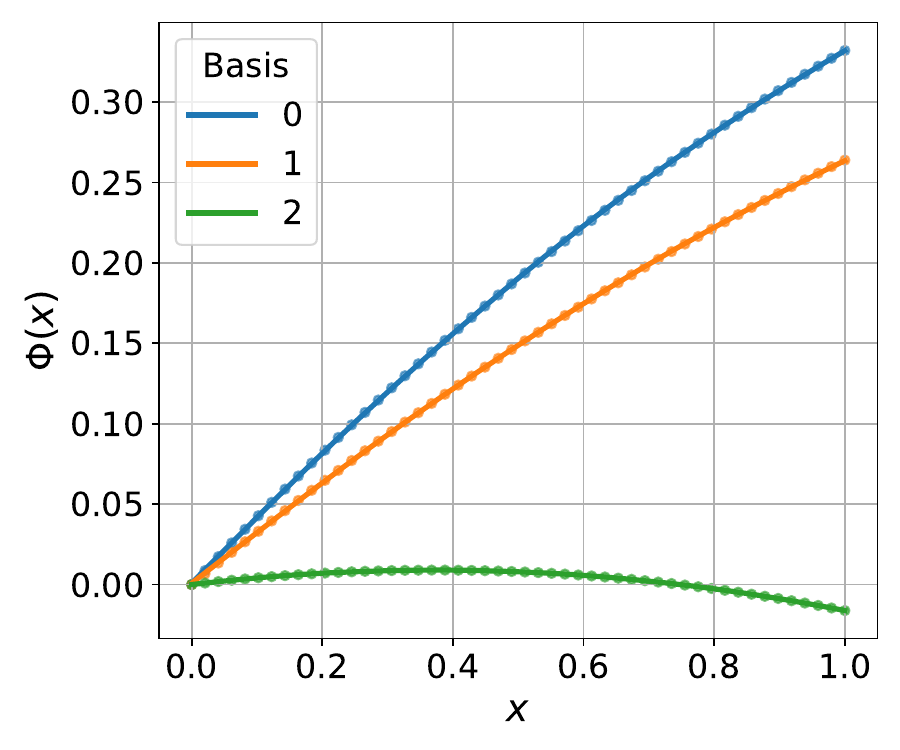}
  \caption{
    Example 5, Nonlinear Dependence -- Learned deterministic basis functions. The dots represent the basis vectors learned from Algorithm~\ref{algo:SSDL-disc-cont}, while the solid lines correspond to their respective learned neural network basis functions.
  }
  \label{fig:sharifRahman-gumble-det-basis-disc-algo}
\end{figure}

\cref{fig:sharifRahman-gumble-err-disc-algo} shows convergence of the error with increasing number of fitted basis functions and the distribution of errors for the training and test sets after fitting the neural network basis functions. Again, the discrete algorithm converges to within machine precision with only three terms, as shown in \cref{fig:sharifRahman-gumble-err-disc-algo}(a). Its continuous neural network version still performs well, with strong generalization between the training and test sets, as shown in \cref{fig:sharifRahman-gumble-err-disc-algo}(b). 
The mean and variance fields estimated directly from the spectral stochastic neural operator shown in \cref{fig:sharifRahman-gumble-mean-var-disc-algo} are in good agreement with the Monte Carlo estimates. This can be attributed to the orthogonality of the basis functions with respect to the \textit{joint distribution} of the random variables.

\begin{figure}[!ht]
  \centering
  \begin{subfigure}[b]{0.4\textwidth}
    \includegraphics[width=\textwidth]{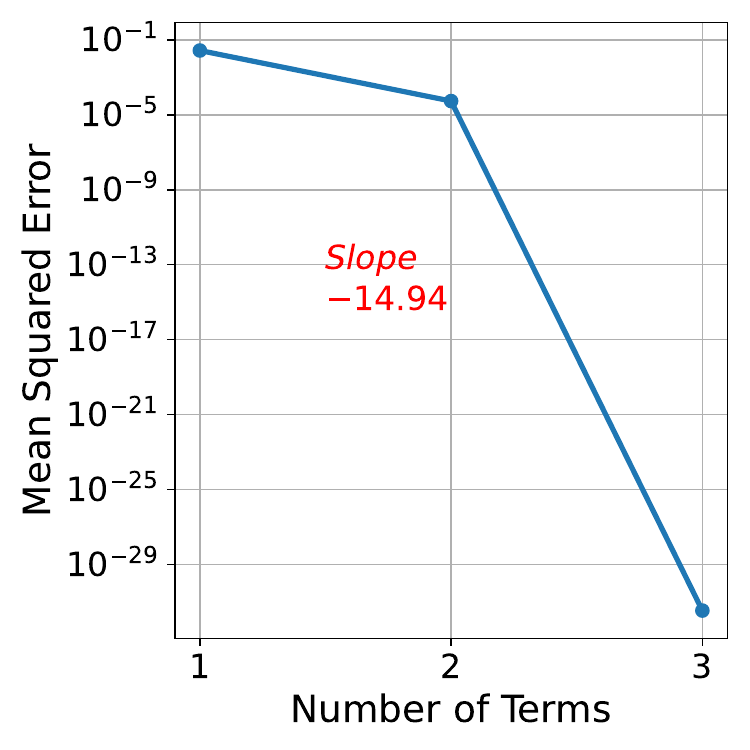}
    \caption{}
  \end{subfigure}
    \begin{subfigure}[b]{0.4\textwidth}
    \includegraphics[width=\textwidth]{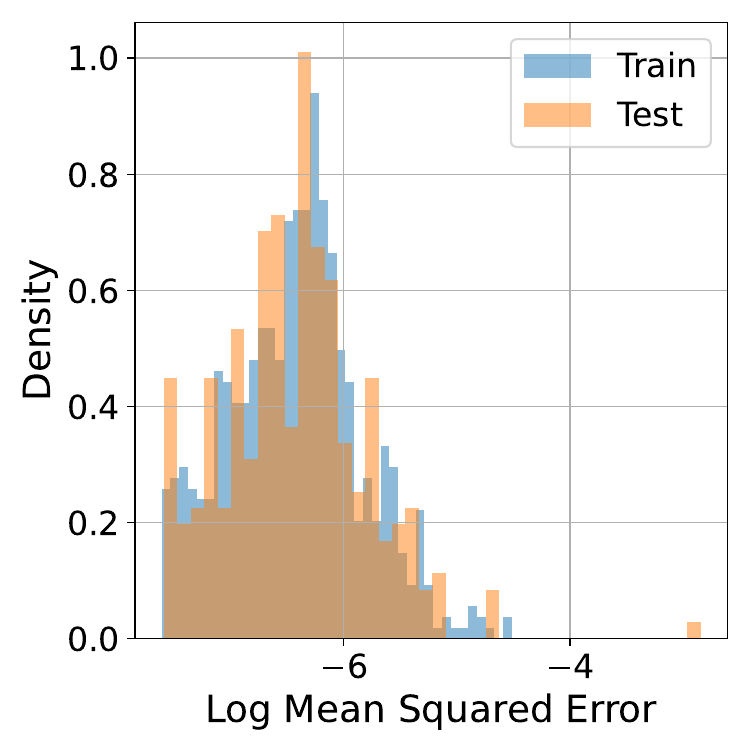}
    \caption{}
  \end{subfigure}
  \caption{
  Example 5, Nonlinear Dependence -- (a) Mean squared error between the data and the learned spectral expansion via Algorithm~\ref{algo:SSDL-disc-cont} for increasing number of basis functions. The slope indicates the rate of error decay on a logarithmic scale. (b) The distribution of error between the data and model after training the neural network basis functions.
  }
  \label{fig:sharifRahman-gumble-err-disc-algo}
\end{figure}


\begin{figure}[!ht]
  \centering
  \begin{subfigure}[b]{0.4\textwidth}
    \includegraphics[width=\textwidth]{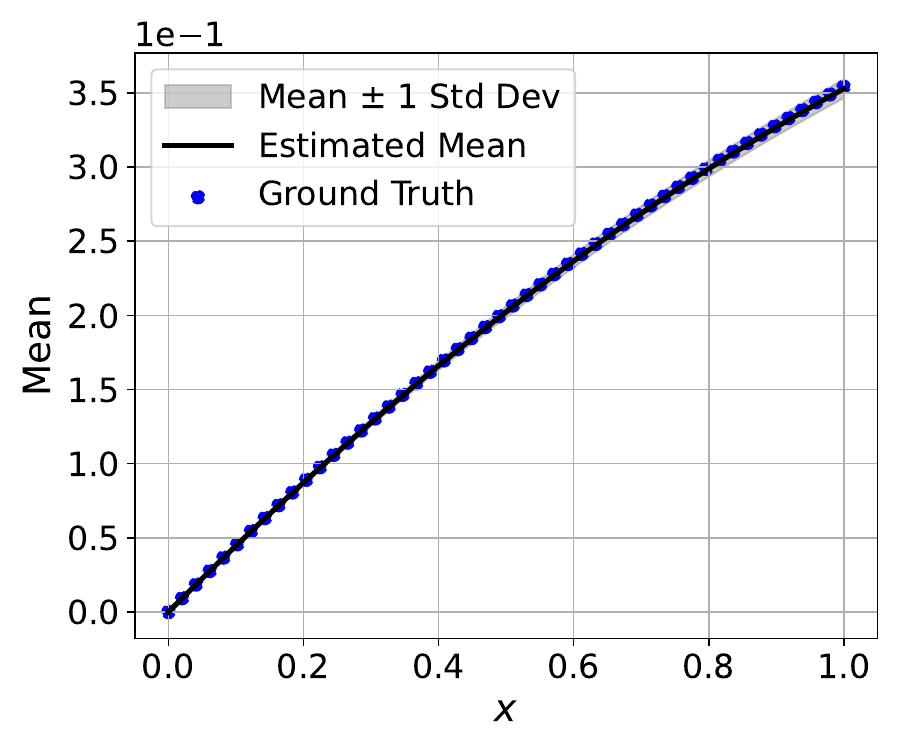}
    \caption{}
  \end{subfigure}
    \begin{subfigure}[b]{0.4\textwidth}
    \includegraphics[width=\textwidth]{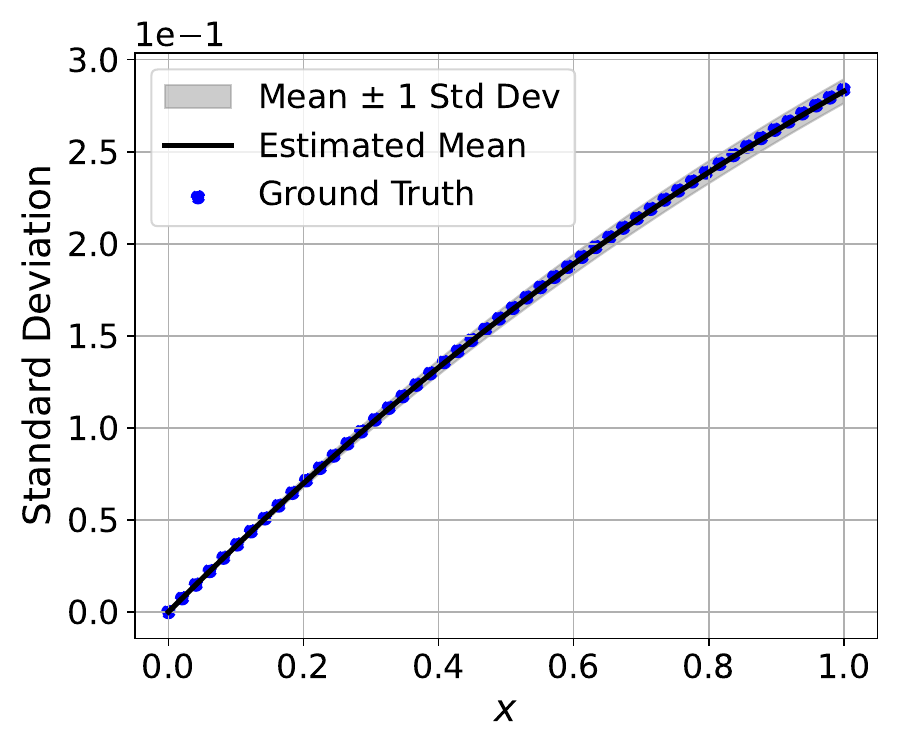}
    \caption{}
  \end{subfigure}
  \caption{
  Example 5, Nonlinear Dependence -- (a) Mean and (b) variance field estimates derived directly from the proposed spectral stochastic expansion. The plots show the mean $\pm$ one standard deviation for 100 independent training trials. The ground truth is estimated from Monte Carlo simulation using $10^5$ realizations.
  }
  \label{fig:sharifRahman-gumble-mean-var-disc-algo}
\end{figure}

\section{Conclusions}
\label{sec:conclusion}
In this work, we propose Neural Chaos, inspired by PCE, as a non-intrusive surrogate modeling approach for stochastic operator learning. To achieve the highest flexibility, instead of using pre-defined basis functions, Neural Chaos parametrizes the basis functions via neural networks and finds them. We demonstrate that this flexibility enables a significantly more compact representation than PCE for stochastic problems. In the present applications, more than 5000 PCE basis functions fail to match the representation power of Neural Chaos with fewer than 20 neural basis functions.

We propose two algorithms to adaptively learn the orthogonal basis functions with respect to the joint probability distribution of the random variables governing the stochastic processes. The optimization challenges of the fully continuous algorithm are discussed and numerically examined, while the discrete-continuous algorithm has empirically demonstrated greater flexibility, robustness, and a significantly better convergence rate.

Neural Chaos does not require any \textit{a priori} assumptions about the joint or marginal distributions of the random variables, making it directly applicable to problems with dependent random variables without modification. Furthermore, it constructs stochastic basis functions \textit{directly based on the joint distribution of the random variables}, eliminating the need for a tensor product structure or the assumption of marginal independence. In building neural chaos, we have established a connection between operator learning paradigms and spectral expansion theory, tying the two together in a theoretically grounded formulation. Neural Chaos introduces new avenues for studying and simulating stochastic operations with significantly greater flexibility compared to classical PCE, while maintaining a remarkably simple implementation.

As discussed, Nueral Chaos can directly operate on dependent random variables; however, it does not reveal which subset of these variables are the most significant in governing the system's stochasticity or, more intriguingly, whether a lower-dimensional transformation of these variables might account for most of the system's stochasticity. Addressing these questions could lead to more efficient approaches for uncertainty quantification, reducing both data requirements and computational resources. One potential method to explore these questions is the use of conformal autoencoders \cite{kevrekidis2024conformal,bahmani2023distance,peterfreund2020local}, which disentangle data representations in a lower-dimensional space. We leave this direction for future research.

\section*{Acknowledgments}

This material is based upon work supported by the U.S. Department of Energy, Office of Science, Office of Advanced Scientific Computing Research, under Award Number DE-SC0024162.

\bibliographystyle{plain}
\bibliography{bibliography}  

\section*{Appendix}
Supplementary results are presented in Appendix \ref{appx:ablation1}, and the hyperparameters of the neural networks used are detailed in Appendix \ref{appx:hyper-params}.

\appendix
\section{Ablation Study}
\label{appx:ablation1}
In this section, we aim to study the effect of predefined basis functions compared to the learnable ones proposed here, as well as the data dependency of these learnable basis functions, within the problem setup discussed in \ref{sec:ex-xui-1d}.
\subsection{Pre-defined Deterministic Basis Functions}
Although the convergence of spectral expansion (Eq.~\ref{eq:spec-expan}) is guaranteed in the limit of an infinite series, regardless of the chosen hypothesis class for stochastic and deterministic functions, the rate of convergence—specifically, the number of terms required to reach a target approximation error—is influenced by the selected hypothesis classes. To study the impact of fixing the hypothesis class a priori, we set the deterministic functions as orthogonal cosine basis functions at different frequencies, optimizing only the stochastic basis functions in Algorithm~\ref{algo:SSDL-disc-cont}. The frequencies of the basis functions are integer values, starting at 1 and incrementing by 1 with each additional term. Since the deterministic basis is predefined, a single iteration in Algorithm~\ref{algo:MulDecomp} suffices to obtain the corresponding stochastic basis function.

We use the data generated from the Gamma distribution in Section \ref{sec:ex-xui-1d}. The error distribution of the proposed algorithm, based on the number of terms, is shown in \cref{fig:1d-fix-det-basis}(a). As observed, the rate of error reduction decreases significantly compared to \cref{fig:ode1d-convg-disc-algo}(c). Moreover, many more terms may be required to achieve the same accuracy obtained with the learnable deterministic basis functions. Note that we could not exceed 6 basis functions, as cosine bases with higher frequencies require finer spatial domain discretization to avoid aliasing, in accordance with the Nyquist–Shannon sampling theorem. This indicates that, regardless of the flexibility of the stochastic basis functions, a biased deterministic basis may lead to poor approximation and require an unnecessarily large number of basis functions. For completeness and comparison, the variance estimated using Sobol's method is provided in \cref{fig:1d-fix-det-basis}(b), which shows a significant discrepancy compared to the ground truth.

\begin{figure}[!ht]
  \centering
  \begin{subfigure}[b]{0.4\textwidth}
    \includegraphics[width=\textwidth]{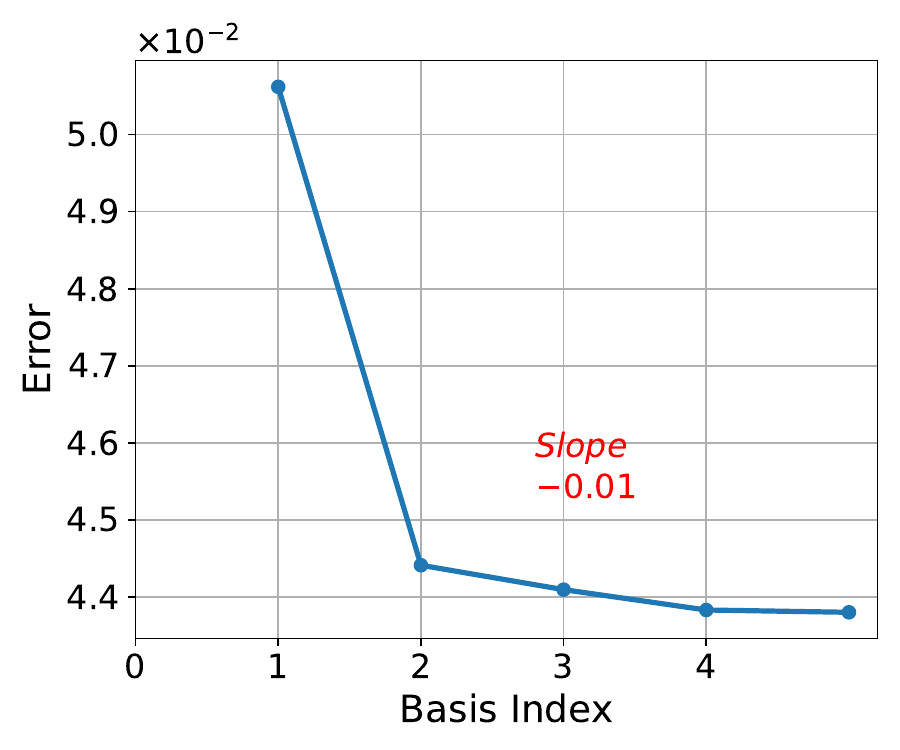}
    \caption{}
  \end{subfigure}
    \begin{subfigure}[b]{0.4\textwidth}
    \includegraphics[width=\textwidth]{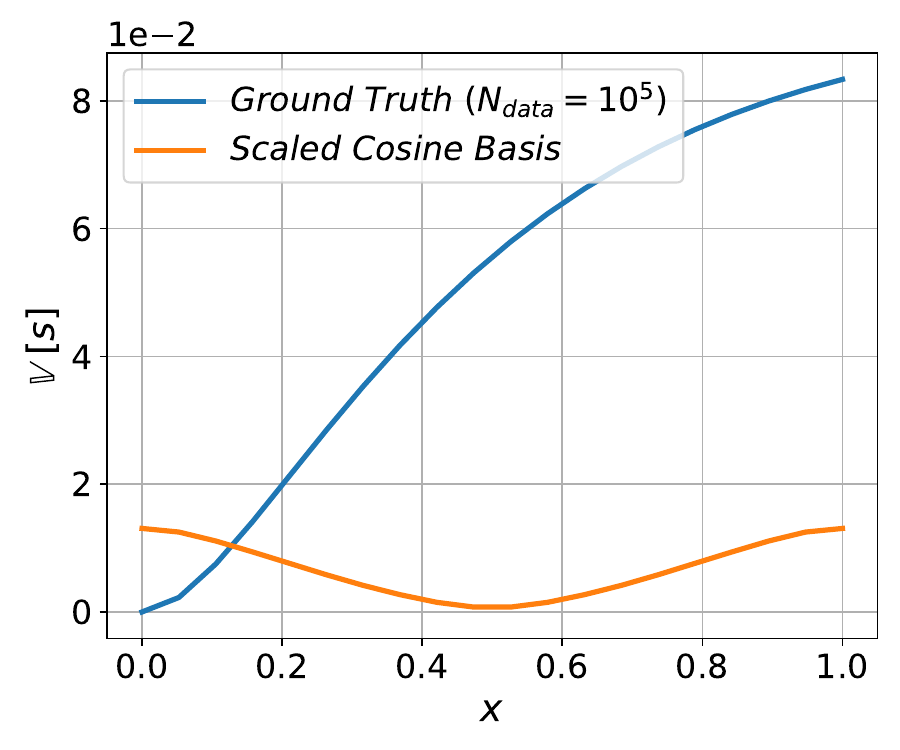}
    \caption{}
  \end{subfigure}
  \caption{
  (a) Mean squared error between the data and the learned spectral expansion via Algorithm~\ref{algo:SSDL-disc-cont}
  }
  \label{fig:1d-fix-det-basis}
\end{figure}

\subsection{Dependency on Data Size}
Due to the non-intrusive nature of the proposed approach, the quality of the approximation depends on the data's quality and quantity. To assess this, we compare the learned stochastic basis functions and the variance estimations obtained via these basis functions when the number of realizations used for training is 50 and 700, as shown in \cref{fig:1d-datasize-comp-stoch-basis,fig:1d-datasize-comp-var}. Interestingly, the overall shapes of the basis functions are similar, though they appear as stretched and scaled versions of each other. This suggests that, while the basis functions are data-dependent, their dependence may be mild. As expected, the amount of data affects the accuracy of the surrogate model and, consequently, the learned basis functions, which directly impact the variance estimation. Nonetheless, a reasonable approximation of the variance is achieved even with a model trained on only 50 realizations.

\begin{figure}[!ht]
  \centering
  \begin{subfigure}[b]{0.4\textwidth}
    \includegraphics[width=\textwidth]{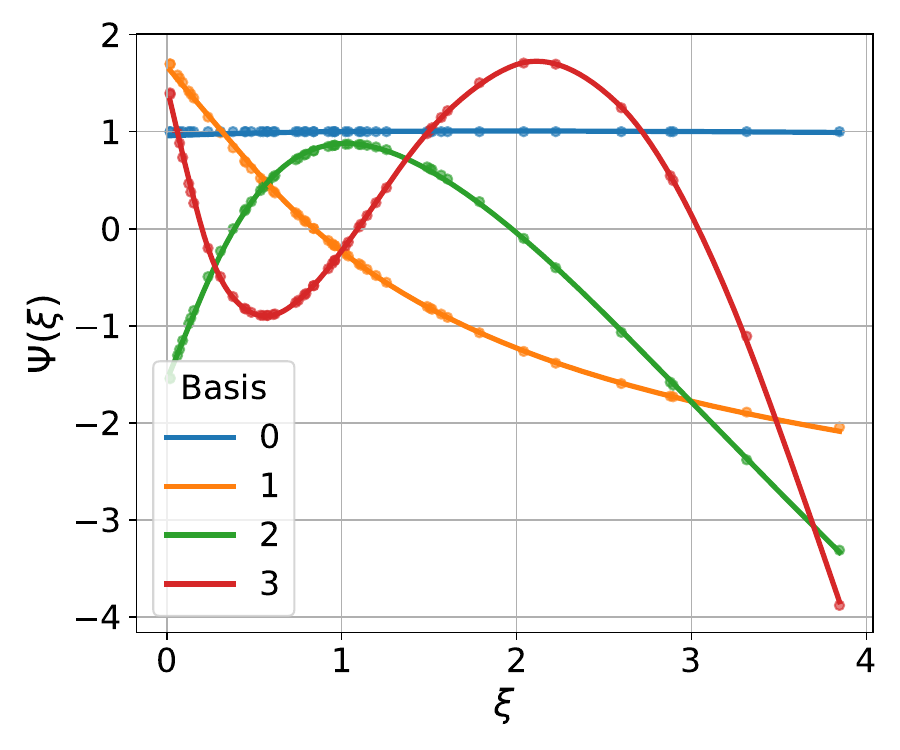}
    \caption{}
  \end{subfigure}
    \begin{subfigure}[b]{0.4\textwidth}
    \includegraphics[width=\textwidth]{figs/pr1_stoch_basis_gamma.pdf}
    \caption{}
  \end{subfigure}
  \caption{
  Learned stochastic basis functions with training data sizes of (a) 50 and (b) 700.
  }
  \label{fig:1d-datasize-comp-stoch-basis}
\end{figure}

\begin{figure}[!ht]
  \centering
  \begin{subfigure}[b]{0.4\textwidth}
    \includegraphics[width=\textwidth]{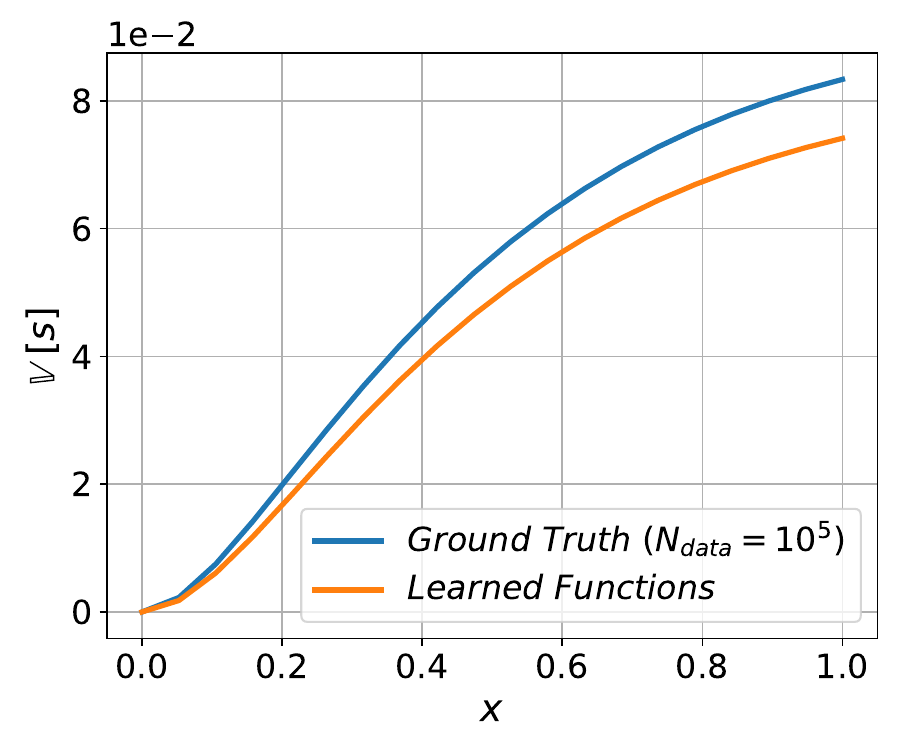}
    \caption{}
  \end{subfigure}
    \begin{subfigure}[b]{0.4\textwidth}
    \includegraphics[width=\textwidth]{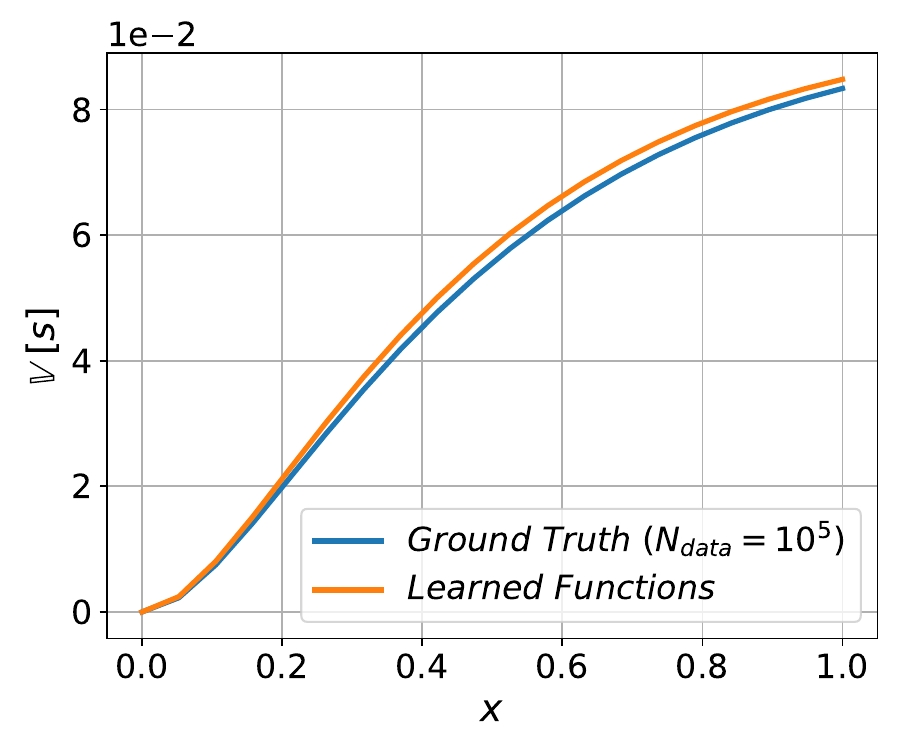}
    \caption{}
  \end{subfigure}
  \caption{
  Variance field estimations using Sobol's method with basis functions learned from (a) 50 and (b) 700 training realizations. The ground truth is based on Monte Carlo estimation with $10^5$ realizations.
  }
  \label{fig:1d-datasize-comp-var}
\end{figure}

\section{Neural Network Hyperparameters}
\label{appx:hyper-params}
In all the numerical examples, the ADAM optimizer \cite{diederik2014adam} from the \texttt{PyTorch} package \cite{paszke2019pytorch} is used.

Section \ref{sec:ex-xui-1d}: Both stochastic and deterministic basis functions are MLPs with two hidden layers of size 20, using \texttt{ELU} activation functions \cite{clevert2015fast}. The learning rate is set to $5 \times 10^{-4}$.

Section \ref{sec:ex-heat1D}: Both stochastic and deterministic basis functions are MLPs with two hidden layers of size 20, using \texttt{ELU} activation functions. The learning rate is set to $5 \times 10^{-4}$.

Section \ref{sec:ex-beam}: Both stochastic and deterministic basis functions are SIRENs \cite{sitzmann2020implicit,bahmani2024resolution} with two hidden layers of size 50 and a frequency parameter $10$. The learning rate is set to $5 \times 10^{-4}$.

Section \ref{sec:ex-heat-2d}: The stochastic basis functions are MLPs with three hidden layers of 200 units each, using \texttt{ReLU} activation functions. The deterministic basis functions are SIREN networks with three hidden layers of 100 units each and a frequency parameter of 15. The learning rate is set to $10^{-3}$.

Section \ref{sec:ex-dependence}: Both stochastic and deterministic basis functions are MLPs with two hidden layers of size 20 and activation function \texttt{ELU}. The learning rate is set to $10^{-3}$.

\end{document}